\documentclass[a4paper,UKenglish,cleveref, autoref, thm-restate]{lipics-v2021}

\hideLIPIcs  

\pdfoutput=1
\usepackage[utf8]{inputenc}
\usepackage{amsmath}
\usepackage{hyperref}
\usepackage[numbers,sort&compress]{natbib}
\usepackage{stmaryrd} 
\usepackage{mathtools}
\usepackage{subcaption}
\usepackage{cleveref}
\usepackage{aligned-overset}
\usepackage{stackengine}

\usepackage{tikzit}
\input{zx.tikzdefs}

\tikzstyle{box}=[shape=rectangle, text height=1.5ex, text depth=0.25ex, yshift=0.5mm, fill=white, draw=black, minimum height=5mm, yshift=-0.5mm, minimum width=5mm, font={\small}]
\tikzstyle{Z dot}=[inner sep=0mm, minimum size=2mm, shape=circle, draw=black, fill={rgb,255: red,221; green,255; blue,221}]
\tikzstyle{Z phase dot}=[minimum size=1.2em, font={\footnotesize\boldmath}, shape=rectangle, rounded corners=0.5em, inner sep=0.2em, outer sep=-0.2em, scale=0.8, tikzit shape=circle, draw=black, fill={rgb,255: red,221; green,255; blue,221}, tikzit draw=blue]
\tikzstyle{X dot}=[Z dot, shape=circle, draw=black, fill={rgb,255: red,255; green,136; blue,136}]
\tikzstyle{X phase dot}=[Z phase dot, tikzit shape=circle, tikzit draw=blue, fill={rgb,255: red,255; green,136; blue,136}, font={\footnotesize\boldmath}]
\tikzstyle{hadamard}=[fill=yellow, draw=black, shape=rectangle, inner sep=0.6mm, minimum height=1.5mm, minimum width=1.5mm]
\tikzstyle{vertex}=[inner sep=0mm, minimum size=1mm, shape=circle, draw=black, fill=black]
\tikzstyle{vertex set}=[inner sep=0mm, minimum size=1mm, shape=circle, draw=black, fill=white, font={\footnotesize\boldmath}]
\tikzstyle{target}=[inner sep=0mm, minimum size=3mm, shape=circle, draw=black]
\tikzstyle{white dot}=[inner sep=0mm, minimum size=2mm, shape=circle, draw=black, fill=white]
\tikzstyle{bn}=[inner sep=0mm, minimum size=2mm, shape=circle, draw=black, fill=black]
\tikzstyle{black dot}=[bn, tikzit fill=black, tikzit draw=black]
\tikzstyle{XY}=[bn, tikzit fill=black, tikzit draw=black]
\tikzstyle{YZ}=[fill=black, draw=black, shape=rectangle, minimum size=2mm, inner sep=0mm]
\tikzstyle{XZ}=[fill=black, draw=black, shape=triangle, minimum size=2mm, inner sep=0mm, tikzit shape=rectangle, tikzit fill={rgb,255: red,75; green,227; blue,9}, tikzit draw={rgb,255: red,240; green,26; blue,18}]
\tikzstyle{Pauli dot}=[inner sep=0mm, minimum size=2mm, shape=rectangle, draw=white, fill=white, tikzit draw={rgb,255: red,5; green,9; blue,255}, font={\tiny}]
\tikzstyle{input}=[shape=rectangle, fill=white, draw=black, minimum size=2mm, label={center:$\bullet$}]
\tikzstyle{output}=[inner sep=0mm, minimum size=2mm, shape=circle, draw=black, fill=white]

\tikzstyle{hadamard edge}=[-, dashed, dash pattern=on 2pt off 1.5pt, thick, draw={rgb,255: red,68; green,136; blue,255}]
\tikzstyle{brace edge}=[-, tikzit draw=blue, decorate, decoration={brace,amplitude=1mm,raise=-1mm}]
\tikzstyle{diredge}=[->]
\tikzstyle{dashed edge}=[-, dashed, dash pattern=on 2pt off 0.5pt, draw=black]

\tikzstyle{diamant}=[diamond, fill=couleurdefond, draw=black,inner sep=0.1em]
\tikzstyle{newe}=[rectangle, fill={gray!15}, draw=black, tikzit shape=rectangle, inner sep=0.2em]
\tikzstyle{cercle}=[circle, fill=couleurdefond, draw=black]
\tikzstyle{scercle}=[circle, fill=couleurdefond, draw=black, tikzit fill=white, inner sep=0.1em]
\tikzstyle{cartouche}=[rounded rectangle, fill=couleurdefond, draw=black,inner sep=0.2em]
\tikzstyle{neg}=[rounded rectangle, fill=couleurdefond, draw=black, execute at end node={$\neg$}]
\tikzstyle{sneg}=[rounded rectangle, fill=couleurdefond, draw=black, execute at end node={$\neg$}, scale=0.8]
\tikzstyle{negserie}=[rounded rectangle, fill=couleurdefond, draw=black, execute at end node={\footnotesize$\star\star$}]
\tikzstyle{diagrammevide}=[rectangle, fill=couleurdefond, draw=black, inner sep=1.25em, borddiagrammevide, tikzit shape=rectangle]
\tikzstyle{mdiagrammevide}=[rectangle, fill=couleurdefond, draw=black, inner sep=0.75em, sborddiagrammevide, tikzit shape=rectangle]
\tikzstyle{msdiagrammevide}=[rectangle, fill=couleurdefond, draw=black, inner sep=0.7em, msborddiagrammevide, tikzit shape=rectangle]
\tikzstyle{sdiagrammevide}=[rectangle, fill=couleurdefond, draw=black, inner sep=0.5em, sborddiagrammevide, tikzit shape=rectangle]
\tikzstyle{xsdiagrammevide}=[rectangle, fill=couleurdefond, draw=black, inner sep=0.4em, xsborddiagrammevide, tikzit shape=rectangle]
\tikzstyle{bs}=[shape=beam, fill=couleurdefond, draw, inner sep=0.25em, thick, tikzit fill=white]
\tikzstyle{sbs}=[shape=beam, fill=couleurdefond, draw, inner sep=0.2em, thick, tikzit fill=white]
\tikzstyle{npbs}=[shape=beam, horizontal fill={{npbsmoitiebasse}{npbsmoitiehaute}}, draw, inner sep=0.25em, thick, tikzit fill={rgb,255: red,128; green,128; blue,128}]
\tikzstyle{npbsalenvers}=[shape=beam, horizontal fill={{npbsmoitiehaute}{npbsmoitiebasse}}, draw, inner sep=0.25em, thick, tikzit fill={rgb,255: red,128; green,128; blue,128}]
\tikzstyle{snpbs}=[shape=beam, horizontal fill={{npbsmoitiebasse}{npbsmoitiehaute}}, draw, inner sep=0.2em, thick, tikzit fill={rgb,255: red,128; green,128; blue,128}]
\tikzstyle{snpbsalenvers}=[shape=beam, horizontal fill={{npbsmoitiehaute}{npbsmoitiebasse}}, draw, inner sep=0.2em, thick, tikzit fill={rgb,255: red,128; green,128; blue,128}]
\tikzstyle{cnot}=[shape=circle, draw, path picture={ 
\draw[black](path picture bounding box.north) -- (path picture bounding box.south) (path picture bounding box.west) -- (path picture bounding box.east);
}, tikzit fill={rgb,255: red,223; green,223; blue,223}]
\tikzstyle{thickcnot}=[shape=circle, draw, thick, path picture={ 
\draw[thick,black](path picture bounding box.north) -- (path picture bounding box.south) (path picture bounding box.west) -- (path picture bounding box.east);
}, tikzit fill={rgb,255: red,223; green,223; blue,223}]
\tikzstyle{boite22}=[fill=white, draw=black, shape=rectangle, minimum height=1cm, minimum width=0.5cm]
\tikzstyle{boite15}=[fill=white, draw=black, shape=rectangle, minimum height=0.7cm, minimum width=0.5cm]
\tikzstyle{boite2}=[fill=white, draw=black, shape=rectangle, minimum height=0cm, minimum width=0cm]
\tikzstyle{snegpotentiel}=[fill=couleurdefond, draw=black, shape=rounded rectangle, inner sep=0.25em, tikzit fill={rgb,255: red,191; green,191; blue,191}, execute at end node={\footnotesize$\star$}]
\tikzstyle{negpotentiel}=[fill=couleurdefond, draw=black, shape=rounded rectangle, tikzit fill={rgb,255: red,191; green,191; blue,191}, execute at end node={$\star$}]
\tikzstyle{token}=[fill=black, draw=black, shape=circle, inner sep=0.1em]
\tikzstyle{whitetoken}=[fill=white, draw=black, shape=circle, inner sep=0.1em]
\tikzstyle{boitePBS}=[fill=white, draw=gray, thick, shape=rectangle, rounded corners=3pt, minimum height=0.6cm, inner sep=0.1em, minimum width=0.5cm]
\tikzstyle{boitePBS2}=[fill=white, draw=gray, thick, shape=rectangle, rounded corners=3pt, minimum height=0.55cm, inner sep=0.1em, minimum width=0.5cm]
\tikzstyle{sgene}=[fill={gray!30}, draw=black, shape=rounded rectangle, rounded rectangle east arc=0pt, minimum height=0.5cm, inner sep=0em, minimum width=0cm, scale=0.8]
\tikzstyle{sdetector}=[fill={gray!30}, draw=black, shape=rounded rectangle, rounded rectangle west arc=0pt, minimum height=0.5cm, inner sep=0em, minimum width=0cm, scale=0.8]
\tikzstyle{xsgene}=[fill={gray!30}, draw=black, shape=rounded rectangle, rounded rectangle east arc=0pt, minimum height=0.5cm, inner sep=0em, minimum width=0cm, scale=0.67]
\tikzstyle{xsdetector}=[fill={gray!30}, draw=black, shape=rounded rectangle, rounded rectangle west arc=0pt, minimum height=0.5cm, inner sep=0em, minimum width=0cm, scale=0.67]
\tikzstyle{PolRot}=[fill={gray!30}, draw=black, shape=rectangle, minimum height=0.5cm, inner sep=0.1em, minimum width=0.1cm]
\tikzstyle{PhS}=[fill=white, draw=black, shape=rectangle, minimum height=0.5cm, inner sep=0.1em, minimum width=0.1cm]
\tikzstyle{gene}=[fill={gray!30}, draw=black, shape=rounded rectangle, rounded rectangle east arc=0pt, minimum height=0.5cm, inner sep=0em, minimum width=0cm]
\tikzstyle{detector}=[fill={gray!30}, draw=black, shape=rounded rectangle, rounded rectangle west arc=0pt, minimum height=0.5cm, inner sep=0em, minimum width=0cm]
\tikzstyle{cartoucherouge}=[rounded rectangle, fill={red!55!white}, draw=black, tikzit fill=red]
\tikzstyle{cartouchebleu}=[rounded rectangle, fill={blue!33!white}, draw=black, tikzit fill=blue]
\tikzstyle{diamantrouge}=[diamond, fill={rgb,255: red,255; green,115; blue,115}, draw=black]
\tikzstyle{diamantbleu}=[diamond, fill={rgb,255: red,171; green,171; blue,255}, draw=black]
\tikzstyle{control}=[fill=black, draw=black, shape=circle, scale=0.35]
\tikzstyle{wcontrol}=[fill=white, draw=black, shape=circle, scale=0.35]
\tikzstyle{boite3qubits}=[fill=white, draw=black, shape=rectangle, minimum height=2.5cm]
\tikzstyle{boite2qubits}=[fill=white, draw=black, shape=rectangle, minimum height=1.75cm]

\tikzstyle{new}=[-, tikzit draw=magenta]
\tikzstyle{tirets}=[-, draw=black, dashed]
\tikzstyle{noire}=[-, draw=black, tikzit draw=magenta]
\tikzstyle{ep}=[-, draw=black, tikzit draw=magenta]
\tikzstyle{longdashed}=[-, dash pattern=on 5pt off 5pt]
\tikzstyle{pointilles}=[-, draw=black, dotted]
\tikzstyle{trait}=[-, draw=black, thick,dashed]
\tikzstyle{boxed}=[-, draw=gray, thick,dashed]
\tikzstyle{grise}=[-, draw={rgb,255: red,191; green,191; blue,191}]
\tikzstyle{rouge}=[-, draw=red]
\tikzstyle{bleue}=[-, draw=bleu, tikzit draw=blue]
\tikzstyle{verte}=[-, draw={rgb,255: red,0; green,230; blue,0}]
\tikzstyle{borddiagrammevide}=[-, dash pattern=on 0.5em off 0.5em on 0.5em off 0.5em on 0.5em off 0em]
\tikzstyle{msborddiagrammevide}=[-, dash pattern=on 0.28em off 0.28em on 0.28em off 0.28em on 0.28em off 0em]
\tikzstyle{sborddiagrammevide}=[-, dash pattern=on 0.2em off 0.2em on 0.2em off 0.2em on 0.2em off 0em]
\tikzstyle{xsborddiagrammevide}=[-, dash pattern=on 0.16em off 0.16em on 0.16em off 0.16em on 0.16em off 0em]
\tikzstyle{mediumdash}=[-, dash pattern=on 2pt off 2pt]
\tikzstyle{rougefonce}=[-, draw={red!50!black}, tikzit draw={rgb,255: red,136; green,0; blue,0}]

\input{figures/styles-pbs.tikzdefs}

\tikzstyle{gate}=[fill=white, draw=black, shape=rectangle, minimum height=0.5cm, minimum width=0.1cm, inner sep=0.1em]
\tikzstyle{control}=[fill=black, draw=black, shape=circle, scale=0.35]
\tikzstyle{not}=[shape=circle, path picture={ 
\draw[black](path picture bounding box.north) -- (path picture bounding box.south) (path picture bounding box.west) -- (path picture bounding box.east);
}, draw=black]
\tikzstyle{wcontrol}=[fill=white, draw=black, shape=circle, scale=0.35]
\tikzstyle{empty}=[fill=white, draw=black, shape=rectangle, inner sep=0.4em, emptyborder]
\tikzstyle{globalphase}=[fill=white, draw=black, inner sep=0.15em, shape=rounded rectangle]
\tikzstyle{ancilla}=[fill=black, draw=black, shape=rectangle, minimum width=0.01cm, minimum height=0.25cm, inner sep=0.01em]
\tikzstyle{ground}=[fill=white, path picture={\draw[black](-1.5mm,0)--(-0.6mm,0);\draw[black,thick](-0.6mm,-1.75mm)--(-0.6mm,1.75mm) (0mm,-0.9mm)--(0mm,0.9mm) (0.6mm,-0.5mm)--(0.6mm,0.5mm);}, minimum width=0.1mm, draw=none, outer sep=0pt]
\tikzstyle{gate22}=[fill=white, draw=black, shape=rectangle, minimum height=0.7cm, minimum width=0.5cm]
\tikzstyle{void}=[shape=rectangle, minimum height=0.5cm]
\tikzstyle{circuit}=[fill=white, draw=black, shape=rectangle]

\tikzstyle{emptyborder}=[-, dash pattern=on 0.16em off 0.16em on 0.16em off 0.16em on 0.16em off 0em]
\tikzstyle{etc}=[-, draw=black, dashed, thick]
\tikzstyle{dots}=[-, dotted, draw=black]
\tikzstyle{shortdashed}=[-, draw=black, dash pattern=on 1pt off 1pt]
\tikzstyle{shortdashed2}=[-, draw=black, dash pattern=on 2pt off 2pt]

\newcommand{\tikzfigM}[1]{\scalebox{0.85}{\input{#1.tikz}}}
\newcommand{\tikzfigS}[1]{\scalebox{0.70}{\input{#1.tikz}}}

\newcommand{\gempty}{\scalebox{.69}{$\begin{tikzpicture}[scale=0.3]
	\begin{pgfonlayer}{nodelayer}
		\node [style=empty] (0) at (0, 0) {};
	\end{pgfonlayer}
\end{tikzpicture}
$}}
\newcommand{\gCNOT}{\scalebox{.69}{$\begin{tikzpicture}[scale=0.5]
	\begin{pgfonlayer}{nodelayer}
		\node [style=none] (1) at (-1.25, 0.75) {};
		\node [style=none] (4) at (1.25, -0.75) {};
		\node [style=none] (5) at (-1.25, -0.75) {};
		\node [style=none] (6) at (1.25, 0.75) {};
		\node [style=control] (7) at (0, 0.75) {};
		\node [style=not] (8) at (0, -0.75) {};
		\node [style=void] (10) at (-1.25, -0.75) {};
	\end{pgfonlayer}
	\begin{pgfonlayer}{edgelayer}
		\draw (6.center) to (1.center);
		\draw (4.center) to (5.center);
		\draw (7) to (8);
	\end{pgfonlayer}
\end{tikzpicture}
$}}
\newcommand{\gSWAP}{\scalebox{.69}{$\begin{tikzpicture}[scale=0.5]
	\begin{pgfonlayer}{nodelayer}
		\node [style=none] (2) at (-1, -0.75) {};
		\node [style=none] (4) at (1, -0.75) {};
		\node [style=none] (5) at (-1, 0.75) {};
		\node [style=none] (6) at (1, 0.75) {};
		\node [style=none] (7) at (-1.25, 0.75) {};
		\node [style=none] (8) at (1.25, 0.75) {};
		\node [style=none] (9) at (1.25, -0.75) {};
		\node [style=none] (10) at (-1.25, -0.75) {};
		\node [style=none] (14) at (-1.25, 0.75) {};
		\node [style=none] (15) at (-1.25, -0.75) {};
		\node [style=void] (17) at (-1.25, -0.75) {};
	\end{pgfonlayer}
	\begin{pgfonlayer}{edgelayer}
		\draw [style=new] (7.center) to (5.center);
		\draw [style=new] (10.center) to (2.center);
		\draw [style=new] (6.center) to (8.center);
		\draw [style=new] (4.center) to (9.center);
		\draw [style=new, in=0, out=180] (6.center) to (2.center);
		\draw [style=new, in=0, out=-180] (4.center) to (5.center);
	\end{pgfonlayer}
\end{tikzpicture}
$}}
\newcommand{\gH}{\scalebox{.69}{$\begin{tikzpicture}
	\begin{pgfonlayer}{nodelayer}
		\node [style=none] (4) at (-0.75, 0) {};
		\node [style=none] (5) at (0.75, 0) {};
		\node [style=gate] (12) at (0, 0) {$H$};
	\end{pgfonlayer}
	\begin{pgfonlayer}{edgelayer}
		\draw (4.center) to (5.center);
	\end{pgfonlayer}
\end{tikzpicture}
$}}

\newcommand{\gI}{\scalebox{.69}{$\begin{tikzpicture}
	\begin{pgfonlayer}{nodelayer}
		\node [style=none] (0) at (-0.55, 0) {};
		\node [style=none] (1) at (0.55, 0) {};
	\end{pgfonlayer}
	\begin{pgfonlayer}{edgelayer}
		\draw (0.center) to (1.center);
	\end{pgfonlayer}
\end{tikzpicture}
$}}

\newcommand{\gP}{\scalebox{.69}{$\begin{tikzpicture}
	\begin{pgfonlayer}{nodelayer}
		\node [style=none] (0) at (-1.25, 0) {};
		\node [style=none] (1) at (1.25, 0) {};
		\node [style=gate] (2) at (0, 0) {$P(\varphi)$};
	\end{pgfonlayer}
	\begin{pgfonlayer}{edgelayer}
		\draw (0.center) to (1.center);
	\end{pgfonlayer}
\end{tikzpicture}
$}}

\newcommand{\ginit}{\scalebox{.69}{$\begin{tikzpicture}
	\begin{pgfonlayer}{nodelayer}
		\node [style=ancilla] (1) at (-0.25, 0) {};
		\node [style=none] (2) at (0.25, 0) {};
	\end{pgfonlayer}
	\begin{pgfonlayer}{edgelayer}
		\draw (1) to (2.center);
	\end{pgfonlayer}
\end{tikzpicture}
$}}

\newcommand{\propQC}{\textbf{\textup{QC}}}
\newcommand{\QC}{\textup{QC}}

\newcommand{\bvdots}{ \tikz[baseline, every node/.style={inner sep=0}]{ \node at (0,0){.}; \node at (0,-6pt){.}; \node at (0,6pt){.}; } }

\newcommand{\defeq}{\coloneqq}      
\newcommand{\R}{\mathbb{R}}         
\newcommand{\C}{\mathbb{C}}         

\newcommand{\interp}[1]{\left\llbracket #1 \right\rrbracket}

\makeatletter
\def\namedlabel#1#2{\begingroup
	#2%
	\def\@currentlabel{#2}%
	\phantomsection\label{#1}\endgroup
}
\makeatother

\usepackage{xspace}
\usepackage{algorithm}
\usepackage{algorithmicx}
\usepackage[noend]{algpseudocode}
\usepackage{verbatim}

\usepackage{nicematrix} 
\NiceMatrixOptions{
code-for-first-row = \color{blue} ,
code-for-last-row = \color{blue} ,
code-for-first-col = \color{blue} ,
code-for-last-col = \color{blue}
}

\newcommand{\symd}{\mathbin{\Delta}\xspace}

\newcommand{\symdi}[1]{{\scalebox{1.5}{$\symd$}}_{#1}\,}

\makeatletter
\newcommand\etc{etc\@ifnextchar.{}{.\@}\xspace}
\newcommand\ie{i.e.\@\xspace}  
\newcommand\eg{e.g.\@\xspace}

\newcommand\LOG{labelled open graph\@\xspace}
\makeatother

\newcommand{\odd}[2]{\textsf{Odd}_{#1}\left(#2\right)}

\newcommand{\codd}[2]{\textsf{Odd}_{#1}\left[#2\right]}

\newcommand{\ld}{\lambda}
\newcommand{\sse}{\subseteq}

\newcommand{\Pred}{\operatorname{Pred}}
\newcommand{\Succ}{\operatorname{Succ}}

\usepackage{bm}

\newcommand{\abs}[1]{\ensuremath{\left| #1 \right|}}
\usepackage{amsmath,amsthm,amssymb}

\newcommand{\past}{\mathrm{past}}
\newcommand{\fut}{\mathrm{future}}
\newcommand{\para}{\mathrm{parallel}}

\title{Completeness for flow-preserving rewrite rules}

\author{Miriam Backens}{Inria Mocqua team and Université de Lorraine, CNRS, LORIA, F-54000 Nancy, France}{}{https://orcid.org/0000-0002-5418-1084}{}
\author{Simon Perdrix}{Inria Mocqua team and Université de Lorraine, CNRS, LORIA, F-54000 Nancy, France}{}{}{}

\authorrunning{M.~Backens and S.~Perdrix}

\Copyright{Miriam Backens and Simon Perdrix}

\ccsdesc[500]{Theory of computation~Quantum computation theory}
\ccsdesc[300]{Theory of computation~Equational logic and rewriting}

\keywords{Measurement-based quantum computing, ZX-calculus, completeness, flow-preserving rewriting} 

\category{} 

\relatedversion{} 

\funding{This work was supported by the Plan France 2030 through the PEPR integrated project EPiQ ANR-22-PETQ-0007 and the HQI platform ANR-22-PNCQ-0002; and by the European project MSCA Staff Exchanges Qcomical HORIZON-MSCA2023-SE-01. The project is also supported by the Maison du Quantique MaQuEst.}

\nolinenumbers 

\begin{document}

\maketitle

\begin{abstract}
 \noindent Complete sets of graphical rewrite rules enable fully graphical reasoning about quantum computations and have been an area of active research for more than a decade. Many recent applications of the ZX-calculus have made use of the close correspondence between ZX-diagrams and computations in the one-way model of measurement-based quantum computation. In this model, various kinds of flow properties ensure deterministic implementability; for ZX-diagrams, these same properties allow efficient translation into quantum circuits (a problem that is known to be \#P-hard in general). Therefore, flow-preserving ZX-calculus rewrite rules are of strong interest.

 Here, we extend the set of flow-preserving rules appearing in the literature with a few new rules and extensions of existing rules. We then show that the resulting rule set is complete for all flow-preserving translations between ZX-diagrams of appropriate form.
 The proof employs a manifestly flow-preserving equivalent of circuit extraction, where a diagram with gflow is transformed, using only flow-preserving rewrite rules, into a diagram with causal flow.
\end{abstract}

\section{Introduction}

The ZX-calculus is a powerful diagrammatic language that enables fully graphical reasoning about quantum computations and has been an active area of research for almost two decades. Equipped with complete graphical rewriting rules, the ZX-calculus finds applications in both the foundational study of quantum mechanics and various quantum computing tasks, such as circuit optimization and the development of error-correcting codes. A key property of the ZX-calculus, compared to traditional quantum circuits, is its flexibility, exemplified by the `only connectivity matters' and bialgebra rules, as well as the ability to merge or split spiders at will. This feature greatly facilitates graphical reasoning but makes the extraction of a quantum circuit \#P-hard in the general case \cite{de_beaudrap_circuit_2022}.

In this context, it is desirable to identify classes of ZX-diagrams for which circuit extraction is efficient, while still retaining the expressive flexibility of the ZX-rewriting rules. Natural candidates are the classes of ZX-diagrams that have a \emph{flow}. Introduced in the setting of measurement-based quantum computation \cite{raussendorf_one-way_2001}, the graphical property of flow provides a way of recognising deterministically-implementable computations.
It also yields a rewriting strategy that enables the efficient extraction of quantum circuits from ZX-diagrams possessing this property \cite{duncan_graph-theoretic_2020,backens_there_2021,simmons_relating_2021}.

There exist several kinds of flow, from the  restrictive causal flow \cite{danos_determinism_2006} to the more general Pauli flow  \cite{browne_generalized_2007}, forming a hierarchy of ZX-diagram classes. However, these classes are not closed under standard ZX-rewriting rules. It is therefore essential to develop alternative rewriting rules that preserve the existence of flows.
The recent notion of \emph{ZX-flow} applies directly to a broader class of ZX-diagrams and one variant is preserved under any standard stabiliser-fragment ZX-rewrites \cite{kissingerZXFlowFlexibleCriterion2026}, yet the connection of this variant to determinism is less immediate.
Therefore our paper works in the traditional framework of graph-like ZX-diagrams with causal flow, gflow, or Pauli flow, leaving connections to ZX-flow for future work.

\section*{Contributions}
While a few rewriting rules, such as local complementation and vertex deletion, were already known to be flow-preserving \cite{duncan_graph-theoretic_2020,backens_there_2021,mcelvanney_complete_2023,mcelvanney_flow-preserving_2023,holker_causal_2023,mcelvanney_preservation_2025,backens_inserting_2025} and a complete set of flow-preserving rewrite rules was known for the Clifford fragment \cite{mcelvanney_complete_2023}, we extend this toolkit by identifying a  set of rewriting rules that preserve various kinds of flow,   and we prove the completeness of this rule set.

Consequently, we equip the most general known fragment of ZX-diagrams that admit efficient circuit extractions with a complete set of diagrammatic rewriting rules. 
In the first instance, this includes all diagrams for which it is meaningful to define flow, in particular graph-like ZX-diagrams \cite{duncan_graph-theoretic_2020} and MBQC-form ZX-diagrams \cite{backens_there_2021}.
The former are diagrams that consist only of $Z$-spiders with Hadamard edges between them, without self-loops or multiple edges, and where every input or output is connected to a $Z$-spider.
The latter consist of an open graph state diagram -- basically a graph-like diagram without phases -- where each spider is connected to either an output wire or a measurement effect from the set \begin{tikzpicture}[baseline=-0.05]
	\begin{pgfonlayer}{nodelayer}
		\node [style=none] (0) at (-0.75, 0) {};
		\node [style=Z phase dot] (1) at (0.25, 0) {$\alpha$};
	\end{pgfonlayer}
	\begin{pgfonlayer}{edgelayer}
		\draw (0.center) to (1);
	\end{pgfonlayer}
\end{tikzpicture}
, \begin{tikzpicture}[baseline=-0.05, scale=.8]
	\begin{pgfonlayer}{nodelayer}
		\node [style=none] (0) at (-0.75, 0) {};
		\node [style=Z phase dot] (1) at (0.25, 0) {$\frac{\pi}{2}$};
		\node [style=X phase dot] (2) at (1.5, 0) {$\alpha$};
	\end{pgfonlayer}
	\begin{pgfonlayer}{edgelayer}
		\draw (0.center) to (1);
		\draw (1) to (2);
	\end{pgfonlayer}
\end{tikzpicture}
 and \begin{tikzpicture}[baseline=-0.05]
	\begin{pgfonlayer}{nodelayer}
		\node [style=none] (0) at (-0.75, 0) {};
		\node [style=X phase dot] (1) at (0.25, 0) {$\alpha$};
	\end{pgfonlayer}
	\begin{pgfonlayer}{edgelayer}
		\draw (0.center) to (1);
	\end{pgfonlayer}
\end{tikzpicture}
.

\begin{theorem}[Informal] 
	The flow-preserving rewrite rules given in~\cref{fig:flow-preserving} are complete. 
\end{theorem}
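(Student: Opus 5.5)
The plan is to reduce completeness for flow-preserving rewrites to a normal form that can be reached manifestly. The normal form is a diagram with causal flow, which is essentially a quantum circuit. I then invoke an existing complete rule set for circuits, and show that each of its rules can be simulated by the rules of \cref{fig:flow-preserving}. Concretely, suppose $D_1$ and $D_2$ are two diagrams of the appropriate form (graph-like or MBQC-form) with gflow (or Pauli flow), the same interface, and $\interp{D_1}=\interp{D_2}$. It suffices to show three things:
\begin{enumerate}
\item each $D_i$ rewrites, using only flow-preserving rules, to a diagram $C_i$ with causal flow that reads directly as a circuit;
\item every rule of a known complete equational theory for quantum circuits (e.g.\ the theory \propQC) is derivable, as an equation between causal-flow diagrams, from \cref{fig:flow-preserving};
\item all rules used are invertible within the rule set, so the path $D_1\to C_1 \leftrightarrow C_2 \leftarrow D_2$ is a valid derivation.
\end{enumerate}
Point (3) follows if each rule in the figure is stated as an equation whose two sides both have flow whenever the context does. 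Flow preservation of each individual rule is assumed to be established earlier, rule by rule.

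For point (1), I would mimic the circuit-extraction algorithm of \cite{duncan_graph-theoretic_2020,backens_there_2021}, but perform it diagrammatically.
\begin{itemize}
\item \emph{Pauli measurements.} First eliminate Pauli-measured vertices, or turn them into planar ones, via local complementation, pivoting and vertex deletion. This leaves a diagram with gflow and only planar measurements.
\item \emph{Output frontier.} Proceed inductively from the outputs. Let $F$ be the frontier, the vertices adjacent to outputs. In extraction, one Gaussian-eliminates the biadjacency matrix between $F$ and its neighbours, using CNOTs that act on output wires. The diagrammatic analogue is a flow-preserving rule that adds row $u$ to row $v$ of that biadjacency matrix and emits a CNOT (two phaseless spiders on the output wires).
\item \emph{Extraction step.} Gflow guarantees some vertex $w$ in the past whose correction set lies inside $F$. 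After row operations, $w$ is connected to exactly one frontier vertex. At that point the pair forms a causal-flow edge, and the frontier moves back past $w$. Phases and CZ edges among frontier vertices are pulled out as circuit gates.
\end{itemize}
Repeating this until the frontier reaches the inputs yields a causal-flow diagram whose open graph is "a circuit in disguise". A final tidy-up with spider and identity rules puts it in literal circuit form.

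For point (2), circuit rules that involve only $\{CNOT, H, P(\varphi)\}$ and ancillae are translated into graph-like diagrams with causal flow, and checked one by one. Most follow from spider fusion/unfusion along flow paths and the Euler-decomposition rule for $H$. The harder circuit axioms (the large multi-qubit relations of \propQC) are more involved. For these I would show that both sides are rewritable, within the rule set, to a common diagram, rather than deriving them naively.

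I expect the main obstacle to be point (1). The row-operation step must be realized by a rule that provably preserves gflow in every intermediate diagram, not merely at the end. The issue is this: adding a frontier vertex's neighbourhood to another's modifies correction sets of vertices further in the past. So I would first try to prove a lemma stating that if $g$ is a gflow before the row addition $u\mathrel{+}=v$, then replacing $v$ by $v\symd u$ in every correction set containing $v$ yields a gflow afterwards. If that lemma holds, the induction closes.
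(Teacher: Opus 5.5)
Your top-level architecture is the paper's: a flow-preserving extraction to a causal-flow normal form, a complete circuit theory, and a derivation of each circuit axiom. The problem is that the extraction step, which carries the real weight, is not carried out with the rules of Figure~\ref{fig:flow-preserving}. You posit ``a flow-preserving rule that adds row $u$ to row $v$ and emits a CNOT'', and your key lemma only exhibits a gflow \emph{after} such a row addition. Completeness is a statement about this specific rule set, so you must show that the row operation is \emph{derivable} from \eqref{eq:IO}, \eqref{eq:PF}, \eqref{eq:ZL}, \eqref{eq:LC}, \eqref{eq:EU-ZX}, \eqref{eq:SN}, with flow in every intermediate diagram. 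A flow on the end result does not establish that.

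The missing idea is Lemma~\ref{lem:insert-cnot}. For frontier vertices $a,b$, insert a $YZ$-vertex $z_1$ with neighbours $\{a,b\}$, then a $YZ$-vertex $z_2$ with neighbours $(N_G(b)\cap\overline{S})\cup\{z_1\}$, then pivot on $z_1z_2$. This adds $b$'s unextracted row to $a$'s and makes $z_1$ the new frontier vertex. The two insertions are only conditionally flow-preserving (Theorem~\ref{thm:YZ-insertion-preserves-gflow}). Verifying their hypotheses (predecessors of the new vertex unextracted, $\Succ(N,C)$ extracted, as in Corollary~\ref{cor:YZ-insertion} with $U=S$) is exactly what the extracted set $S$, the frontier $F$ and pseudo-focused gflow are built for. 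For instance, frontier vertices occur only in correction sets of unextracted vertices (Lemma~\ref{lem:frontier-properties}).

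Your extraction also only handles $XY$-measured vertices. Removing Pauli measurements can leave $XZ$-measurements and adjacent $YZ$ pairs. Even in phase-gadget form, the step ``some $w$ with $g(w)\sse F$ can now be extracted'' fails when every maximal unextracted vertex is $YZ$-measured. An $XY$-vertex with $g(w)\sse F$ may exist, but it precedes an unextracted $YZ$-vertex and so cannot be extracted yet. The paper handles this in three steps:
\begin{enumerate}
\item it first reaches phase-gadget form (Lemma~\ref{lem:to-phase-gadget});
\item it proves that some maximal $YZ$-vertex then has a frontier neighbour $b$ (Lemma~\ref{lem:phase-gadget-extracted-neighbour});
\item it turns that vertex into an $XY$-vertex by inserting a $YZ$-vertex with neighbours $N_G(b)\cap\overline{S}$ and pivoting (Lemma~\ref{lem:pivot-YZ}).
\end{enumerate}
Accordingly, the normal form has \emph{extended} causal flow, with the phase gadgets remaining, rather than causal flow.

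Finally, your justification of point (3) fails for this rule set. \eqref{eq:ZL} right-to-left with a $YZ$ label and \eqref{eq:EU-ZX} are only conditionally flow-preserving, and \eqref{eq:EU-ZX} and \eqref{eq:SN} are directed. So deriving each circuit axiom in the trivial context does not show that the derivation stays flow-preserving inside an arbitrary diagram. To close this, the paper needs:
\begin{enumerate}
\item the causal-subdiagram replacement result following Definition~\ref{def:causal-subdiagram};
\item a complete theory for isometries (initialisation but no discarding), which is not available off the shelf, whose directed rules are reversible within the theory;
\item for \eqref{ctrl2pi-ugp}, the identification of multi-controlled phase gates with full spider nests, so that \eqref{eq:SN} applies (Lemma~\ref{lem:zx-ctrl2pi-ugp}).
\end{enumerate}
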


The complete rewriting rules are: 
(\ref{eq:IO}) rules are input/output expansion/contraction, they are counterparts of the standard ZX spider fusion rule, their application is however restricted to input and outputs spiders and the angles cannot be split. Addition of angles can however be performed through the  (\ref{eq:PF}) rule.
The (\ref{eq:LC}) rule corresponds to the local complementation operation that can be implemented by Clifford $\pi/2$ rotations on graph states. 
The (\ref{eq:ZL}) rule can be interpreted as the red copy, each degree-1 red spider being colour-changed to green by the Hadamard edges and then merged with the neighbouring green spiders.  The (\ref{eq:EU-ZX}) rule is a particular instance of Euler angle decomposition  with only two parameters on the LHS, finally the (\ref{eq:SN}) rule is the spider-nest identity where the angles are appropriate dyadic multiples of~$\pi$.
As already shown in previous works about flow-preserving rewriting, some of these rules can only be applied if the flow on the initial diagram satisfies certain properties: in other words, the cost to pay for flow-preserving rewriting is that the applicability of rewrites is no longer determined entirely locally; instead it may depend on the global property of flow.

The proof of completeness proceeds in essentially two main stages. First, we show that any ZX-diagram with a Pauli flow can be transformed into a circuit-like ZX-diagram equipped with a causal flow. Since ZX-diagrams with causal flow can easily be interpreted as quantum circuits, this first stage is called circuit extraction. We use as intermediate steps particular ZX-diagrams with gflow. Note that circuit extraction already appears in circuit optimization procedures \cite{duncan_graph-theoretic_2020,backens_there_2021,simmons_relating_2021} that we refine here by proving that it can be performed using flow preserving rewrite rules. 
The first stage guarantees that any ZX-diagram with Pauli flow can be transformed, in a flow preserving way, into a circuit-like ZX-diagram. To complete the proof of completeness, the second stage relies on a complete set of circuit rewriting rules that we introduce for quantum circuits with initialisation. 
To the best of our knowledge, this represents the first application of quantum circuit completeness results within the ZX-calculus.

\section{Preliminaries}

\subsection{ZX-calculus}

The ZX-calculus \cite{coecke2011} is a diagrammatic language for reasoning about quantum computations. A ZX-diagram is made of green spiders \input{intro/gspider.tikz}, red spiders  \input{intro/rspider.tikz} and Hadamard \begin{tikzpicture}
	\begin{pgfonlayer}{nodelayer}
		\node [style=hadamard] (0) at (1, 0) {};
		\node [style=none] (1) at (0, 0) {};
		\node [style=none] (2) at (2, 0) {};
	\end{pgfonlayer}
	\begin{pgfonlayer}{edgelayer}
		\draw (1.center) to (0);
		\draw (0) to (2.center);
	\end{pgfonlayer}
\end{tikzpicture}
, together with \emph{wires}: \begin{tikzpicture}
	\begin{pgfonlayer}{nodelayer}
		\node [style=none] (13) at (1.5, 0) {};
		\node [style=none] (14) at (0, 0) {};
	\end{pgfonlayer}
	\begin{pgfonlayer}{edgelayer}
		\draw (14.center) to (13.center);
	\end{pgfonlayer}
\end{tikzpicture}
, \input{intro/cap.tikz}, \input{intro/cup.tikz}, \input{intro/swap.tikz}. These generators are combined using sequential and parallel compositions. 

Any ZX-diagram $D$ with $n$ inputs and $m$ outputs can be interpreted as a  linear map $\interp D: \mathbb C^{2^n}\to \mathbb C^{ 2^n}$ defined as follows: 

\[\begin{array}{rclrcl}
		\interp{\input{intro/gspider.tikz}}&\!\!\coloneqq \!\!&\ket{0^m}\!\bra{0^n}+e^{i\alpha}\ket{1^m}\!\bra{1^n}&\interp{}&\!\!\coloneqq \!\!&\ket{0}\!\bra{0}+\ket{1}\!\bra{1}\\[0.2cm]
		\interp{\input{intro/rspider.tikz}}&\!\!\coloneqq \!\!& \ket{+^m}\!\bra{+^n}+e^{i\alpha}\ket{-^m}\!\bra{-^n}\quad&\quad\interp{}&\!\!\coloneqq \!\!&\ket{+}\!\bra{0}+\ket{-}\!\bra{1} \\[0.2cm]
		\interp{\input{intro/cup.tikz}}&\!\!\coloneqq \!\!&\ket{00}+\ket{11}&\interp{\input{intro/cap.tikz}}&\!\!\coloneqq \!\!&\bra{00}+\bra{11}\\[0.2cm]
		\interp{\input{intro/swap.tikz}}&\!\!\coloneqq \!\!&\ket{00}\!\bra{00}+\ket{01}\!\bra{10}+\ket{10}\!\bra{01}+\ket{11}\!\bra{11}&&&
	\end{array}\]

When equal to zero, the angle of the green or red spider is omitted: \vspace{0.2cm}

\centerline{$\input{intro/gspider-0.tikz}\coloneqq\input{intro/gspider-zero.tikz}\qquad$ and $\qquad \input{intro/rspider-0.tikz}\coloneqq\input{intro/rspider-zero.tikz}$\vspace{0.2cm}}

Moreover, a Hadamard gate connecting to green spiders will be depicted with a dotted blue line:

\centerline{\input{intro/dashedHadamard.tikz}}

ZX-diagrams are \emph{universal} for pure qubit quantum mechanics: $\forall n,m\in \mathbb N$, and $\forall M\in  \mathcal{M}_{2^n \times 2^m}(\mathbb{C})$, there exists a ZX-diagram $D$ such that $\interp D\equiv M$, i.e. $\interp D$  is equal to $M$, up to a non-zero renormalisation factor.   

We consider ZX-diagram up to deformations, following the so-called  \textit{Only Topology Matters} paradigm. It means that two diagrams that can be transformed into each other by moving around the wires are equal. This can be derived from the following rules:

\begin{center}
	\begin{tabular}{ccccc}
		$\input{intro/iswap0.tikz}=\input{intro/iswap1.tikz}~~$&$~\input{intro/scup0.tikz}=\input{intro/cup.tikz}~~$&$~\input{intro/scap0.tikz}=\input{intro/cap.tikz}~~$&$~\input{intro/snake0.tikz}=\begin{tikzpicture}
	\begin{pgfonlayer}{nodelayer}
		\node [style=none] (16) at (1, 0) {};
		\node [style=none] (17) at (0, 0) {};
	\end{pgfonlayer}
	\begin{pgfonlayer}{edgelayer}
		\draw (16.center) to (17.center);
	\end{pgfonlayer}
\end{tikzpicture}
=\input{intro/snake2.tikz}~~$&$~\scalebox{0.9}{\input{intro/sdiagp0.tikz}}\!\!=\!\!\scalebox{0.9}{\input{intro/sdiagp1.tikz}}~$\\
	\end{tabular}	
\end{center}

The legs of the spiders of ZX-calculus can be exchanged and bent. This implies that diagrams are essentially graphs with inputs and outputs.

\begin{center}
	\begin{tabular}{cc}
		$\input{intro/sgspider0.tikz}=\input{intro/sgspider1.tikz}\quad$&$\quad\input{intro/bspider0.tikz}=\input{intro/bspider1.tikz}$\\
	\end{tabular}
\end{center}

ZX-diagrams can also be equipped with an equational theory that allows to transform ZX-diagrams while preserving their interpretation. 
In particular the equations given in \cref{fig:ZXaxioms} are complete \cite{vilmart_near-minimal_2019}, i.e.\ whenever two diagrams represent the same quantum evolution, we can transform one into the other using the rules of the language.

\begin{figure}[!h]
	\fbox{\begin{minipage}{.985\textwidth}
		\begin{subfigure}{.32\textwidth}
			\begin{equation} \tag{s}\label{s}
				\input{intro/spider0.tikz} = \input{intro/spider1.tikz}
			\end{equation}
		\end{subfigure}
		\begin{subfigure}{.34\textwidth}
			\begin{equation} \tag{w}\label{wire}
				\input{intro/wire0.tikz} = \input{intro/wire1.tikz} = \input{intro/wire2.tikz}
			\end{equation}
		\end{subfigure}
		\begin{subfigure}{.3\textwidth}
			\begin{equation} \tag{h}\label{h}
				\input{intro/hadp0.tikz} = \input{intro/hadp1.tikz}
			\end{equation}
		\end{subfigure}
		
		\begin{subfigure}{.24\textwidth}
			\begin{equation} \tag{c}\label{c}
				\input{intro/rcopy0.tikz} = \input{intro/rcopy1.tikz}
			\end{equation}
		\end{subfigure}
		\begin{subfigure}{.32\textwidth}
			\begin{equation} \tag{b}\label{b}
				\input{intro/bigebre0.tikz} = \input{intro/bigebre1.tikz}
			\end{equation}
		\end{subfigure}
		\begin{subfigure}{.4\textwidth}
			\begin{equation} \tag{e}\label{e}
				\input{intro/euler0.tikz} = \input{intro/euler1.tikz}
			\end{equation}
		\end{subfigure}
	\end{minipage}}
	\caption{{\bf A complete equational theory for the ZX-calculus.} \label{fig:ZXaxioms} In Eq.~\eqref{e}, $\alpha_1$, $\alpha_2$ are arbitrary real numbers, and $\beta_{1}:=\arg(z)+\arg(z')$,  $\beta_{2}:=2\arg(i+|\frac{z}{z'}|)$, $\beta_{3}:=\arg(z)-\arg(z')$  where $x^{+}\coloneqq\frac{\alpha_1 + \alpha_2}{2}$, $x^{-}\coloneqq x^{+}- \alpha_2$, $z\coloneqq-\sin(x^{+})+ i\cos(x^{-})$, $z'\coloneqq\cos(x^{+})-i\sin(x^{-})$ and $z'=0 \Rightarrow \beta_2 =0$. 
	In the upper left rule Eq.~\eqref{s}, there must be at least one wire between the spiders annotated by $\alpha$ and $\beta$.
	Equalities hold up to scalar factor.
	}
\end{figure}

Note that one can transform any ZX-diagram into a graph-like diagram, made of green spiders and Hadamard edges only: indeed 
 Eq.~\eqref{h} allows to turn any red spider into a green spider, additionally any sequence of Hadamard gates can be reduced to at most one gate as one can derive \input{intro/hadhad.tikz}, green spiders directly connected can be merged thanks to Eq.~\eqref{s} and finally parallel Hadamard edges can also be simplified thanks to the so-called Hopf law \input{intro/hopf.tikz}.
 
This graph-like representation is strongly related to measurement-based quantum computation as it can be interpreted as a graph state on which one-qubit measurements are performed, depending on the parameters of the spiders \cite{DP2010rewriting}. Before presenting the relationship between ZX-calculus and MBQC  in \cref{sec:Relationship}, we provide in the following section the necessary background on graphs and flows necessary for the rest of the paper.

\subsection{Flow properties}
We will use the following standard graph theory notations. Denote by $\symd$ the symmetric difference operator defined as $A \symd B = (A \cup B) \setminus (A \cap B)$ for any sets $A$ and $B$.
Given a graph $G=(V,E)$, we denote by $N_G(v)$ the neighbourhood of $v\in V$, \ie all vertices that are connected to $v$. We will also write $v\sim w$ if $w\in N_G(v)$. We consider only finite, simple ($v\not\sim v$) and undirected ($u\sim v \Rightarrow v\sim u$) graphs. For any set $S\sse V$, we denote by $\odd{G}{S}$ the \emph{odd neighbourhood} of the set $S$, which contains all vertices that have an odd number of connections to elements of $S$, meaning $\odd{G}{S} = \symdi{v\in S} N_G(v)$.
Additionally, we consider the \emph{closed neighbourhood} $N_G[v] := N_G(v)\cup\{v\}$ and similarly  the \emph{closed odd neighbourhood} of $S$ $\codd{G}{S}:=\symdi{v\in S} N_G[v] = \odd{G}{S}\symd S$. Finally, again for any $S\sse V$, we denote $\comp{S}:=V\setminus S$ the complement of $S$ in $V$.

\begin{definition}
 A \LOG is a tuple $(G,I,O,\ld)$, where $G=(V,E)$ is a finite simple graph, $I\sse V$ is a set of input vertices, $O\sse V$ is a set of output vertices, and $\ld:\comp{O}\to\{X,Y,Z,XY,XZ,YZ\}$ is a function assigning measurement labels to all non-output vertices.
 Measurements with labels in $\{X,Y,Z\}$ are called Pauli measurements and measurements with labels in $\{XY,XZ,YZ\}$ are called planar measurements.
\end{definition}

Traditionally, a flow property is given by a tuple $(c,\prec)$ of a correction function $c$ and a strict partial order $\prec$ which satisfy certain compatibility conditions (and may be further restricted depending on the type of flow property).
Yet these compatibility conditions also have the effect that any valid correction function induces a minimal partial order compatible with it.
This minimal partial order has been called the `dependency order' \cite[p.~2494]{broadbent_parallelizing_2009}, or it has been characterised as a property of `extensivity' of the correction function \cite[p.~4]{perdrix_determinism_2017}.

In this work, we are only concerned with the question of whether a given \LOG satisfies some specific kind of flow property, not with details of that flow.
Therefore, to keep things simpler, we will define flow properties in terms of the correction function and the following relation induced by it.

\begin{definition}\label{def:induced-relation}
 Let $(G,I,O,\ld)$ be a \LOG.
 Any function $c:\comp{O}\to\mathcal{P}(\comp{I})$ induces three relations on $V\times V$ given by
 \begin{align}
  R_X &:= \{(u,v)\mid v\in c(u)\setminus\{u\} \wedge \ld(v)\notin\{X,Y\}\} \label{eq:prec-X} \\
  R_Z &:= \{(u,v)\mid v\in\odd{G}{c(u)}\setminus\{u\} \wedge \ld(v)\notin\{Y,Z\}\} \label{eq:prec-Z}\\
  R_Y &:= \{(u,v)\mid v\in\codd{G}{c(u)}\setminus\{u\} \wedge \ld(v)=Y\} \label{eq:prec-Y}
 \end{align}
 Let $R_c := R_X\cup R_Y\cup R_Z$ and let $R_c^*$ be the transitive closure of $R_c$; we say $R_c^*$ is the \emph{relation induced by $c$}.

 If the induced relation is known to be a strict partial order (which is required for any kind of flow property), we often denote it by $\prec_c$ instead.
\end{definition}

\begin{remark}
 If all measurements are planar, then $R_c$ simplifies to
 \[
  R_c = \{(u,v)\mid v\in p(u)\setminus\{u\}\} \cup \{(u,v)\mid v\in\odd{G}{p(u)}\setminus\{u\}\}.
 \]
\end{remark}

The most general flow property we need is Pauli flow \cite[Definition 5]{browne_generalized_2007}, which allows both planar and Pauli measurements in the underlying \LOG.

\begin{definition}\label{def:Pauli-flow}
A \LOG $(G, I, O, \lambda)$ has Pauli flow if there exists a map $p:\comp{O}\to\mathcal{P}(\comp{I})$ such that $R^*_p$ is a strict partial order and for all $u\in\comp{O}$,
\begin{description}
    \item[\namedlabel{itm:XY}{($\lambda$.XY)}] If $\lambda(u) = XY$, then $u\not \in p(u)$ and $u \in \odd{G}{p(u)}$.
    \item[\namedlabel{itm:XZ}{($\lambda$.XZ)}] If $\lambda(u) = XZ$, then $u \in p(u)$ and $u \in \odd{G}{p(u)}$.
    \item[\namedlabel{itm:YZ}{($\lambda$.YZ)}] If $\lambda(u) = YZ$, then $u\in p(u)$ and $u \not \in \odd{G}{p(u)}$.
    \item[\namedlabel{itm:X}{($\lambda$.X)}] If $\lambda(u) = X$, then $u \in \odd{G}{p(u)}$.
    \item[\namedlabel{itm:Z}{($\lambda$.Z)}] If $\lambda(u) = Z$, then $u \in p(u)$.
    \item[\namedlabel{itm:Y}{($\lambda$.Y)}] If $\lambda(u) = Y$, then $u\in\codd{G}{p(u)}$.
\end{description}
\end{definition}

A Pauli flow on a \LOG where all measurements are planar, \ie where $\ld(V)\sse\{XY,XZ,YZ\}$, is called a \emph{generalised flow} or \emph{gflow}\footnote{This version is sometimes called \emph{extended gflow}, with the plain term `gflow' reserved for \LOG{}s where all measurement labels are $XY$, yet the original definition is the general one and this will be the one we need in this work.} \cite[Definition 5]{browne_generalized_2007}.

\begin{definition}\label{def:gflow}
 A \LOG $(G, I, O, \ld)$ with $\ld(V)\sse\{XY,XZ,YZ\}$ has \emph{gflow} if there exists a map $g:\comp{O}\to\mathcal{P}(\comp{I})$ such that $R^*_g$ is a strict partial order and for all $u\in\comp{O}$,
\begin{description}
    \item[\namedlabel{itm:gXY}{(G.XY)}] If $\lambda(u) = XY$, then $u\not \in g(u)$ and $u \in \mathrm{Odd}_G(g(u))$.
    \item[\namedlabel{itm:gXZ}{(G.XZ)}] If $\lambda(u) = XZ$, then $u \in g(u)$ and $u \in \mathrm{Odd}_G(g(u))$.
    \item[\namedlabel{itm:gYZ}{(G.YZ)}] If $\lambda(u) = YZ$, then $u\in g(u)$ and $u \not \in \mathrm{Odd}_G(g(u))$.
\end{description}
\end{definition}

If a \LOG has Pauli flow, there are generally multiple correction functions which induce a Pauli flow; the same holds for \LOG{}s with gflow.
The following conditions characterise particularly useful types of flow called \emph{focused}.
These focusing conditions were first defined for $XY$-plane only gflow \cite[Definition~5]{mhalla_which_2014}, and then generalised to extended gflow \cite[Section~3.3]{backens_there_2021} and Pauli flow \cite[Definition~4.3]{simmons_relating_2021}; we give the general version here (which encompasses the predecessors as special cases).

\begin{definition}[{\cite[Definition~4.3]{simmons_relating_2021}}]\label{def:focusing}
 Given a \LOG $\Gamma = (G,I,O,\ld)$, a set $\hat{p}\sse\comp{I}$ is \emph{focused over $S\sse\comp{O}$} if:
\begin{description}
 \item[\namedlabel{itm:focus-X}{(FX)}] For all $w\in S\cap\hat{p}$, we have $\ld(w)\in\{XY,X,Y\}$.
 \item[\namedlabel{itm:focus-Z}{(FZ)}] For all $w\in S\cap\odd{G}{\hat{p}}$, we have $\ld(w)\in\{XZ,YZ,Y,Z\}$.
 \item[\namedlabel{itm:focus-Y}{(FY)}] For all $w\in S$ such that $\ld(w)=Y$, we have $w\in\hat{p}$ if and only if $w\in\odd{G}{\hat{p}}$.
\end{description}
 A \emph{focused set $\hat{p}$ for $\Gamma$} is focused over $\comp{O}$.
 A Pauli flow $(p,\prec)$ is \emph{focused} if $p(v)$ is focused over $\comp{O}\setminus\{v\}$ for all $v\in\comp{O}$.
 Similarly, a gflow $(g,\prec)$ is focused if $g(v)$ is focused over $\comp{O}\setminus\{v\}$ for all $v\in\comp{O}$.
\end{definition}

If a \LOG has Pauli flow, then it has focused Pauli flow \cite[Lemma~B.4]{simmons_relating_2021}.
The same holds for (extended) gflow \cite{backens_there_2021,mhalla_which_2014}.
Moreover, if the \LOG satisfies $\abs{I}=\abs{O}$, then the focused flow is unique \cite{mhalla_which_2014,simmons_relating_2021}.
The conditions of a focused gflow or Pauli flow can be expressed in matrix form \cite{mitosekAlgebraicFormulationPauli2026}.

Another flow variant is \emph{causal flow} \cite[Definition 2]{danos_determinism_2006} which is traditionally defined for \LOG{}s in which all vertices are measured in the $XY$ plane (in which case the labelling is usually left implicit).
While historically the first type of flow to be described, causal flow (like gflow) can be considered a special case of Pauli flow.
Causal flow restricts all correction sets to being of size~1; in a slight abuse of notation we will identify functions $\{c:\comp{O}\to\mathcal{P}(\comp{I})\mid\forall v. \abs{c(v)}=1\}$ with functions $c:\comp{O}\to\comp{I}$.

\begin{definition}
 A \LOG $(G, I, O, \ld)$ where $\ld(V)\sse\{XY\}$ has \emph{causal flow} if there exists a map $f: \comp{O}\to\comp{I}$ such that $R^*_f$ is a strict partial order and for all $v\in\comp{O}$, we have $v\sim f(v)$.
\end{definition}

\begin{remark}
 The properties of being causal and being focused are not generally compatible except in fairly trivial cases.
 Taken together, the causal and focusing properties would imply that $f(v)$ cannot have any neighbours other than $v$ and outputs, which severely restricts the structure of the \LOG.
\end{remark}

In older literature, causal flow is simply called `flow', but we will only use the term `flow' to refer to Pauli flow, gflow, or causal flow collectively; and always write `causal flow' when we mean specifically the latter.

When causal flow is considered as a restriction of gflow specifically, this leads to a straightforward extension of the definition, which has been suggested multiple times in the community \cite{perez_measurement-based_2023,holker_causal_2023}.

\begin{definition}\label{def:causal}
 A \LOG $(G, I, O, \ld)$ with $\ld(V)\sse\{XY,YZ\}$ has \emph{extended causal flow} if there exists a map $f: \comp{O}\to\comp{I}$ such that $R^*_f$ is a strict partial order and for all $v\in\comp{O}$:
 \begin{description}
  \item[\namedlabel{itm:cXY}{(C.XY)}] If $\ld(v)=XY$, then $f(v)\sim v$.
  \item[\namedlabel{itm:cYZ}{(C.YZ)}] If $\ld(v)=YZ$, then $f(v)=v$.
 \end{description}
\end{definition}

We will sometimes consider (extended) causal flow to be defined in terms of a map $f:\comp{O}\to\mathcal{P}(\comp{I})$ which satisfies $\abs{f(v)}=1$ for all $v\in\comp{O}$.

This size condition in extended causal flow means the \LOG cannot contain any $XZ$-measurements: the latter need to be both in their correction set and in the odd neighbourhood of their correction set.
Yet since we are considering only simple graphs where there are no self-loops, this is not achievable with a correction set of size~1.

For each of the above flow properties, there is a polynomial-time algorithm which decides whether a given \LOG has this type of flow and, if yes, returns a suitable correction function and its induced partial order \cite{mhalla_finding_2008,backens_there_2021,simmons_relating_2021}.

The following is well-known; for focused Pauli flow or gflow it follows from the algebraic definition \cite{mitosekAlgebraicFormulationPauli2026} and for $XY$-only extended causal flows it follows from the induced path cover~\cite{de_beaudrap_finding_2008}.
We give a quick proof here to also explicitly cover extended causal flow and non-focused gflow or Pauli flow.

\begin{remark}\label{rem:unique-correction-sets}
	For a flow $g$ of any type on some labelled open graph $(G,I,O,\ld)$, we always have that correction sets of distinct vertices are distinct.
	Suppose there exist $v,v'\in\comp{O}$ such that $g(v)=g(v')$.
	Now $\ld(v)\in\{X,XY\}\implies v\in\odd{G}{g(v)}$, $\ld(v)\in\{XZ,YZ,Z\}\implies v\in g(v)$ and $\ld(v)=Y \implies v\in\codd{G}{g(v)}$.
	Thus, since $g(v')=g(v)$, we have either $v=v'$ or $(v',v)\in R_g$.
	But symmetrically we can also argue that $(v,v')\in R_g$ if $v\neq v'$, so in the latter case $R_g^*$ is not a strict partial order.
	Hence two vertices with the same correction set must be identical.
\end{remark}

\subsection{Relationship between MBQC and ZX-calculus}\label{sec:Relationship}

The ZX-calculus provides a powerful and intuitive framework to describe MBQC, where the flow conditions naturally yield a systematic rewriting strategy that translates MBQC computations into quantum circuits~\cite{DP2010rewriting}, enabling efficient circuit optimisation procedures~\cite{duncan_graph-theoretic_2020,backens_there_2021}.

\begin{definition}[\cite{backens_zx-calculus_2014,backens_there_2021}]
 A \emph{graph state diagram} is a ZX-diagram where each vertex is a (phase-free) green spider, each edge connecting spiders has a Hadamard gate on it, and there is a single output wire incident on each vertex.

 A ZX-diagram is in \emph{MBQC-form} if it consists of a graph state diagram with the following modifications: each vertex of the graph may furthermore be connected to an input (in addition to its output), and a measurement effect instead of its output.
 The allowed measurement effects and the corresponding measurement labels are:
 \begin{center}
  \begin{tabular}{c|c|c|c|c|c}
  $XY$ & $XZ$ & $YZ$ & $X$ & $Z$ & $Y$ \\ \hline
   &  &  & \begin{tikzpicture}[baseline=-0.05]
	\begin{pgfonlayer}{nodelayer}
		\node [style=none] (0) at (-0.75, 0) {};
		\node [style=Z phase dot] (1) at (0.25, 0) {$a\pi$};
	\end{pgfonlayer}
	\begin{pgfonlayer}{edgelayer}
		\draw (0.center) to (1);
	\end{pgfonlayer}
\end{tikzpicture}
 & \begin{tikzpicture}[baseline=-0.05]
	\begin{pgfonlayer}{nodelayer}
		\node [style=none] (0) at (-0.75, 0) {};
		\node [style=X phase dot] (1) at (0.25, 0) {$a\pi$};
	\end{pgfonlayer}
	\begin{pgfonlayer}{edgelayer}
		\draw (0.center) to (1);
	\end{pgfonlayer}
\end{tikzpicture}
 & \begin{tikzpicture}[baseline=-0.05]
	\begin{pgfonlayer}{nodelayer}
		\node [style=none] (0) at (-1.25, 0) {};
		\node [style=Z phase dot] (1) at (0.25, 0) {$(-1)^a\frac{\pi}{2}$};
	\end{pgfonlayer}
	\begin{pgfonlayer}{edgelayer}
		\draw (0.center) to (1);
	\end{pgfonlayer}
\end{tikzpicture}

  \end{tabular}
 \end{center}

\end{definition}

\begin{example}
	On the left-hand side, a graph state diagram and on the right-hand side an MBQC-form diagram with two inputs, two $XY$-measurements and one $YZ$-measurement, and two outputs (on the same underlying graph):
	\[
		\input{graph-state-ex.tikz} \qquad\qquad\qquad\qquad
		\input{MBQC-form-ex.tikz}
	\]
\end{example}

Via a straightforward extension of previous work in the context of extended gflow \cite[Section~2.2]{backens_there_2021}, we have a correspondence between MBQC-form ZX-diagrams and tuples $(\Gamma, \alpha)$ where $\Gamma = (G,I,O,\ld)$ is a \LOG and $\alpha:\comp{O}\to\mathbb{R}$ has the property that $\ld(v)\in\{X,Z\}\implies\alpha(v)\in\{0,\pi\}$ and $\ld(v)=Y\implies\alpha(v)\in\{\pm\frac{\pi}{2}\}$.
If there are vertices whose measurement angle is an integer multiple of $\frac\pi2$, multiple \LOG{}s (which differ only in the labelling) will correspond to the same ZX-diagram: \eg for a measurement effect \begin{tikzpicture}
	\begin{pgfonlayer}{nodelayer}
		\node [style=X phase dot] (0) at (0.75, 0) {$a\pi$};
		\node [style=none] (1) at (0, 0) {};
	\end{pgfonlayer}
	\begin{pgfonlayer}{edgelayer}
		\draw (1.center) to (0);
	\end{pgfonlayer}
\end{tikzpicture}
 with $a\in\{0,1\}$, the \LOG could have a measurement label of $YZ$ or $Z$.\footnote{In principle, it could even be $XZ$, although that would generally be represented as \begin{tikzpicture}
	\begin{pgfonlayer}{nodelayer}
		\node [style=Z phase dot] (0) at (0.75, 0) {$\frac\pi2$};
		\node [style=none] (1) at (0, 0) {};
		\node [style=X phase dot] (2) at (2.75, 0) {$(-1)^{a+1}\frac\pi2$};
	\end{pgfonlayer}
	\begin{pgfonlayer}{edgelayer}
		\draw (1.center) to (0);
		\draw (0) to (2);
	\end{pgfonlayer}
\end{tikzpicture}
.}

In this work we will usually prefer planar labels since we want to work with gflow.
Nevertheless, for any MBQC-form ZX-diagram, the unique \LOG with the maximum number of Pauli measurement labels is also useful: if this \LOG does not have Pauli flow, then no \LOG corresponding to the ZX-diagram can have any kind of flow.
This is because the correction conditions for Pauli measurements are subsets of those of the planes in which the Pauli measurement is contained.

\begin{definition}
 An MBQC-form ZX-diagram is said to have:
 \begin{itemize}
 	\item \emph{Pauli flow} if the corresponding \LOG with the maximum number of Pauli measurements has Pauli flow,
 	\item \emph{gflow} if there exists a corresponding \LOG (where all measurements are planar) which has gflow, and
 	\item \emph{extended causal flow} if there exists a corresponding \LOG (where all measurements are $XY$ or $YZ$) which has extended causal flow.
 \end{itemize}
\end{definition}

\section{Flow-preserving rewrite rules}
\label{s:flow-preserving-rules}

We first recap some known flow-preserving rules in Section~\ref{s:known-rules} and then add some new rules in Section~\ref{s:new-rules}.
Together, these rules allow us to derive many other known rewrite rules in Section~\ref{s:derived-rules}.
As transformations between two ZX-diagrams (or \LOG{}s) may be flow-preserving in only one direction, we denote rewrite rules with either one arrow `$\to$' or a pair of arrows `$\rightleftarrows$' depending on whether they are proved to be flow-preserving in one or both directions.

\subsection{Known flow-preserving rewrite rules}
\label{s:known-rules}

Several of the core flow-preserving rewrite rules employ the graph-theoretic operations of local complementation and pivoting.
Some of these rules modify the measurement effects on certain vertices, which means they cannot be applied in situations where those vertices may be outputs.
Other rules modify certain edges in the graph and thus cannot be applied to input or outputs (as the corresponding `dangling' wires would behave like an edge).
Nevertheless, these restrictions are not a problem as we can `unfuse' inputs and outputs as needed:

\begin{lemma}[{\cite[Propositions~4.1 and~4.2]{backens_there_2021}}]\label{lem:input-output-fusion}
 The following two operations on MBQC-form ZX-diagrams preserve interpretation and the existence of extended causal flow, gflow or Pauli flow in both directions, where in the first diagram the original vertex is an input and in the second diagram, the original vertex is an output and may or may not be an input (if it is, then the leftmost vertex in the right-hand diagram is an input):
 \begin{equation}\tag{IO}
  \input{rules/input-splitting-lhs.tikz} \quad\rightleftarrows\quad \input{rules/input-splitting-rhs.tikz}
   \qquad\qquad\qquad
  \input{rules/output-splitting-lhs.tikz} \quad\rightleftarrows\quad \input{rules/output-splitting-rhs.tikz}
 \end{equation}
 Even if the goal is preservation of Pauli flow, the newly-introduced vertices in the above diagrams can all be labelled $XY$.
\end{lemma}
\begin{proof}
 The original proofs are for extended gflow but they generalise straightforwardly to Pauli flow or restrict to extended causal flow.
 Similarly, the reverse directions of the above rules can straightforwardly be proved.

 For example, preservation of extended causal flow of the case with an input can be seen as follows.
 Label the vertices in the subdiagrams of interest as follows:
 \[
 	\input{rules/input-splitting-lhs-labelled.tikz} \quad\rightleftarrows\quad \input{rules/input-splitting-rhs-labelled.tikz}
 \]
 For the left-to-right direction, suppose the ZX-diagram in which the left-hand side is embedded corresponds to a labelled open graph $\Gamma = (G,I,O,\ld)$ with $G=(V,E)$ and which has extended causal flow $g$ with induced partial order $\prec_g$.
 Then the ZX-diagram in which the right-hand side is embedded corresponds to a labelled open graph $\Gamma' = (G',I',O,\ld')$, where $G' = (V\cup\{a,b\}, E\cup\{\{a,b\},\{b,c\}\})$, $I'=I\setminus\{c\}\cup\{a\}$, and $\ld'(a) = \ld'(b) = XY$ while $\ld'(v) = \ld(v)$ for all $v\in V\setminus O$.
 Define
 \[
 	g'(v) = \begin{cases}
 		b &\text{if } v = a \\
 		c &\text{if } v = b \\
 		g(v) &\text{otherwise,}
 	\end{cases}
 \]
 then conditions \ref{itm:cXY} and \ref{itm:cYZ} hold by inspection for all measured vertices.
 It remains to consider the partial order.
 Let $R_X, R_Z$ defined according to Definition~\ref{def:induced-relation} for $\Gamma$ and $g$, and let $R_X', R_Z'$ be the corresponding sets for $\Gamma'$ and $g'$; we have $R_Y=R_Y'=\emptyset$ as a labelled open graph which has (extended) causal flow cannot contain any $Y$-measurements.
 Then:
 \begin{align*}
 	R_X' &= R_X \cup \{(a,b), (b,c)\} \\
 	R_Z' &= R_Z \cup \{(a,c)\} \cup \{(b,v)\mid v\in N_G(c)\}
 \end{align*}
 In the new relationships, any vertices of $V$ appear only as successors.
 The new vertex $a$ appears only as a predecessor, and the same is nearly true for $b$, with the exception that $a$ precedes $b$.
 Thus $R_g^*$ being a strict partial order implies that $R_{g'}^*$ is also a strict partial order.
 Therefore $\Gamma'$ has extended causal flow.
 
 For the right-to-left direction, use the same definitions for labelled open graphs as before, only this time we start from $\Gamma'$ which is assumed to have an extended causal flow $g'$.
 Consider the restriction $g$ of $g'$ to only those vertices appearing in $\Gamma$.
 Then for each $v\in V\setminus O$, we have $g(v)\in V$.
 Indeed $g(v)\notin I$ as elements of $I'$ do not appear in correction sets and we must have had $g'(a)=b$, $g'(b)=c$ which implies that $g(v)\neq c$ for any $v\in V\setminus O$ as correction sets have to be unique by Remark~\ref{rem:unique-correction-sets}.
 Thus $g$ is valid correction function.
 Moreover, the relationship induced by $g$ on $\Gamma$ is just a sub-relation of that induced by $g'$ on $\Gamma'$, so it must be a strict partial order.
\end{proof}

\begin{definition}
 Let $G = (V, E)$ be a graph and $u\in V$. The \emph{local complementation} of $G$ about $u$ is the operation which maps $G$ to $G*u := (V, E \symd \{(b, c) \mid (b, u), (c, u) \in E \wedge b \neq c\})$.
 The \emph{pivot} of G about the edge $(u, v)\in E$ is the operation mapping $G$ to the graph $G \wedge uv := G * u * v * u$.
\end{definition}

\begin{lemma}[Local complementation {\cite[Lemma~3.1]{backens_there_2021}, \cite[Lemma D.14]{simmons_relating_2021}}]\label{lem:local-complementation-flow}
 Let $u$ be a non-input vertex in an MBQC-form ZX-calculus diagram and suppose none of the elements of $N_G[u]$ are outputs.
 Then a local complementation on the underlying graph in combination with the changes on the measurement effects resulting from merging with \begin{tikzpicture}
	\begin{pgfonlayer}{nodelayer}
		\node [style=X phase dot] (0) at (0, 0) {$-\frac{\pi}{2}$};
		\node [style=none] (1) at (-1, 0) {};
		\node [style=none] (2) at (1, 0) {};
	\end{pgfonlayer}
	\begin{pgfonlayer}{edgelayer}
		\draw (1.center) to (2.center);
	\end{pgfonlayer}
\end{tikzpicture}
 on $u$ and \begin{tikzpicture}
	\begin{pgfonlayer}{nodelayer}
		\node [style=Z phase dot] (0) at (0, 0) {$\frac{\pi}{2}$};
		\node [style=none] (1) at (-0.75, 0) {};
		\node [style=none] (2) at (0.75, 0) {};
	\end{pgfonlayer}
	\begin{pgfonlayer}{edgelayer}
		\draw (1.center) to (2.center);
	\end{pgfonlayer}
\end{tikzpicture}
 on neighbours of $u$ preserves the diagram interpretation and the existence of gflow or Pauli flow:
 \begin{equation}\tag{LC}
 	\input{rules/LC-lhs.tikz} \quad\to\quad \input{rules/LC-rhs.tikz}
 \end{equation}
\end{lemma}

Local complementations on their own preserve extended causal flow only in special cases, but they can be part of sequences of operations that preserve extended causal flow overall.
The local complementation rule is given as a directed rule because applying four successive local complementations about the same $u$ is the identity operation, so local complementations generate their own inverse.
The \emph{pivoting} operation is just a sequence of three local complementations about two neighbouring vertices, so flow-preservation is immediate.
The closed-form expression for the post-pivot gflow or Pauli flow was derived in \cite[Lemma~5.3]{backens_inserting_2025}.

\begin{lemma}[Pivoting {\cite[Lemma 4.5]{backens_there_2021}, \cite[Lemma D.20]{simmons_relating_2021}, \cite[Lemma~5.3]{backens_inserting_2025}}]\label{lem:pivot-Pauli-flow}
 Let $u,v$ be two adjacent non-input vertices in an MBQC-form ZX-calculus diagram and suppose none of the elements of $N_G(u)\cap N_G(v)$ are outputs.
 Then a pivot on the edge $uv$ -- equivalently, a sequence of local complementations about $u$, then $v$, then $u$ again -- preserves interpretation in combination with merging the resulting operators \begin{tikzpicture}
	\begin{pgfonlayer}{nodelayer}
		\node [style=none] (1) at (-0.75, 0) {};
		\node [style=none] (2) at (0.75, 0) {};
		\node [style=hadamard] (3) at (0, 0) {};
	\end{pgfonlayer}
	\begin{pgfonlayer}{edgelayer}
		\draw (1.center) to (2.center);
	\end{pgfonlayer}
\end{tikzpicture}
 on $u$ and $v$ and \begin{tikzpicture}
	\begin{pgfonlayer}{nodelayer}
		\node [style=Z phase dot] (0) at (0, 0) {$\pi$};
		\node [style=none] (1) at (-0.75, 0) {};
		\node [style=none] (2) at (0.75, 0) {};
	\end{pgfonlayer}
	\begin{pgfonlayer}{edgelayer}
		\draw (1.center) to (2.center);
	\end{pgfonlayer}
\end{tikzpicture}
 on joint neighbours of $u$ and $v$ into the measurement effects.
 It also preserves Pauli flow or gflow.
 \[
  \input{rules/pivot-1.tikz} \quad \to \quad \input{rules/pivot-2.tikz}
 \]
 If the original flow is $g$, then the flow after the pivot is given by
 \[
   g'(w) = \begin{cases}
   	g(w) &\text{if } u,v\notin\codd{G}{g(w)} \\
   	g(w)\symd\{u\} &\text{if } u\in\codd{G}{g(w)} \text{ and } v\notin\codd{G}{g(w)} \\
   	g(w)\symd\{v\} &\text{if } u\notin\codd{G}{g(w)} \text{ and } v\in\codd{G}{g(w)} \\
   	g(w)\symd\{u,v\} &\text{if } u,v\in\codd{G}{g(w)}
   \end{cases}
 \]
 for all $w\in\comp{O}$ and satisfies ${\prec_{g'}} = {\prec_g}$.
\end{lemma}

Even when gflow is preserved by an application of pivoting, extended causal flow need not be preserved.
We will use this lemma in a causal flow context only in situations where it is straightforward to manually check that the causal flow is preserved.

Similar to local complementations, pivoting provides its own inverse.
For the following operation, invertibility is more complicated.

\begin{lemma}[$Z$-like deletion {\cite[Lemma~3.4]{backens_there_2021}, \cite[Lemma D.7]{simmons_relating_2021}, \cite[Corollary~4.5]{backens_inserting_2025}}]\label{lem:Z-deletion-flow}
 Let $u$ be a vertex in an MBQC-form diagram whose measurement effect is .
 Then the ZX-diagram that results from deleting $u$, its measurement effect, and all incident edges and applying $Z^a$ to the measurement effects of all neighbours has the same interpretation as the original diagram.
 Moreover, it has Pauli flow if and only if the original diagram has Pauli flow, gflow if and only if the original diagram has gflow, and extended causal flow if and only if the original diagram has extended causal flow.
 \ctikzfig{rules/Z-delete}
\end{lemma}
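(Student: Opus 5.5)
Two things need proving: that the diagram is unchanged in meaning, and that each kind of flow survives the deletion in both directions.

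\textbf{Interpretation.} The measurement effect on $u$ is a $Z$-basis effect $\bra{a}$ up to scalar, i.e.\ a red $a\pi$ spider. Move it through $u$'s green spider and then through each Hadamard edge to a neighbour $w$. Using Eq.~\eqref{h} it becomes a green $a\pi$ phase on the wire into $w$, and Eq.~\eqref{c} copies it along the way. Merging these phases into the neighbours by Eq.~\eqref{s} deletes $u$ and applies $Z^a$ to each neighbour's effect. For $Y$-labelled neighbours this maps $\pm\frac{\pi}{2}$ to $\mp\frac{\pi}{2}$ when $a=1$, so labels are unchanged. Hence for any labelled open graph $\Gamma$ of the original diagram, the new diagram corresponds to $\Gamma'=(G\setminus u, I, O, \ld|)$. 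Here $u$ is neither an input nor an output: it carries a measurement effect and no input wire.

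\textbf{Flow, forward direction.} Label $u$ with $Z$ for Pauli flow, or with $YZ$ for gflow and extended causal flow. Given a flow $p$ on $\Gamma$, define $p'(v)=p(v)\setminus\{u\}$ for $v\neq u$.
\begin{itemize}
\item Removing $u$ from a set changes odd neighbourhoods only by $N_G(u)$. So on $G\setminus u$ we have $\odd{G\setminus u}{p'(v)}=\odd{G}{p(v)}\setminus\{u\}$: the edges incident to $u$ are deleted, so $u$'s contribution vanishes anyway.
\item The conditions $(\lambda.\cdot)$ at $v\neq u$ therefore hold unchanged, because membership of $v$ in $p(v)$ and in $\odd{}{p(v)}$ is unaffected.
\item Each pair in $R_{p'}$ is already in $R_p$, so $R^*_{p'}$ is a strict partial order.
\item For extended causal flow, $p(v)=\{u\}$ is impossible for $v\neq u$: $p(u)=\{u\}$ and correction sets are distinct by Remark~\ref{rem:unique-correction-sets}. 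So $|p'(v)|=1$ and $p'(v)$ is still a neighbour of $v$.
\end{itemize}

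\textbf{Flow, reverse direction.} Given a flow $p'$ on $\Gamma'$, set $p(u)=\{u\}$ and keep $p(v)=p'(v)$ otherwise. Now $\odd{G}{p(v)}$ may additionally contain $u$, and the new pairs in $R_p$ are:
\begin{itemize}
\item $(v,u)$, which is excluded from $R_Z$ because $\ld(u)\in\{Z,YZ\}$ (for the $YZ$ label it enters only through $R_Z$-type conditions, which the labels exclude);
\item $(u,w)$ for $w\in N_G(u)$, which arises from $R_Z$ via $\odd{G}{\{u\}}=N_G(u)$.
\end{itemize}
For the $YZ$ label, condition \ref{itm:YZ} needs $u\notin\odd{G}{\{u\}}$, which holds. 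The remaining question is whether $u$ could acquire a predecessor. It cannot: $u\notin p(v)$ for $v\neq u$, and $u$ lies only in odd neighbourhoods, which give no relation for $Z$ or $YZ$ labels. Hence $u$ is minimal in the order and no cycle can be created. For the $YZ$ label and extended causal flow, $p(u)=\{u\}$ satisfies \ref{itm:cYZ}.

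\textbf{Main obstacle.} The hard step is checking the $R_Z$ clause for $u$ when $u$ is labelled $YZ$. $R_Z$ excludes only successors labelled $Y$ or $Z$, so the case-split over labels must be checked carefully. Here $u\in\odd{G}{p(v)}$ would add the pair $(v,u)$ to $R_Z$, creating a potential cycle together with the pairs $(u,w)$. The fix is to replace $p(v)$ by $p(v)\symd\{u\}$ whenever $u\in\odd{G}{p(v)}$. This change never alters conditions at $v$, and it adds only relations $v\to N_G(u)$, which I would show do not close a cycle by ordering $u$ first.
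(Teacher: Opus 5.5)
Your interpretation argument is fine, and so are two of your flow arguments. The first is deletion for extended causal flow, where $p(v)=\{u\}$ for $v\neq u$ is indeed excluded by Remark~\ref{rem:unique-correction-sets}. The second is insertion for Pauli flow with $u$ labelled $Z$: $R_Z$ and $R_Y$ exclude $Z$-labelled successors and $u\notin p(v)$, so $u$ is minimal.

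The deletion step for Pauli flow and gflow has a gap, however. Nothing prevents $u\in p(v)$ for some $v\neq u$, and then $\odd{G\setminus u}{p(v)\setminus\{u\}}=(\odd{G}{p(v)}\symd N_G(u))\setminus\{u\}$, not $\odd{G}{p(v)}\setminus\{u\}$. So the conditions at $v$ can break. For instance, take $I=\{v\}$, $O=\{o_1,o_2\}$, edges $vu$, $vo_1$, $uo_2$, $\ld(v)=XY$, and $u$ labelled $YZ$ (or $Z$). Then $g(u)=\{u,o_1\}$, $g(v)=\{u\}$ is a valid flow with $v\prec u\prec o_1,o_2$, yet $g(v)\setminus\{u\}=\emptyset$ violates (G.XY) after deletion. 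The construction from the cited references (quoted in the proof of Lemma~\ref{lem:lc-delete}) instead sets $p'(v)=p(v)\symd p(u)$ whenever $u\in p(v)$, here giving $\{o_1\}$. It uses that $u\in p(v)$ forces $v\prec u$ via $R_X$. Alternatively, start from a focused flow: (FX) forbids a $Z$-, $YZ$- or $XZ$-labelled $u$ from lying in any other correction set, and then your construction works verbatim.

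The reverse direction for gflow and extended causal flow is where the proposal genuinely fails, and it cannot be repaired. With $\ld(u)=YZ$, every $v$ with $u\in\odd{G}{p(v)}$ gets $(v,u)\in R_Z$, so your claim that $u$ has no predecessors is false. Your proposed fix does not help: since $\odd{G}{\{u\}}=N_G(u)\not\ni u$, $u$ stays in $\odd{G}{p(v)\symd\{u\}}$, and $u\in p(v)$ now additionally puts $(v,u)$ into $R_X$.

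In fact the claim itself fails. Let $D'$ be the single edge $i - o$, with $i$ an input measured $XY$ at a non-Pauli angle and $o$ an output; it has causal flow $i\mapsto o$. Let $D$ add $u$ adjacent to both $i$ and $o$. With $u$ labelled $YZ$, (G.YZ) forces $g(u)=\{u\}$, hence $u\prec i$. Meanwhile both admissible correction sets $\{o\}$ and $\{u\}$ for $i$ give $i\prec u$. So $D$ has no extended causal flow and no gflow with $u$ in the $YZ$-plane; a gflow appears only after relabelling $u$ as $XZ$, which your argument never considers.

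This is exactly the caveat the paper adds right after the lemma. Deletion preserves flow whatever the label of $u$, but insertion is unconditional only when the new vertex is labelled $Z$, i.e.\ for Pauli flow. A $YZ$-insertion needs the conditions of Theorem~\ref{thm:YZ-insertion-preserves-gflow} (cf.\ Corollary~\ref{cor:YZ-insertion}). The converse for gflow and extended causal flow should therefore be made conditional, not proved outright.
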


Flow preservation for deletion of a vertex measured as  does not depend on what measurement label that vertex was assigned -- whether $Z$, $XZ$, or $YZ$.
Yet for the opposite direction, we have to assume that the new vertex is labelled $Z$, otherwise flow need not be preserved.
So while the previous two lemmas hold for multiple types of flow, the next lemma holds in full generality only for the Pauli flow case; further restrictions are needed to ensure the preservation of gflow.

\begin{definition}\label{def:insert-Z}
	Let $(G,I,O,\ld)$ be a \LOG, let $\ell\in\{Z, YZ, XZ\}$, and let $N\sse V$.
	The \emph{\LOG that results from inserting a new $\ell$-measured vertex $z$ with neighbours $N$} is $(G',I,O,\ld')$, where $G' = (V\uplus\{z\}, E\uplus\{\{z,v\}\mid v\in N\}\})$ and
	\[
	\ld'(v) = \begin{cases}
		\ell &\text{if } v=z \\
		\ld(v) &\text{otherwise.}
	\end{cases}
	\]
\end{definition}

\begin{lemma}[$Z$-insertion {\cite[Proposition~4.1]{mcelvanney_complete_2023}}]\label{lem:Z-insertion-flow}
 Consider an MBQC-form ZX-diagram which has Pauli flow, and insert a new $Z$-measured vertex with arbitrary neighbours.
 If the new vertex has phase $\pi$, apply $Z$ to the measurement effects of all neighbours.
 The resulting diagram has the same interpretation as the original, and it again has Pauli flow:
 \ctikzfig{rules/Z-insert}
\end{lemma}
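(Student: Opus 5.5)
The statement has two parts: equality of interpretations, and preservation of Pauli flow. I would handle them separately and prove Pauli flow preservation by extending a given flow explicitly.

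\textbf{Interpretation.} A $Z$-measured vertex $z$ with phase $a\pi$ is a green spider connected by Hadamard edges to its neighbours $N$. Its measurement effect is a red spider with phase $a\pi$. Using \eqref{h}, that red spider becomes a green effect with a Hadamard. Fused with $z$ via \eqref{s}, this gives a red $a\pi$ spider attached by plain wires to each neighbour. By the copy rule \eqref{c}, together with the $\pi$-copy variant, this splits into red $a\pi$ states on each edge. Each such state passes through a Hadamard to become a green $a\pi$ phase, which fuses into the neighbour's spider. This is exactly a $Z^a$ applied to each neighbour's measurement effect, up to a scalar. So this direction is just the converse of the computation underlying Lemma~\ref{lem:Z-deletion-flow}.

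\textbf{Flow.} Let $(G,I,O,\ld)$ have Pauli flow $p$ with induced order $\prec_p$. The new open graph $\Gamma'$ has the fresh vertex $z$ with label $\ld'(z)=Z$ and neighbourhood $N$, as in Definition~\ref{def:insert-Z}. The neighbours' measurement effects are only altered by $Z^a$, so their labels are unchanged. I would define
\[
p'(z)=\{z\},\qquad p'(v)=p(v)\ \text{for } v\neq z.
\]

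Next I check the correctness conditions.
\begin{itemize}
\item For $z$: condition \ref{itm:Z} requires only $z\in p'(z)$, which holds. Note $z\notin I$, so $\{z\}\subseteq\comp{I}$.
\item For old vertices: $p'(v)=p(v)$ does not contain $z$. Hence $p'(v)\cap V$ and $\odd{G'}{p'(v)}\cap V$ agree with those in $G$, since adding $z$ only changes odd neighbourhoods by possibly adding $z$ itself. So every old condition \ref{itm:XY}--\ref{itm:Y} still holds for $v$.
\end{itemize}

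Then I check the relation $R_{p'}$.
\begin{itemize}
\item Pairs $(u,v)$ with $u,v\in V$ are exactly those in $R_p$.
\item $z$ may lie in $\odd{G'}{p(v)}$, but $\ld'(z)=Z$ excludes $z$ from both $R_Z$ and $R_Y$. Since $z\notin p(v)$, it does not appear in $R_X$ either. So no pair $(v,z)$ arises.
\item For $u=z$: $p'(z)\setminus\{z\}=\emptyset$, so $z$ contributes nothing to $R_X$. The set $\odd{G'}{\{z\}}=N$ could give pairs $(z,w)$ for $w\in N$ with $\ld(w)\notin\{Y,Z\}$, adding $z$ as a predecessor of some old vertices. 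Similarly $\codd{G'}{\{z\}}$ could give pairs $(z,w)$ with $w\in N$ and $\ld(w)=Y$.
\end{itemize}
In every case $z$ appears only as a predecessor and never as a successor. Therefore no cycle can pass through $z$, and $R_{p'}^*$ is a strict partial order because $R_p^*$ is.

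Two edge cases need care. Neighbours of $z$ may be outputs or inputs. Outputs are fine, since they are allowed to appear in odd neighbourhoods. Inputs are also fine, because they are never in correction sets and $z$'s own correction set is just $\{z\}$.

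The main subtlety is the labelling convention. The definition of Pauli flow for diagrams uses the open graph with the maximum number of Pauli labels. Inserting $z$ with label $Z$ fits that convention, and the neighbours' labels are unchanged because $Z^a$ preserves Pauli-ness and the measurement planes. I expect this bookkeeping to be the only real point of care; the rest is routine.
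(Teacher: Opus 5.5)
Your proof is correct and is the standard argument behind the cited result. Setting $p'(z)=\{z\}$ and $p'=p$ elsewhere makes $z$ appear only as a predecessor in $R_{p'}$, since its label $Z$ exempts it from $R_Z$ and $R_Y$ and it lies in no other correction set; the interpretation is just Lemma~\ref{lem:Z-deletion-flow} read backwards. One slip in your ZX derivation is that the colours are swapped: colour-changing $z$ gives a phase-free red spider with plain wires to the neighbours and a green $a\pi$ effect, and it is this green state that \eqref{c} copies and fuses directly into the neighbours (the copy rule does not apply to a multi-legged red $a\pi$ spider, and no Hadamards remain on those wires).
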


$Z$-deletion and $Z$-insertion correspond to the two directions of the ZX-calculus copy rule.
These two rules together with local complementations are complete for flow-preserving transformations on patterns where all measurements are Pauli \cite{mcelvanney_complete_2023}.

The $Z$-insertion rule generalises to $XZ$- and $YZ$-measurements, though in the latter case the proposed neighbours of the new vertex and the existing Pauli flow, gflow, or extended causal flow must satisfy certain compatibility conditions.
On the other hand, as $XZ$ and $YZ$ are planar measurements, this type of insertion can preserve gflow, whereas the insertion of a Pauli measurement by definition leads to a Pauli flow.

We will formalise the new rule using the language of labelled open graphs and give a definition that is specialised to $YZ$-insertions into gflow as it will only be required in that context.
For the general versions see Theorems~4.3 and~5.2 of Ref.~\cite{backens_inserting_2025}.

\begin{theorem}[adapted from {\cite[Theorem~4.3]{backens_inserting_2025}}]\label{thm:YZ-insertion-preserves-gflow}
 Let $\Gamma = (G,I,O,\ld)$ be a \LOG and let $N\sse V$.
 Suppose $\Gamma' = (G',I,O,\ld')$ is the \LOG that results from inserting a new $YZ$-measured vertex $z$ with neighbours $N$ as in Definition~\ref{def:insert-Z}.
 Then $\Gamma'$ has gflow if and only if there exist a gflow $g$ on $\Gamma$ and a set $C\sse V\setminus I$ such that the following hold:
 \begin{itemize}
  \item The two sets $N$ and $C$ satisfy $\abs{N\cap C}\equiv 0 \bmod 2$.
  \item Define the following two sets in $G$:
   \begin{align}
    \Pred(N) &:= \{u\in V\setminus O : \abs{g(u)\cap N}\equiv 1 \bmod 2\} \label{eq:pred} \\
    \Succ(N,C) &:= C\cup (N\symd\odd{G}{C}), \label{eq:succ}
   \end{align}
   then all $w\in\Pred(N)$ and all $v\in\Succ(N,C)$ satisfy $\neg(v\prec_g w \vee v = w)$.
 \end{itemize}
 The conditions for the insertion of an $XZ$-measurement are almost the same as the above, the only difference is that $\abs{N\cap C}\equiv 1 \bmod 2$.
\end{theorem}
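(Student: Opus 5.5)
We prove both directions by explicit construction of correction functions, reading off the partial-order condition from the relation of Definition~\ref{def:induced-relation}. In $\Gamma'$ the new vertex $z$ is $YZ$-measured, so by \ref{itm:gYZ} its correction set $g'(z)$ must contain $z$ and satisfy $z\notin\odd{G'}{g'(z)}$. Write $g'(z)=\{z\}\cup C$ with $C\sse V\setminus I$. Then $z\notin\odd{G'}{g'(z)}$ says exactly that $\abs{N\cap C}\equiv 0\bmod 2$; for $XZ$, where $z\in\odd{G'}{g'(z)}$ is required, the parity flips to $1$. This explains the first bullet. Moreover $\odd{G'}{g'(z)}\cap V=N\symd\odd{G}{C}$, so the set of vertices that $z$ must precede under $R_{g'}$ is precisely $\Succ(N,C)$, up to removing those elements whose labels exempt them. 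Conversely, for an old vertex $u$, the vertex $z$ lies in $\odd{G'}{g'(u)}$ iff $\abs{g'(u)\cap N}$ is odd. Since $z$ is $YZ$-labelled it is not exempt from $R_Z$, so this forces $u\prec z$. For $u$ with $z\notin g'(u)$ this is exactly membership of $u$ in $\Pred(N)$ computed with $g'$ restricted to $V$.

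For the ``if'' direction, given $g$ and $C$, I define $g'(u)=g(u)$ for $u\in V\setminus O$ and $g'(z)=C\cup\{z\}$. The planar conditions for old vertices are unchanged: adding $z$ changes odd neighbourhoods only at $z$, and $z\notin g(u)$. Condition \ref{itm:gYZ} holds for $z$ by the parity condition. The new relation is $R_g\cup\{(w,z): w\in\Pred(N)\}\cup\{(z,v): v\in\Succ'\}$, where $\Succ'\sse\Succ(N,C)\setminus\{z\}$; note that $z$ itself is in $g'(z)$ but is excluded because $(u,u)$ pairs are dropped. A cycle through $z$ would yield $v\preceq_g w$ with $v\in\Succ$ and $w\in\Pred$, and the hypothesis $\neg(v\prec_g w\vee v=w)$ forbids exactly this. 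Cycles avoiding $z$ are already excluded because $\prec_g$ is a strict partial order. Hence $R_{g'}^*$ is a strict partial order.

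For the ``only if'' direction, take a gflow $g'$ on $\Gamma'$ and set $C:=g'(z)\setminus\{z\}$. For old vertices $u$ with $z\in g'(u)$, I replace $g'(u)$ by $g'(u)\symd g'(z)$. This standard move removes $z$ from the set while preserving $z\notin\odd{}{\cdot}$ up to the correct parity. Since $u\prec z$ (because $z\in g'(u)$ and the $YZ$ label is not exempt from $R_X$), the new correction set only introduces successors of $z$, which are already successors of $u$. Thus the order is preserved, and transitivity handles the rest. Call the result $g''$ and let $g$ be $g''$ restricted to $V$. It is a gflow on $\Gamma$: the conditions for old vertices depend only on $G$ once $z\notin g(u)$, and the induced relation is a subrelation of $R_{g''}^*$ restricted to $V$. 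The last claim follows by checking that deleting $z$ from the correction sets removes edges only to $z$. Finally, $w\in\Pred(N)$ gives $w\prec z$ and $v\in\Succ(N,C)$ gives $z\prec v$ or $v=z$ (the latter excluded since $v\in V$), so $w\prec v$. Acyclicity then yields $\neg(v\prec_g w\vee v=w)$, using that $\prec_g$ is contained in $\prec_{g''}$.

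The main obstacle is the bookkeeping of label exemptions in $R_X,R_Z$ together with the symmetric-difference adjustment. One must check carefully that $\Succ(N,C)$ as stated, with no label filtering, is still correct: in the gflow setting all labels are planar, so $R_X$ exempts only $XY$-labelled vertices lying in $C$, and $R_Z$ exempts only $YZ$-labelled vertices lying in the odd neighbourhood. The stated condition appears stronger than necessary in that respect. For the ``if'' direction this is harmless. For the ``only if'' direction, I would first pass to a focused gflow on $\Gamma'$, where the focusing conditions ensure that the exempted vertices never occur in these sets, so the unfiltered $\Succ$ is exact. Verifying that focusing is compatible with the $\symd g'(z)$ adjustment is the delicate step.
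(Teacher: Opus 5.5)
Your argument in the first three paragraphs is correct and follows the paper's route. The paper only sketches the `if' direction, and does so exactly as you do: extend $g$ by $g'(z)=C\cup\{z\}$, use the parity condition to get \ref{itm:gYZ} at $z$, and note that the only new relations are $\Pred(N)\times\{z\}$ and $\{z\}\times\Succ(N,C)$. It defers the converse to the cited source. Your converse, via $g'(u)\mapsto g'(u)\symd g'(z)$ whenever $z\in g'(u)$, is the standard $Z$-like-deletion correction and it works. The precise reason the planar condition at such a $u$ survives is that $u\prec_{g'} z$ forces $u\notin g'(z)\cup\odd{G'}{g'(z)}$. Your phrase about `the correct parity' does not quite say this, so state it explicitly.

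Your final paragraph, however, rests on a misreading of Definition~\ref{def:induced-relation} and should be dropped. The exempted label sets $\{X,Y\}$ and $\{Y,Z\}$ there are sets of Pauli labels, not the planar labels $XY$ and $YZ$. As the remark after that definition states, when all labels are planar the induced relation is completely unfiltered. So the relation induced on $\Gamma'$ contains $(w,z)$ for every $w\in\Pred(N)$ and $(z,v)$ for every $v\in\Succ(N,C)$. Your hedges `up to removing those elements whose labels exempt them' and `$\Succ'\sse\Succ(N,C)\setminus\{z\}$' can therefore be replaced by equalities, and your `only if' argument is complete with no focusing step at all.

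The misreading also contradicts your own first paragraph. There you correctly use that the $YZ$-vertex $z$ is not exempt from $R_Z$, and that is exactly what yields $w\prec z$ for $w\in\Pred(N)$. Moreover, the proposed remedy would backfire under your reading: \ref{itm:focus-X} forces the non-output elements of $g'(z)\setminus\{z\}$ to be $XY$-labelled, which are precisely the vertices you believed to be exempt from $R_X$.
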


In the above theorem, the first property states that \ref{itm:gYZ} holds for $C\cup\{z\}$ relative to vertex $z$.
The second property ensures that if $g'$ is the extension of $g$ to $\Gamma'$ given by
\begin{equation}\label{eq:yz-insert-correction-function}
	g'(u) = \begin{cases}
		C\cup\{z\} &\text{if } u = z \\
		g(u) &\text{otherwise,}
	\end{cases}
\end{equation}
then the relation induced by $g'$ (cf.\ Definition~\ref{def:induced-relation}) is a partial order.
That then implies that $g'$ is a gflow.
There is also a characterisation of when $YZ$-insertion is extended causal flow-preserving \cite[Theorem~3.7]{backens_inserting_2025}, the details of which will not be needed here.

\subsection{New fundamental flow-preserving rewrite rules}
\label{s:new-rules}

In the complete rule set for the universal ZX-calculus in Figure~\ref{fig:ZXaxioms}, most of the rewrite rules fall within the stabiliser fragment.
Hence their flow-preserving equivalents can already be derived as local complementation, $Z$-insertion and $Z$-deletion are complete for flow-preserving rewriting of stabiliser diagrams \cite[Theorem~4.2]{mcelvanney_complete_2023}.
The exceptions are the spider rule~(s1) and the generalised Euler decomposition rule~(e).
(The colour change rule (h) is not strictly within the stabiliser fragment but can be derived for all values of the phase label $\alpha$ as the label just remains unchanged.)
In the flow-preserving setting, we can use $Z$-insertion together with pivoting to merge or split the neighbourhood of a spider, so we will just need an additional rule for the splitting or addition of phases in the spider rule.
Hence it only remains to consider how the generalised Euler decomposition rule interacts with flow properties.
For technical reasons linked to the circuit completeness result on which we draw, we will also use a `spider nest' rule that allows certain combinations of phase gadgets to be removed as they apply a trivial phase overall.

\begin{lemma}[Euler rule]\label{lem:Euler-rule}
	Suppose $D$ is an MBQC-form ZX-diagram that contains two adjacent $XY$-measured vertices $a,b$.
	Let $D'$ be the diagram that replaces the edge between these two vertices by a path diagram containing three new $XY$-measurements, whose phase angles $\beta_1,\beta_2\beta_3$ are given below as a function of the phases $\alpha_1,\alpha_2$ of $a$ and $b$.
 	Then $D$ and $D'$ have the same interpretation.
    Moreover, $D'$ has extended causal flow if $D$ has an extended causal flow $g$ such that $g(a) = \{b\}$ or $g(b)=\{a\}$:
 \begin{equation}
  \input{rules/Euler-prime-lhs.tikz} \quad\to\quad \input{rules/Euler-prime-rhs.tikz}
 \end{equation}
where, for any angles $\alpha_1,\alpha_2\in\R$ the angles $\beta_1,\beta_2,\beta_3$ are restricted to $[0,2\pi)$ and are computed as follows:
\begin{gather*}
  z\defeq -\sin\left(\frac{\alpha_1+\alpha_2}{2}\right)+i\cos\left(\frac{\alpha_1-\alpha_2}{2}\right) \\
  z'\defeq \cos\left(\frac{\alpha_1+\alpha_2}{2}\right)-i\sin\left(\frac{\alpha_1-\alpha_2}{2}\right)
\end{gather*}
\begin{itemize}
  \item If $z'=0$ then  $\beta_1\defeq2\arg(z)$, $\beta_2\defeq0$ and $\beta_3\defeq0$.
  \item If $z=0$ then  $\beta_1\defeq2\arg(z')$, $\beta_2\defeq\pi$ and $\beta_3\defeq0$.
  \item Otherwise  $\beta_1\defeq\arg(z)+\arg(z')$, $\beta_2\defeq2\arg\left(i+\left\lvert\frac{z}{z'}\right\rvert\right)$ and $\beta_3\defeq\arg(z)-\arg(z')$.
\end{itemize}
\end{lemma}
\begin{proof}
 Without loss of generality assume $g(a)=\{b\}$, the other case is symmetric.
 Define
 \[
  g'(v) :=
  \begin{cases}
   \{x\} &\text{if } v = a \\
   \{y\} &\text{if } v = x \\
   \{z\} &\text{if } v = y \\
   \{b\} &\text{if } v = z \\
   g(v) &\text{otherwise.}
  \end{cases}
 \]
 Then \ref{itm:cXY} and \ref{itm:cYZ} are satisfied for all non-output vertices.
 It remains to show that $g'$ induces a strict partial order.

 Denote $R_X,R_Z$ the relations induced by the original gflow $g$ according to \eqref{eq:prec-X} and \eqref{eq:prec-Z}, and denote $R_X',R_Z'$ the corresponding relations induced by $g'$.
 ($R_Y$ is empty in both cases, as there are no $Y$ measurements.)
 The correction set of $a$ is the only original correction set to change, therefore:
 \begin{align*}
  R_X' &= (R_X\setminus\{(a,b)\}) \cup \{(a,x), (x,y), (y,z), (z,b)\} \\
  R_Z' &= (R_Z\setminus\{(a,v)\mid v\in N_G(b)\setminus\{a\}\}) \cup \{(z,v)\mid v\in N_G(b)\setminus\{a\}\} \cup \{(u,x)\mid a\in g(u)\} \\ &\qquad \cup \{(a,y), (x,z), (y,b)\}
 \end{align*}
 Note that in $\{(u,x)\mid a\in g(u)\}$, such a $u$ may or may not exist.

 The transitive closure $R_{g'}^* = (R_X'\cup R_Z')^*$ is equal to $R_{g}^*$ with the new vertices $x,y,z$ inserted between $a$ and $b$.
 To see this, indicate a relationship $v R_{g'} w$ by an arrow from $v$ to $w$; then, ignoring vertices whose relationships remain entirely unchanged, we have the following diagram of $R_{g'}$:
 \ctikzfig{euler-order-causal}
 Here, the dashed arrow denotes a relationship $a R_Z v$ in the original partial order, which is removed and replaced by the relationship $z R_{Z'} v$.
 Thus $R_{g'}^*$ is a strict partial order and $g'$ is an extended causal flow.
\end{proof}

\begin{lemma}[Phase fusion]\label{lem:phase-fusion}
	Suppose $D$ is an MBQC-form diagram that contains a $YZ$-measured vertex with a single $XY$-measured neighbour, and let $D'$ be the same diagram with the phase angles redistributed arbitrarily between the two measurements: i.e.\ if the initial angles are $\alpha$ and $\beta$, the new angles may be $\alpha'$ and $\alpha-\alpha'+\beta$ for any $\alpha'$.
	Then the two diagrams represent the same linear map and $D'$ has Pauli flow if and only if $D$ has Pauli flow.
	The same holds for gflow or extended causal flow.
	\begin{equation}
		\input{rules/phase-fusion-lhs.tikz} \quad\rightleftarrows\quad \input{rules/phase-fusion-rhs.tikz} \tag{PF}
	\end{equation}
\end{lemma}
\begin{proof}
	The two diagrams have the same structure, so it is immediate that any type of flow is preserved in both directions.
	The preservation of the interpretation is straightforward using standard ZX-calculus rewrite rules.
\end{proof}

Finally, we will also require a `spider nest' or `phase gadget' rule, which allows the removal of a certain pattern of phase gadgets that together implement a trivial transformation.
In this case, we will only need the deletion direction of the rule.

\begin{definition}
	A \emph{spider nest} on $n$ qubits consists of a family of phase gadgets such that each phase gadget is connected to a different non-empty subset of the $n$ qubits.
	Identifying the set of qubits with $[n]:=\{1,2,\ldots,n\}$, we may associate each phase gadget with an $n$-bit string where a 1 at position $k$ means the gadget is connected to qubit $k$ and a 0 at position $k$ means the gadget is not connected to qubit $k$.
	The phase angle of the phase gadget associated with bit string $s$ (if it exists) is denoted $\alpha_s$.
\end{definition}

There are $2^n-1$ non-empty subsets of a set of size $n$, hence this is the maximum number of phase gadgets in a spider nest.
	
\begin{definition}\label{def:full-spider-nest}
	The \emph{full spider nest on $n$ qubits with phase $\varphi$} consists of the $2^n-1$ phase gadgets with phases $\alpha_s = (-1)^{\abs{s}-1} \frac{\varphi}{2^{n-1}}$ for all $s\in\{0,1\}^n\setminus\{00\ldots 0\}$, where $\abs{s}$ denotes the Hamming weight, i.e.\ the number of 1s in the bit string $s$.
	Graphically,
	\ctikzfig{phase-gadgets/gadget-arity-n}
\end{definition}

\begin{example}\label{ex:spider-nests}
	The full spider nests on one, two, and three qubits, each with phase $2\pi$, are:
	\[
		\input{phase-gadgets/gadget-arity-1.tikz} \qquad\qquad\qquad
		\input{phase-gadgets/gadget-arity-2.tikz} \qquad\qquad\qquad
		\input{phase-gadgets/gadget-arity-3.tikz}
	\]
\end{example}

\begin{lemma}[Spider nest rule]\label{lem:spider-nest}
	Suppose $D$ is an MBQC-form diagram that contains a full spider nest on $n$ qubits with phase $2\pi$ and let $D'$ be the diagram without that spider nest.
	Then the two diagrams represent the same linear operation and $D'$ has extended causal flow whenever $D$ has extended causal flow.
	Similarly, $D'$ has gflow whenever $D$ has gflow, and $D'$ has Pauli flow whenever $D$ has Pauli flow.
	Graphically, with $\alpha_s = (-1)^{\abs{s}-1} \frac{2\pi}{2^{n-1}}$ for all $s\in\{0,1\}^n\setminus\{00\ldots 0\}$:
	\begin{equation}
		\input{phase-gadgets/gadget-arity-n.tikz}
		\quad\to\quad \input{phase-gadgets/gadget-arity-n-deleted.tikz} \tag{SN}
	\end{equation}
\end{lemma}
\begin{proof}
	The flow preserving aspect of this rule follows from $Z$-like deletion.
	The fact that the two diagrams represent the same linear operation follows from the Fourier transform relationship between phase gadgets and multi-controlled phase gates: the full spider nest on $n$ qubits corresponds to a single $n$-qubit controlled phase gate of phase $2\pi$, cf.~\cite[Proposition~3.6]{kuijpers_graphical_2019}, and this multi-controlled phase gate is trivial.
\end{proof}

\subsection{Derived rewrite rules}
\label{s:derived-rules}

We now have all the flow-preserving rewrite rules we will need; they are summarised in Figure~\ref{fig:flow-preserving}.
Rules \eqref{eq:PF} and \eqref{eq:LC} are stated as directed rules because they provide their own inverses: in the former case this is by choosing $\alpha'$ suitably, in the latter case it is because repeating the rule four times implements an overall identity transformation.

Note that \eqref{eq:EU-ZX} has been proved specifically in the context of (extended) causal flow, which indeed is the only situation where it will be needed.
Conversely, rule \eqref{eq:LC} preserves extended causal flow only in special cases, namely if the top vertex is $YZ$-measured and the other vertices are $XY$-measured, meaning the measurement labels do not change.
Yet \eqref{eq:LC} can also be used as part of more general sequences of transformations that preserve extended causal flow globally, albeit not at each step.

\begin{lemma}\label{lem:LC-causal}
	Suppose $(G,I,O,\ld)$ has extended causal flow $g$, $z\in\comp{O}$ is $YZ$-measured, and all its neighbours are $XY$-measured (not outputs).
	Then applying \eqref{eq:LC} with $z$ as the top vertex preserves extended causal flow.
\end{lemma}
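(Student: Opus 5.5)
The plan is to keep the correction function unchanged, $g' := g$, and to check that $g$ is still an extended causal flow on the transformed \LOG. First I would record what \eqref{eq:LC} does to the labels. Local complementation about $z$ maps an $XY$ label to $XZ$ and vice versa on $z$, and fixes $YZ$. On neighbours of $z$ it fixes $XY$ and swaps $XZ\leftrightarrow YZ$. Since $z$ is $YZ$-measured and all its neighbours are $XY$-measured, no label changes. Only the edges between pairs of vertices of $N_G(z)$ are toggled; in particular $N_{G*z}(z)=N_G(z)$. So condition \ref{itm:cYZ} still holds for $z$, since $g(z)=\{z\}$, and for every other $YZ$ vertex. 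For \ref{itm:cXY} the only danger is an $XY$ vertex $v$ with both $v$ and $g(v)$ in $N_G(z)$, because that edge would be deleted.

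The key observation is a structural fact about the original flow. Since $g(z)=\{z\}$, we have $\odd{G}{g(z)}=N_G(z)$, and all these vertices are $XY$-measured. Hence $z\,R_Z\,w$ for every $w\in N_G(z)$, i.e.\ $z\prec_g w$. Now suppose $v\in N_G(z)$ and $g(v)=\{w\}$ with $w\in N_G(z)$. Then $z\in\odd{G}{g(v)}$, and since $\ld(z)=YZ$ is not excluded in \eqref{eq:prec-Z}, we get $v\prec_g z\prec_g v$, which is a contradiction. The case $w=z$ is excluded the same way, via $R_X$. So the corrector of a neighbour of $z$ is never in $N_G[z]$. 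Therefore every edge $\{v,g(v)\}$ survives the local complementation, and \ref{itm:cXY} holds in $G*z$.

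The remaining, and main, step is the partial order. The relation $R_X$ is unchanged, because it depends only on the correction sets and the labels. The relation $R_Z$ changes only through $\odd{G*z}{\{w\}}=N_{G*z}(w)$ for correctors $w$. If $w\notin N_G(z)$ (including $w=z$), this neighbourhood is unchanged. If $g(u)=\{w\}$ with $w\in N_G(z)$, then $N_{G*z}(w)=N_G(w)\symd(N_G(z)\setminus\{w\})$, so any new successors of $u$ lie in $N_G(z)$. In that case $z\in N_G(w)=\odd{G}{g(u)}$ and $u\neq z$, so $u\prec_g z$. Combined with $z\prec_g x$ for all $x\in N_G(z)$, every newly created pair $(u,x)$ already lies in $\prec_g$. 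Hence $R_{g'}\subseteq R_g^*$, so $R_{g'}^*\subseteq\prec_g$ is a strict partial order, and $g$ is an extended causal flow of the new \LOG.

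I expect the main care to be needed in the bookkeeping of this last step, making sure that no new pair escapes the inclusion. The edge cases to check explicitly are $u=z$, $w=z$, and vertices $x$ in $N_G(z)$ that coincide with $u$ itself, which are excluded from the relation by definition.
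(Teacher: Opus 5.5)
Your proof is correct and follows essentially the same route as the paper. You keep $g$ unchanged, observe that no measurement labels change, and use the two facts $z\prec_g v$ for all $v\in N_G(z)$ and $u\prec_g z$ whenever $g(u)\in N_G(z)$, both to rule out the deletion of a flow edge and to control the new $R_Z$ pairs. Your packaging of the last step, that every new pair $(u,x)$ already satisfies $u\prec_g z\prec_g x$ and hence $R_{g'}\subseteq{\prec_g}$, is a slightly tighter version of the paper's argument that no new edge can create a cycle.
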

\begin{proof}
	We necessarily have $z\prec_g v$ for all $v\in N_G(z)$ and we have $u\prec_g z$ for all $u\in\comp{O}$ such that $g(u)\in N_G(z)$.
	Denote $g^{-1}(N_G(z)) := \{u\in\comp{O}\mid g(u)\in N_G(z)\}$ the set of predecessors of neighbours of $z$.
	Then for all $v\in N_G(z)$ and for all $u\in g^{-1}(N_G(z))$, we have $\neg(v\prec_g u \vee v = u)$, cf.\ also \cite[Theorem~3.7]{backens_inserting_2025}.
	In particular, for all $w\in N_G(z)$ we have $g(w)\notin N_G(z)$.
	
	The local complementation about $z$ will not change any measurement labels.
	Moreover, suppose the local complementation removes an edge between $v,v'\in N_G(z)$.
	This cannot cause problems with the partial order but it could potentially break \ref{itm:cXY} if $g(v)=\{v'\}$ (or conversely).
	Yet we noted that if $w\in N_G(z)$, then $g(w)\notin N_G(z)$, so there is no problem.
	Conversely, suppose the local complementation introduces an edge between $v,v'\in N_G(z)$.
	This can cause problems with the partial order if there exists $u$ such that $g(u)=\{v\}$ and $v'\prec_c u$ because in this situation the new edge would imply $u\prec_g v'$.
	Yet this $u$ would be in $g^{-1}(N_G(z))$ and we know $\neg(v'\prec_g u \vee v' = u)$, so again there is no problem.
	
	Thus extended causal flow is preserved.
\end{proof}

We will now derive some more complex transformations that follow from the existing rules, as well as noting some particularly useful cases of those rules that apply only conditionally.

\begin{figure}
	\fbox{\begin{minipage}{.985\textwidth}
			\vspace{-.5em}
			\begin{subfigure}[c]{.46\textwidth}
				\begin{equation}\label{eq:IO}\tag{IO}
					\begin{aligned}
						\input{rules/input-splitting-lhs.tikz} \quad&\rightleftarrows\quad \input{rules/input-splitting-rhs.tikz} \\
						\input{rules/output-splitting-lhs.tikz} \quad&\rightleftarrows\quad \input{rules/output-splitting-rhs.tikz}
					\end{aligned}
				\end{equation}
			\end{subfigure}
			\begin{subfigure}[c]{0.52\textwidth}
				\begin{align}\label{eq:PF}\tag{PF}
					\input{rules/phase-fusion-lhs.tikz} \quad\to\quad \input{rules/phase-fusion-rhs.tikz}
				\end{align}
			\end{subfigure}
			\vspace{-.5em}
			
			\begin{subfigure}[c]{0.46\textwidth}
				\begin{align}\label{eq:ZL}\tag{ZL}
					\input{rules/Z-like-rhs.tikz} \quad\underset{*}{\rightleftarrows}\quad \input{rules/Z-like-lhs.tikz}
				\end{align}
			\end{subfigure}
			\begin{subfigure}[c]{0.52\textwidth}
				\begin{align}\label{eq:LC}\tag{LC}
					\input{rules/LC-lhs.tikz} \quad\overset{*}{\to}\quad \input{rules/LC-rhs.tikz}
				\end{align}
			\end{subfigure}
			
			\begin{subfigure}[c]{0.46\textwidth}
				\begin{align}\label{eq:EU-ZX}\tag{E\textsubscript{ZX}}
					\input{rules/Euler-display-lhs.tikz} \quad\overset{*}{\to}\quad \input{rules/Euler-display-rhs.tikz}
				\end{align}
			\end{subfigure}
			\begin{subfigure}[c]{0.52\textwidth}
				\begin{align}\label{eq:SN}\tag{SN}
					\input{phase-gadgets/gadget-arity-n.tikz}
					\quad\to\quad \input{phase-gadgets/gadget-arity-n-deleted.tikz}
				\end{align}
			\end{subfigure} \\
	\end{minipage}}
	\caption{The complete flow-preserving rule set.
		The vertices marked with $i$ in \eqref{eq:IO} are inputs and the vertices marked with $o$ are outputs.
		In \eqref{eq:LC}, none of the vertices may be outputs and the top vertex cannot be an input.
		While \eqref{eq:LC} always preserves gflow and Pauli flow, it preserves extended causal flow only in special cases.
		The right-to-left direction of rule \eqref{eq:ZL}, can be used if either the new vertex is assigned a measurement label $Z$ or if it is assigned $YZ$ and there exists a flow on the original diagram satisfying certain conditions, see Lemma~\ref{lem:Z-insertion-flow} and Theorem~\ref{thm:YZ-insertion-preserves-gflow}.
		Similarly, by Lemma~\ref{lem:Euler-rule}, rule \eqref{eq:EU-ZX} requires a specific kind of flow on the original diagram to be flow-preserving.
		The relationship between the phase angles on the left- and right-hand side of \eqref{eq:EU-ZX} is given in Lemma~\ref{lem:Euler-rule}.
		In \eqref{eq:SN}, the phase gadgets on the left-hand side must form a full spider nest with phase $2\pi$, cf.\ Definition~\ref{def:full-spider-nest}.
		The remaining rule directions apply unconditionally for any type of flow.}
	\label{fig:flow-preserving}
\end{figure}

The following lemma will be useful in various situations where we modify a flow in such a way as to add new relationships to the partial order.

\begin{lemma}\label{lem:extend-partial-order}
Let $\prec$ be a strict partial order on $V$ and let $T\sse V$ be upwards closed with respect to $\prec$, \ie for all $t\in T$, if there exists $v\in V$ such that $t\prec v$, then $v\in T$.
\begin{itemize}
	\item Suppose $R\sse \comp{T}\times T$ and $\prec'$ is the transitive closure of ${\prec}\cup R$.
	Then $\prec'$ is a strict partial order on $V$.
	\item Suppose $V'=V\uplus\{z\}$, $P\sse \comp{T}\times\{z\}$, $R\sse\{z\}\times T$.
	Let $\prec'$ be the transitive closure of ${\prec}\cup P\cup R$, then $\prec'$ is a strict partial order on $V'$.
\end{itemize}
\end{lemma}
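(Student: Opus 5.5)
The plan is to reduce both parts to a single acyclicity argument. A transitive closure of an irreflexive relation is a strict partial order exactly when the relation has no directed cycle; transitivity is automatic. So in each case I will show that ${\prec}\cup R$ (respectively ${\prec}\cup P\cup R$) has no cycle, and that no pair $(v,v)$ is added.

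For the first part, take a cycle $v_0 \to v_1 \to \cdots \to v_k = v_0$ in ${\prec}\cup R$. If it uses no $R$-edge, it is a cycle in $\prec$, which is impossible because $\prec$ is a strict partial order. Otherwise it contains some edge $(a,b)\in R$, so $b\in T$. The key invariant is that once a path enters $T$ it never leaves. A $\prec$-edge starting in $T$ ends in $T$ by upward closure. An $R$-edge cannot start in $T$, since its source lies in $\comp{T}$. Following the cycle from $b$ back to $a$, every vertex therefore stays in $T$, so $a\in T$. This contradicts $a\in\comp{T}$. Irreflexivity of $R$ itself is immediate because $\comp{T}\cap T=\emptyset$.

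For the second part I will first extend $\prec$ to $V'$ with $z$ incomparable to everything, which is still a strict partial order. I then set $T' := T\cup\{z\}$ and $\comp{T'} = V\setminus T$. The set $T'$ is upward closed in the extended order, since $z$ has no successors. Every pair in $P$ goes from $\comp{T}$ to $z$, so it lies in $\comp{T'}\times T'$. A pair in $R$ goes from $z$ to $T$, and $z\notin\comp{T'}$, so the first part cannot be applied directly with $P\cup R$.

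Instead I will rerun the cycle argument. Consider a cycle in ${\prec}\cup P\cup R$. If it avoids $z$, it uses only $\prec$-edges, because every $P$- and $R$-edge is incident to $z$, and that is impossible. If it passes through $z$, the edge leaving $z$ must be an $R$-edge into $T$, since $\prec$ gives $z$ no successors. From there, $\prec$-edges stay in $T$, and the only way back to $z$ is a $P$-edge, whose source lies in $\comp{T}$. This contradicts the invariant that the path stays in $T$.

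Alternatively, one can apply the first part twice, adding $P$ first with respect to $T'$ and then $R$ with respect to a suitable upward-closed set. The direct cycle argument is cleaner, and I do not expect any real obstacle. The only points needing care are that $R$ and $P$ introduce no loops, which holds since $z\notin V$ and $T\cap\comp{T}=\emptyset$, and that the argument does not need $T$ to stay upward closed in the new relation. Only the original $\prec$ and the direction of the added edges are used.
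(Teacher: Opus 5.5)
Your proof is correct. Part~1 is essentially the paper's argument in a slightly leaner form. The paper fixes the first $R$-edge $v_{j^*}\to v_{j^*+1}$ on the cycle and propagates in both directions: forward, all later vertices lie in $T$; backward, all earlier vertices lie in $V\setminus T$, by minimality of $j^*$ and downward closure of $V\setminus T$. This gives $v_0\notin T$ but $v_k\in T$. You use only the forward invariant and follow the cycle round to the source of the $R$-edge.

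For Part~2 you take a different route. The paper uses exactly the ``apply the first part twice'' reduction that you mention and set aside:
\begin{itemize}
\item it adds $P$ with respect to the upward-closed set $T\cup\{z\}$;
\item it checks that $T$ is still upward closed in $\prec_2=\operatorname{TC}({\prec}\cup P)$, since any $\prec_2$-predecessor of $z$ is $\preceq$ the source of some $P$-edge and hence lies in $V\setminus T$;
\item it adds $R$ with respect to $T$, whose complement in $V'$ now contains $z$;
\item it concludes with $\operatorname{TC}(\operatorname{TC}({\prec}\cup P)\cup R)=\operatorname{TC}({\prec}\cup P\cup R)$.
\end{itemize}
Your direct argument rotates a cycle to start at $z$ and observes that the outgoing edge must be an $R$-edge into $T$. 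The ``once in $T$, always in $T$'' invariant then forbids re-entering $z$, because the only edges into $z$ are $P$-edges with sources outside $T$. This dispenses with both auxiliary checks and is self-contained. The paper's version instead buys modularity, reusing Part~1 as a black box.

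Both versions need the complement in the hypothesis on $P$ to be taken in $V$ rather than $V'$, so that $(z,z)\notin P$. You make this explicit; the paper leaves it implicit.
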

\begin{proof}
\textbf{Part 1}: Suppose $R\sse \comp{T}\times T$ and let $\prec'$ be the transitive closure of ${\prec}\cup R$.
Assume for a contradiction that $\prec'$ is not a strict partial order, i.e.\ there exists a cycle $v\prec' v$ for some $v\in V$.
Then there exists a sequence $v_0,\ldots,v_k$ with $v_0 = v = v_k$, and such that for all $0\leq j < k$ we have $v_j \prec v_{j+1} \vee v_j R v_{j+1}$.

Let $j^*$ be the smallest $j$ such that $v_j R v_{j+1}$; this exists as otherwise $v\prec v$, contradicting the assumption that $\prec$ is strict.
We have $v_{j^*}\in\comp{T}$ and $v_{j^*+1}\in T$ by the definition of $R$.

Then the relationship between $v_{j^*+1}$ and $v_{j^*+2}$ must be $v_{j^*+1} \prec v_{j^*+2}$ since $v_{j^*+1}\in T$ cannot appear on the left-hand side of a relation from $R$.
Thus $v_{j^*+2}\in T$ by upwards closure.
The argument can be repeated so we have $v_j\in T$ for all $j\geq j^*+1$.

Similarly, the relationship between $v_{j^*-1}$ and $v_{j^*}$ must be $v_{j^*-1} \prec v_{j^*}$ since $v_{j^*}\in\comp{T}$ cannot appear on the right-hand side of a relation from $R$.
Thus, $v_j\in\comp{T}$ for all $j\leq j^*$.
This implies $v=v_0\in\comp{T}$ and $v=v_k\in T$, a contradiction.
Thus the transitive closure of ${\prec}\cup R$ is a strict partial order.

\textbf{Part 2}: Now, suppose $V'=V\uplus\{z\}$, $P\sse \comp{T}\times\{z\}$, $R\sse\{z\}\times T$.

First, let $\prec_1$ be the extension of $\prec$ to domain $V'$ which does not add any new relationships.
Define $T' = T\cup\{z\}$ and notice this is upwards closed with respect to $\prec_1$ since $z$ is not related to anything.
Let $\prec_2$ be the transitive closure of ${\prec_1}\cup P$; this is a strict partial order by part 1 of the lemma.

Secondly, note that the original set $T$ is upwards closed with respect to $\prec_2$ as all predecessors of $z$ are in $\comp{T}$.
Let $\prec_3$ be the transitive closure of ${\prec_2}\cup R$; then this is a strict partial order by part 1 of the lemma.

Finally, denote the transitive closure operation by $\operatorname{TC}$, then it is easy to see that $\operatorname{TC}(\operatorname{TC}({\prec}\cup P) \cup R) = \operatorname{TC}({\prec}\cup P \cup R)$, so ${\prec_3} = {\prec'}$ and we have the desired result.
\end{proof}

By combining local complementation or pivoting with $Z$-insertion and deletion, we can derive the following two rules transforming graph-like ZX-diagrams to graph-like ones.
The `deletion' directions of these rules were first introduced in the context of gflow in \cite[Lemmas~5.2 and~5.3]{duncan_graph-theoretic_2020}.

First, we will derive a simpler change of the gflow under local complementation than the expression in \cite[Lemma~3.1]{backens_there_2021} (corresponding to the version given if the central vertex is an output \cite[Lemma~3.2]{backens_there_2021}). It is also the qubit equivalent of the qudit version in \cite[Proposition~5.8]{booth_measurement-based_2022} as pointed out in \cite{mitosek_working_2026}.

\begin{lemma}\label{lem:LC-gflow-change}
	Let $\Gamma = (G,I,O,\ld)$ be a \LOG with gflow $g$, and let $z\in \comp{O}\cap\comp{I}$ be an internal vertex.
	Let $\Gamma' = (G*z,I,O,\ld')$ be the \LOG after a local complementation about $z$.
	Then $\Gamma'$ has a gflow given by
	\[
		g'(v) = \begin{cases}
			g(v) &\text{if } z\notin\odd{G}{g(v)} \\
			g(v)\symd\{a\} &\text{if } z\in\odd{G}{g(v)}
		\end{cases}
	\]
	and ${\prec_{g'}} = {\prec_g}$.
\end{lemma}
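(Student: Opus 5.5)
The statement contains a typo: the second case should read $g(v)\symd\{z\}$, not $g(v)\symd\{a\}$. The plan is to compute directly how correction sets and their odd neighbourhoods transform under $G\mapsto G*z$. From this we read off that each gflow condition for $\ld'$ holds, and that the induced relation $R_{g'}$ coincides with $R_g$. The latter immediately gives ${\prec_{g'}}={\prec_g}$ and hence strictness.

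\textbf{Step 1 (odd neighbourhoods after local complementation).} Local complementation about $z$ leaves $N(z)$ unchanged. For $w\notin N_G[z]$ we have $N_{G*z}(w)=N_G(w)$, and for $w\in N_G(z)$ we have $N_{G*z}(w)=N_G(w)\symd N_G(z)\symd\{w\}$. Summing over $w\in A$ gives, for any $A\sse V$,
\[
\odd{G*z}{A}=\odd{G}{A}\symd (A\cap N_G(z))\symd \varepsilon_A N_G(z),
\]
where $\varepsilon_A=1$ iff $\abs{A\cap N_G(z)}$ is odd, i.e.\ iff $z\in\odd{G}{A}$. Since $\odd{G*z}{\{z\}}=N_G(z)$, toggling $z$ into $A$ exactly when $\varepsilon_A=1$ cancels the last term. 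Hence, in both cases of the definition of $g'$,
\[
\odd{G*z}{g'(v)}=\odd{G}{g(v)}\symd\bigl(g(v)\cap N_G(z)\bigr),
\qquad g'(v)=g(v)\symd\varepsilon_{g(v)}\{z\}.
\]
Also $g'(v)\sse\comp{I}$, because $z\notin I$.

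\textbf{Step 2 (gflow conditions).} Recall from Lemma~\ref{lem:local-complementation-flow} how $\ld'$ differs from $\ld$. At $z$, $XY$ and $XZ$ are swapped and $YZ$ is fixed. At neighbours of $z$, $XZ$ and $YZ$ are swapped and $XY$ is fixed. All other labels are unchanged.
\begin{itemize}
\item For $u$ with $u\notin N_G[z]$: neither the membership $u\in g(u)$ nor $u\in\odd{}{g(u)}$ changes, and neither does the label, so the condition is inherited.
\item For $u=z$: $z\in\odd{G*z}{g'(z)}$ iff $z\in\odd{G}{g(z)}$, since $z\notin N_G(z)$. Meanwhile $z\in g'(z)$ iff $z\in g(z)$ XOR $z\in\odd{G}{g(z)}$. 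So the pattern (out, in) of $XY$ becomes (in, in), which is $XZ$. Conversely, $XZ$ becomes $XY$, and $YZ$, with pattern (in, out), is fixed.
\item For $u\in N_G(z)$: membership in $g$ is unchanged, and membership in the odd neighbourhood toggles iff $u\in g(u)$. So $XY$ is fixed, while $XZ$ and $YZ$ are swapped.
\end{itemize}
In every case this matches the change of label. (The hypothesis $z\notin O$ is what makes $g'(z)$ defined; outputs among the neighbours impose no condition.)

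\textbf{Step 3 (order).} By the Remark following Definition~\ref{def:induced-relation}, $R_g$ only depends on the sets $(g(u)\cup\odd{G}{g(u)})\setminus\{u\}$. I therefore show that $g(v)\cup\odd{G}{g(v)}=g'(v)\cup\odd{G*z}{g'(v)}$ for every $v$.
\begin{itemize}
\item Away from $N_G[z]$, nothing changes.
\item For $w\in N_G(z)$: the odd-neighbourhood membership only toggles when $w\in g(v)$, and $g$-membership is unchanged, so the union is unchanged.
\item For $z$: if $z\in\odd{}{g(v)}$, then $z$ stays in the odd neighbourhood; otherwise $g$-membership of $z$ is unchanged.
\end{itemize}
Thus $R_{g'}=R_g$, so $R^*_{g'}$ is the same strict partial order. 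This establishes that $g'$ is a gflow with ${\prec_{g'}}={\prec_g}$.

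The only delicate point is the bookkeeping in Step 1, getting the $\varepsilon_A N_G(z)$ term and the $A\cap N_G(z)$ term right. Everything afterwards is case checking.
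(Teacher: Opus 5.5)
Your proof is correct, including your reading of $\{a\}$ as $\{z\}$. It follows essentially the paper's route: first the key identity $\odd{G*z}{g'(v)}=\odd{G}{g(v)}\symd(g(v)\cap N_G(z))$, which the paper imports from Duncan et al.\ and you derive directly, then a check of the plane conditions and the order. Your Step~3 shows $g'(v)\cup\odd{G*z}{g'(v)}=g(v)\cup\odd{G}{g(v)}$, hence $R_{g'}=R_g$, and this is slightly sharper than the paper's argument. The paper only verifies that newly added relations already lie in $\prec_g$, which gives strictness but not by itself the claimed equality ${\prec_{g'}}={\prec_g}$; your Step~2 also makes explicit the plane-condition check that the paper defers to Backens et al.
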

\begin{proof}
	For any $A\sse V$ \cite[Lemma~B.4]{duncan_graph-theoretic_2020}, we have:
	\[
	\odd{G*z}{A} = \begin{cases}
		\odd{G}{A}\symd (N_G(z)\cap A) &\text{if } z\notin\odd{G}{A} \\
		\odd{G}{A}\symd (N_G(z)\setminus A) &\text{if } z\in\odd{G}{A}
	\end{cases}
	\]
	Thus:
	\[
	\odd{G*z}{g'(v)} = \begin{cases}
		\odd{G}{g(v)} \symd (N_G(z)\cap g(v)) &\text{if } z\notin\odd{G}{g(v)} \\
		\odd{G*z}{g(v)}\symd N_{G*z}(z) & \\\quad = \odd{G}{g(v)}\symd (N_G(z)\setminus g(v)) \symd N_G(z)  &\text{if } z\in\odd{G}{g(v)}
	\end{cases}
	\]
	so in fact $\odd{G*z}{g'(v)} = \odd{G}{g(v)} \symd (N_G(z)\cap g(v))$ for all $v$.
	This is compatible with the partial order for the following reasons:
	\begin{itemize}
		\item $z$ is added to the correction set of $v$ only if $z\in\odd{G}{g(v)}$, which implies $v=z$ or $v\prec_g z$.
		It is the only vertex that may be added to correction sets.
		\item A vertex $w$ may be added to the odd neighbourhood of the correction set of $v$ only if $w\in N_G(z)\cap g(v)$.
		Yet this implies in particular $w\in g(v)$ and thus again $v=w\vee v\prec_g w$.
	\end{itemize}
	The argument for compatibility with the flow conditions is analogous to that in \cite[Lemma~3.1]{backens_there_2021}.
\end{proof}

\begin{lemma}[Local complementation \& delete, insert \& local complementation]\label{lem:lc-delete}
 The following operation on graph-like ZX-diagrams preserves Pauli flow in both directions:
 \begin{equation}\label{eq:LCD}
  \input{rules/LC-del-red-lhs.tikz} \quad\rightleftarrows\quad\input{rules/LC-del-red-rhs.tikz} \qquad\qquad\qquad \input{rules/LC-del-green-lhs.tikz} \quad\rightleftarrows\quad\input{rules/LC-del-green-rhs.tikz} \tag{LCD}
 \end{equation}
 It preserves gflow unconditionally in the left-to-right direction; in the reverse direction gflow is preserved if and only if the insertion step preserves it.
 Extended causal flow is preserved left-to-right if all the vertices receiving a  are measured $XY$.
 
 For the right-to-left direction, extended causal flow is preserved if
 \begin{itemize}
 	\item The conditions for gflow preservation of Theorem~\ref{thm:YZ-insertion-preserves-gflow} are satisfied during the insertion step given the original extended causal flow $g$.
 	 Moreover $\abs{C}=0$ for the $YZ$-insertion case or $\abs{C}=1$ for the $XZ$-insertion case, and for all $w\in\comp{O}$ such that $g(w)\sse N\setminus C$ and for all $v\in N$, we have\footnote{Given the assumption on $C$, this reduces to one of the conditions of Theorem~\ref{thm:YZ-insertion-preserves-gflow} if the inserted vertex is $YZ$-measured but it is stronger than the corresponding condition of the theorem for an $XZ$-insertion.} $\neg(v\prec_g w \vee v=w)$.
 	\item All the vertices receiving a  are measured $XY$
 \end{itemize}
\end{lemma}
\begin{proof}
	We give the proofs for extended causal flow as the gflow versions follow immediately from the conditions for \eqref{eq:ZL}.
	
	First, consider the version where the vertex that is deleted or inserted is $YZ$-measured.
	Without loss of generality, assume the neighbours are all $XY$-measured (they cannot be $YZ$-measured by the definition of extended causal flow; if any are outputs, apply \eqref{eq:IO}).
	The transformation can be decomposed as follows:
	\[
		\input{rules/LC-del-red-lhs.tikz}
		\quad\overset{\eqref{eq:LC}}{\rightleftarrows}\quad\input{rules/LC-del-red1.tikz}
		\quad\overset{\eqref{eq:ZL}}{\rightleftarrows}\quad\input{rules/LC-del-red-rhs.tikz}
	\]
	By Lemma~\ref{lem:LC-causal}, the local complementation preserves extended causal flow.
	The $YZ$-deletion also preserves flow unconditionally.
	This gives flow-preservation from left-to-right.
	
	As the local complementation is known to preserve extended causal flow under the given conditions, the right-to-left transformation preserves extended causal flow if and only if the $YZ$-insertion preserves it.
	This completes the case of a $YZ$-insertion or deletion.

	Now, consider the version where the vertex $z$ that is deleted or inserted is $XY$-measured, as are all its neighbours.
	The transformation can be decomposed as follows:
	\[
		\input{rules/LC-del-green-lhs.tikz}
		\quad\overset{\eqref{eq:LC}}{\rightleftarrows}\quad\input{rules/LC-del-green1.tikz}
		\quad\overset{\eqref{eq:ZL}}{\rightleftarrows}\quad\input{rules/LC-del-green-rhs.tikz}
	\]
	In this case, the local complementation changes the measurement plane of $z$ to $\ld'(z) = XZ$, so it is not possible to work entirely inside extended causal flow.
		
	For the left-to-right direction, gflow preservation is immediate from \eqref{eq:LC} and $Z$-like deletion.
	By Lemma~\ref{lem:LC-gflow-change}, the local complementation changes the gflow to:
	\[
		g'(v) = \begin{cases}
			g(v) &\text{if } z\notin\odd{G}{g(v)} \\
			g(v)\symd\{z\} &\text{if } z\in\odd{G}{g(v)}
		\end{cases}
	\]
	which leaves the partial order invariant.
	The $XZ$-deletion changes the gflow to \cite[Lemma~3.4]{backens_there_2021}:
	\begin{align*}
		g''(v)
		&= \begin{cases}
			g'(v)\symd g'(z) & \text{if } z\in g'(v) \\
			g'(v) &\text{if } z\notin g'(v)
		\end{cases} \\
		&= \begin{cases}
		g(v)\symd g(z) \symd\{z\} & \text{if } z\in g(v) \wedge z\notin\odd{G}{g(v)} \\
		g(v)\symd g(z) &\text{if } z\notin g(v) \wedge z\in\odd{G}{g(v)} \\
		g(v)\symd \{z\} & \text{if } z\in g(v) \wedge z\in\odd{G}{g(v)} \\
		g(v) &\text{if } z\notin g(v) \wedge z\notin\odd{G}{g(v)}
		\end{cases}
	\end{align*}
	and the only change to the partial order is the removal of the deleted vertex.
	
	To adapt this result to extended causal flow, assume all correction sets in the initial flow have size 1 and that all neighbours of $z$ are $XY$-measured.
	Then, note that in the definition of $g''$ in terms of $g$, the third case cannot happen if the original flow is extended causal because there are no self-loops.
	The first case is unproblematic since $z\in g(v)$ implies $g(v)=\{z\}$ and thus $g(v)\symd g(z)\symd\{z\} = g(z)$ has size 1.
	The last case is similarly fine.
	It remains to consider the second case, where $z\notin g(v)$ but $z\in\odd{G}{g(v)}$, \ie the unique element of $g(v)$ is a neighbour of $z$.
	As all neighbours of $z$ are $XY$-measured, we have $v\notin g(v)$.
	Consider the odd neighbourhood that results if we did not change the correction set of $v$:
	\[
		\odd{G''}{g(v)} = \odd{G'}{g(v)}\setminus\{z\} = (\odd{G}{g(v)}\symd (N_G(z) \cap g(v) ))\setminus\{z\}
	\]
	Since $v\neq z$ and $v\notin g(v)$, we have $v\in\odd{G''}{g(v)}$ and \ref{itm:gXY} holds.
	Moreover, $w\in\odd{G''}{g(v)}\setminus\{v\}$ implies $w\in\odd{G}{g(v)} \vee w\in g(v)$, so $v\prec_g w$.
	Hence there are no problems with the partial order either and extended causal flow is preserved whenever all the neighbours are $XY$-measured.

	Now consider the condition that a flow on the right-hand diagram needs to satisfy to make the $XZ$-insertion gflow-preserving.
	Consider a \LOG $\Gamma = (G,I,O,\ld)$ which has an extended causal flow $g$ that we will treat as a gflow.
	Let $N\sse\{v\in\comp{O}\mid \ld(v)=XY\}$.
	By Theorem~\ref{thm:YZ-insertion-preserves-gflow}, an $XZ$-measurement with neighbours $N$ can be inserted in a flow-preserving way if there exists a set $C\sse\comp{I}$ such that $\abs{N\cap C}\equiv 1\bmod 2$ and for all $v\in\comp{O}$ such that $g(v)\sse N$ and for all $w\in C\cup(N\symd\odd{G}{C})$, we have $\neg(w\prec_g v \vee w=v)$.
	In this case, the correction function after the insertion is
	\[
		g'(u) = \begin{cases}
			C\cup\{z\} &\text{if } u = z \\
			g(u) &\text{otherwise,}
		\end{cases}
	\]
	and the corresponding partial order is the transitive closure of
	\[
	 {\prec_g} \cup \{(w,z)\mid g(w)\sse N\} \cup \{(z,v)\mid v\in C\cup(N\symd\odd{G}{C})\}.
	\]
	By Lemma~\ref{lem:LC-gflow-change}, the correction function after the local complementation becomes :
	\[
		g''(u) = \begin{cases}
			C &\text{if } u = z \\
			g(u)\symd\{z\} &\text{if } g(u)\sse N \\
			g(u) &\text{otherwise}.
		\end{cases}
	\]
	Suppose $\abs{C}=1$, then if $g$ was causal, the only obstacle that remains is the second case of the definition of $g''$.
	If $g(u)\in N$, then we have $u\prec_{g'} z$ as $g'$ is a gflow and $z\in\odd{G'}{g(u)}$.
	\begin{itemize}
		\item Suppose $u\in\comp{O}$ satisfies $g(u)=C$, then $g(u)\sse N$ since \ref{itm:XZ} holds for $g'(z)$.
			 Then we must have $u\sim z$ as otherwise $u\in\odd{G'}{g(z)}$, which would contradict the partial order.
			 Moreover, we have $u\prec_{g'} v$ for all $v\in\odd{G}{C}$ and $z\prec_{g'}v $ for all $v\in N\symd\odd{G}{C}$.
			 Thus, by transitivity, $u\prec_{g'} v$ for all $v\in N$.
			 Setting $g''(u)=\{z\}$ thus satisfies \ref{itm:cXY} as well as the partial order conditions.
		\item Suppose $u\in\comp{O}$ satisfies $g(u)\sse N\setminus C$.
			 Now, $g(u)\in N$ implies $z\in\odd{G'}{g(u)}$, hence
			 \[
			 	\odd{G''}{g(u)} = \odd{G'}{g(u)} \symd (N\setminus g(u)).
			 \]
			 While we only have $z\prec_{g'} v$ for $v\in N\symd\odd{G}{C}$, we also know $\neg (v\prec_g u \vee v = u)$ for all $v\in N\symd\odd{G}{C}$.
			 Thus we may add relationships of the form $u \prec v$ without breaking the partial order.
			 This means $g(u)$ continues to be a valid correction set for $u$.
	\end{itemize}	
	This means the final labelled open graph has extended causal flow.
\end{proof}

\begin{lemma}[Pivot \& delete, insert \& pivot]\label{lem:pivot-delete}
 The following operation on graph-like ZX-diagrams -- a pivot on the edge $uv$ followed by deleting $u$ and $v$ -- preserves Pauli flow in both directions if $u$ and $v$ are $X$-measured:
 \begin{equation}\label{eq:PD}
  \input{rules/pivot-del-1.tikz} \quad \rightleftarrows \quad \input{rules/pivot-del-2.tikz} \tag{PD}
 \end{equation}
 Gflow is preserved unconditionally left-to-right and is preserved in the right-to-left direction if the insertions preserve the flow.
\end{lemma}

The left-to-right preservation of gflow was originally proved in \cite[Lemma B.8]{duncan_graph-theoretic_2020}; the other direction follows from decomposing the operation as two $YZ$-insertions and an (unconditionally gflow-preserving) pivot.
The pivot \& delete or insert \& pivot operation can also preserve causal flow, yet the conditions for this are too complex to work out in generality here.
Even when gflow is preserved by an application of Lemma~\ref{lem:pivot-delete}, extended causal flow need not be preserved.
This is demonstrated for the `pivot \& delete' direction by reading Example~\ref{ex:circuit-extraction} in reverse.
We will use this lemma in a causal flow context only in situations where it is straightforward to manually check that the causal flow is preserved.
A special case in which one of $u,v$ has exactly two neighbours will be analysed below in Proposition~\ref{prop:vertex-splitting}.

One potentially non-obvious consequence of the pivot-and-delete lemma is the ability to remove or insert pairs of edges similar to the Hopf law in the ZX-calculus.
The edges cannot share both endpoints as graph-like diagrams must be simple, but by `unfusing' one of the endpoints, we nevertheless get a similar effect.\footnote{Note that, while this example is very similar to fusing and then splitting a vertex, the `Hopf' effect cannot be achieved with the previously-given `vertex splitting' rule as it is \cite[Proposition~5.1]{mcelvanney_flow-preserving_2023}, though the proof of that proposition could straightforwardly be extended to allow this.}

\begin{example}
 The following Hopf-like rewriting process is flow-preserving in both directions by \eqref{eq:PD} applied to the two internal vertices and \eqref{eq:IO}:
 \ctikzfig{rules/hopf-ex}
\end{example}

\begin{lemma}[Graph-like copy]\label{lem:graph-like-copy}
 The following rewrite rule on graph-like diagrams preserves interpretation (up to scalar factor) and Pauli flow in both directions if the vertex with phase $a\pi$ is $Z$-measured:
 \begin{equation}
  \input{rules/copy.tikz}
 \end{equation}
 It removes or inserts a degree-1 vertex with angle 0 or $\pi$ and its unique neighbour.
 The neighbour cannot be an input or output.
 Gflow preservation in the left-to-right direction is unconditional.
 If the vertex with phase $a\pi$ is $YZ$-measured, then gflow is preserved in the right-to-left direction if the insertions in the proof are flow-preserving.
\end{lemma}
\begin{proof}
 As before the proof goes via the MBQC-form.
 \ctikzfig{rules/copy-proof}
 It employs pivoting, $Z$-deletion or insertion and the property that the flow on separate tensor components is independent, so a disconnected scalar diagram can be dropped or added at will.
\end{proof}

Graph-like copy is, in a sense, a variant of the pivot-and-delete/insert-and-pivot rule: it allows one of the two vertices to have an arbitrary phase label, but to make up for this, the other vertex must have degree~1.

\begin{lemma}
	It suffices to use the spider nest rule for $n\geq 4$, the smaller cases can be derived from existing rules.
\end{lemma}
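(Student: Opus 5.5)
We need to derive the full spider nests with phase $2\pi$ on $n\in\{1,2,3\}$ qubits from the other rules in Figure~\ref{fig:flow-preserving}. The common observation is that for $n\leq 3$ the phases $\alpha_s=(-1)^{\abs{s}-1}\frac{2\pi}{2^{n-1}}$ are multiples of $\frac{\pi}{2}$: they are $2\pi\equiv 0$ for $n=1$, $\pm\pi$ for $n=2$, and $\pm\frac{\pi}{2}$ for $n=3$. The whole left-hand side of \eqref{eq:SN} is therefore a Clifford (stabiliser) diagram. The plan is to remove the gadgets using rules whose flow behaviour we already control, namely \eqref{eq:ZL}, \eqref{eq:LC}, \eqref{eq:PF}, \eqref{eq:LCD}, \eqref{eq:PD} and graph-like copy. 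We work only in the deletion direction, where these rules preserve flow unconditionally.

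\emph{Case $n=1$.} The single gadget has phase $2\pi=0$. Its $YZ$-measured leaf therefore carries the effect with angle $0$, and it is deleted by \eqref{eq:ZL} (Lemma~\ref{lem:Z-deletion-flow}) together with its hub. Concretely, the gadget is a $YZ$-vertex with phase $0$ attached via an $XY$-hub to one qubit. We first use \eqref{eq:PF} to move the (zero) phase, then apply $Z$-like deletion to the phase-$0$ $YZ$ vertex; this leaves a degree-one hub, which is removed by graph-like copy (Lemma~\ref{lem:graph-like-copy}), left-to-right. Alternatively, we view the pair as a single $YZ$-vertex of angle $0$ adjacent to the qubit and delete it directly by \eqref{eq:ZL}.

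\emph{Case $n=2$.} The gadgets have phases $\pi$ (on $\{1\}$ and on $\{2\}$) and $-\pi$ (on $\{1,2\}$). Each gadget is a $YZ$ vertex measured at a Pauli angle, so \eqref{eq:ZL} deletes it, applying $Z^{a}$ to its neighbours. Deleting the phase-$\pi$ gadget on qubit $1$ flips qubit $1$'s phase by $\pi$, and likewise for qubit $2$. Deleting the phase-$(-\pi)$ gadget on $\{1,2\}$ flips both again. The net effect is trivial on both qubits (using \eqref{eq:PF}-style phase bookkeeping, up to scalar). Since $Z$-like deletion preserves every flow type unconditionally, this case is immediate.

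\emph{Case $n=3$.} This is the main obstacle. The phases are $\pm\frac{\pi}{2}$, so each gadget vertex is a $Y$-type ($\pm\frac\pi2$) $YZ$ vertex rather than a $Z$-like one, and it cannot simply be deleted. The plan is to apply \eqref{eq:LC} (via \eqref{eq:LCD}, left-to-right, which preserves gflow unconditionally) about each gadget vertex in turn. Local complementation about a $\pm\frac{\pi}{2}$-phase $YZ$ vertex turns it into a vertex with Pauli angle that \eqref{eq:ZL} removes. Its effect is to toggle edges among its neighbours and add $\mp\frac\pi2$ phases to them. I would order the eliminations as follows: the three arity-1 gadgets first (pure local phases), then the three arity-2 gadgets, and finally the arity-3 gadget. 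At each stage I would track the accumulated edges among the qubit vertices and the gadget hubs, and check that the residue after all seven eliminations is an even number of edges between each pair (so cancelling via the Hopf-like \eqref{eq:PD} example) together with phases summing to multiples of $2\pi$, absorbed by \eqref{eq:PF}.

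The delicate point is ensuring each intermediate diagram stays graph-like with the required $XY$ labels on neighbours, so that \eqref{eq:LCD} applies. If that bookkeeping fails, the fallback is to invoke completeness of local complementation with $Z$-insertion and $Z$-deletion for flow-preserving rewriting of Pauli-measured patterns \cite{mcelvanney_complete_2023}. The whole $n=3$ nest, relabelled with Pauli labels, is a Clifford pattern equal to the pattern without it. We then check that the resulting derivation only uses deletion-direction steps on the nest vertices, so that gflow and extended causal flow are also preserved.
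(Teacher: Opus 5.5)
Your proposal matches the paper's proof. In both, the angles for $n\leq 3$ are Clifford, $n=1,2$ are handled by direct $Z$-like deletion \eqref{eq:ZL}, and $n=3$ is handled by local-complementation-and-delete (Lemma~\ref{lem:lc-delete}) about the $\pm\frac{\pi}{2}$ gadgets: each qubit's phase changes sum to $\frac{\pi}{2}-2\cdot\frac{\pi}{2}+\frac{\pi}{2}=0$, and each pair of qubits has its edge toggled exactly twice. Two small corrections. First, in MBQC form a gadget is a single $YZ$-measured vertex, so for $n=1$ keep your ``alternative'' and drop the hub-plus-graph-like-copy route, since Lemma~\ref{lem:graph-like-copy} applied to a degree-one hub would also delete its neighbour, the qubit vertex. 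Second, local complementation toggles edges of a simple graph and angles are taken modulo $2\pi$, so no Hopf-like \eqref{eq:PD} or \eqref{eq:PF} clean-up is needed at the end.
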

\begin{proof}
	From Example~\ref{ex:spider-nests}, the spider nests for $n\leq 3$ involve only phase angles that are integer multiples of $\frac\pi2$.
	The result thus follows from the completeness of flow-preserving rewriting for computations in the Clifford fragment \cite{mcelvanney_complete_2023}, which uses only rules \eqref{eq:IO}, \eqref{eq:LC}, and \eqref{eq:ZL}.
	
	Indeed the case $n=1$ uses $Z$-like deletion and the property that a phase of $2\pi$ is equivalent to $0$:
	 \begin{align*}
	 	\input{phase-gadgets/gadget-arity-1.tikz} \quad\overset{\eqref{eq:ZL}}&{\rightleftarrows}\quad \begin{tikzpicture}
	\begin{pgfonlayer}{nodelayer}
		\node [style=Z dot] (0) at (0, 0) {};
		\node [style=none] (5) at (-1, 0.5) {};
		\node [style=none] (6) at (-1, 0.25) {$\vdots$};
		\node [style=none] (7) at (-1, -0.5) {};
	\end{pgfonlayer}
	\begin{pgfonlayer}{edgelayer}
		\draw [bend left=15] (5.center) to (0);
		\draw [bend left=15] (0) to (7.center);
	\end{pgfonlayer}
\end{tikzpicture}
 \\
	 	\intertext{The same holds for $n=2$:}
	 	\input{phase-gadgets/gadget-arity-2.tikz}
	 	\quad\overset{\eqref{eq:ZL}}&{\rightleftarrows}\quad \input{phase-gadgets/gadget-arity-2-del1.tikz}
	 	\quad\overset{\eqref{eq:ZL}}{\rightleftarrows}\quad \input{phase-gadgets/gadget-arity-2-del2.tikz}
	 	 \quad =\quad \input{phase-gadgets/gadget-arity-2-del3.tikz}
	 \end{align*}
	 The case $n=3$ uses the `local complementation \& delete' rule derived from \eqref{eq:LC} and \eqref{eq:ZL}:
	 \[
	 	\input{phase-gadgets/gadget-arity-3.tikz}
	 	\quad\overset{\ref{lem:lc-delete}}{\rightleftarrows}\quad \input{phase-gadgets/gadget-arity-3-del1.tikz}
	 	\quad\overset{\ref{lem:lc-delete}}{\rightleftarrows}\quad \input{phase-gadgets/gadget-arity-3-del2.tikz}
	 	\quad\overset{\ref{lem:lc-delete}}{\rightleftarrows}\quad \input{phase-gadgets/gadget-arity-3-del3.tikz}
	 \]
	 Note for the second step that each vertex on the left is connected to two of the gadgets of phase $-\frac\pi2$, hence the phase changes by $2(-\frac\pi2) = -\pi$.
\end{proof}

We have the following corollary to Theorem~\ref{thm:YZ-insertion-preserves-gflow}, which sets out simplifications for certain cases of $YZ$-insertion.

\begin{corollary}\label{cor:YZ-insertion}
	Let $\Gamma = (G,I,O,\ld)$ be a \LOG with gflow $g$.
	Suppose $(D,U)$ is a partition of $V$ such that $O\sse U$ and $U$ is upwards closed according to $\prec_g$.
	Furthermore, suppose the two sets $N\sse V$ and $C\sse\comp{I}$ satisfy $\abs{N\cap C}\equiv 0 \bmod 2$, as well as $\Pred(N) \sse D$ and $\Succ(N,C) \sse U$, where $\Pred(N)$ and $\Succ(N)$ are as defined in \eqref{eq:pred} and \eqref{eq:succ}.
	Define $\Gamma'=(G',I,O,\ld')$ as in Definition~\ref{def:insert-Z}, then the function $g'$ defined in \eqref{eq:yz-insert-correction-function} is a gflow on $\Gamma'$.

	We will use several particular cases of this corollary.
	\begin{enumerate}
		\item\label{it:only-outputs} Inserting a $YZ$-measurement with only output neighbours, i.e.\ $N\sse O$ and $C=\emptyset$, is always flow-preserving.
		 
		\item\label{it:single-neighbour} Inserting a new $YZ$-measurement with a single, $XY$-measured neighbour $N=\{a\}$ preserves flow with $C=\emptyset$.
		 
		\item\label{it:single-neighbour-focused} Inserting a new $YZ$-measurement with a single, $XY$-measured neighbour $N=\{a\}$ preserves flow with $C=g(a)$.
		 
		\item\label{it:parallel-gadgets} Inserting a new $YZ$-measurement with the same neighbours as a pre-existing $YZ$-measurement, i.e.\ $N=\{a\}$ and $C=g(a)\setminus\{a\}$, is flow-preserving.
	\end{enumerate}
	In all cases where $C=\emptyset$, extended causal flow is preserved if the original gflow is causal.
\end{corollary}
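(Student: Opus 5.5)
The main claim follows from Theorem~\ref{thm:YZ-insertion-preserves-gflow} combined with Lemma~\ref{lem:extend-partial-order}. Take $g'$ as in \eqref{eq:yz-insert-correction-function}. The flow condition \ref{itm:gYZ} at $z$ is exactly $\abs{N\cap C}\equiv 0\bmod 2$, since $z\notin\odd{G'}{C\cup\{z\}}$ iff $\abs{N\cap C}$ is even, and $z\in g'(z)$.

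For every other vertex $u$, adding $z$ to $G$ changes $\odd{}{g(u)}$ only by possibly adding $z$. This happens exactly when $u\in\Pred(N)$. So the flow conditions at $u$ are unaffected.

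The new relation $R_{g'}$ is therefore $\prec_g$ together with:
\begin{itemize}
\item the pairs $(u,z)$ for $u\in\Pred(N)$;
\item the pairs $(z,v)$ for $v\in C\cup(\odd{G'}{C\cup\{z\}}\setminus\{z\})$.
\end{itemize}
For the second kind, note that $\odd{G'}{C\cup\{z\}}\cap V = \odd{G}{C}\symd N$, so these are exactly the pairs $(z,v)$ with $v\in\Succ(N,C)$. Taking $T:=U$, we have $P\sse D\times\{z\}=\comp{T}\times\{z\}$ and $R\sse\{z\}\times U$. Part~2 of Lemma~\ref{lem:extend-partial-order} then gives that $\prec_{g'}$ is a strict partial order, so $g'$ is a gflow.

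For the special cases, it suffices to exhibit in each case a suitable upward-closed partition $(D,U)$.

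\begin{enumerate}
\item Here $N\sse O$ and $C=\emptyset$. Take $U=O$, which is upward closed because outputs are maximal. Then $\Pred(N)=\emptyset$, since correction sets avoid outputs only in the sense that outputs never have correction sets and $g(u)\cap O$ may be nonempty. More carefully, take $D=\comp{O}$ and note that $\Pred(N)\sse\comp{O}$ automatically, while $\Succ=N\sse O=U$.

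\item Here $N=\{a\}$ and $C=\emptyset$, so $\Succ=\{a\}$. Take $U=\{v: a\preceq_g v\}$. Then $\Pred(\{a\})=\{u: a\in g(u)\}$. Each such $u$ with $u\neq a$ satisfies $u\prec_g a$, because $a$ is $XY$-measured, so it lies in $D$ by acyclicity. Also $a\notin g(a)$ by \ref{itm:gXY}, so $a\notin\Pred$.

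\item Here $C=g(a)$, so $\abs{N\cap C}=0$ because $a\notin g(a)$. Then $\Succ=g(a)\cup(\{a\}\symd\odd{}{g(a)})=g(a)\cup(\odd{}{g(a)}\setminus\{a\})$, since $a\in\odd{}{g(a)}$. Every element of this set is a strict successor of $a$. Take $U=\{v: a\prec_g v\}\cup O$, which puts $a$ itself in $D$. The predecessors of $z$ are those $u$ with $a\in g(u)$. These are $a$ itself (excluded, as before) and vertices with $u\prec_g a$, hence not in $U$.

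\item The neighbours of the new vertex coincide with those of an existing $YZ$-vertex $a$, so $N=N_G(a)$. Take $C=g(a)\setminus\{a\}$. The parity condition follows from \ref{itm:gYZ} for $a$. Then $\Succ(N,C)$ consists of $g(a)\setminus\{a\}$ together with $N_G(a)\symd\odd{}{g(a)\setminus\{a\}}=\odd{}{g(a)}$, all of which are strict successors of $a$ (noting $a\notin\odd{}{g(a)}$). Meanwhile $\Pred(N)$ is the set of $u$ with $a\in\odd{}{g(u)}$. For $a$ measured $YZ$ this forces $u\prec_g a$, since $a$ itself is excluded by \ref{itm:gYZ}. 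So again take $U$ to be the strict up-set of $a$ together with $O$.
\end{enumerate}

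For the causal statement, when $C=\emptyset$ we have $g'(z)=\{z\}$, which satisfies \ref{itm:cYZ}. All other correction sets are unchanged singletons, and the order is as above.

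The main obstacle is the bookkeeping in cases 3 and 4. There one must check that no vertex of $\Pred(N)$ lies in the up-set of $a$. This step relies on the acyclicity of $\prec_g$ and on the plane conditions at $a$.
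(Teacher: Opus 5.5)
Your proof is correct. For the four special cases it follows the paper's proof closely: the same choices of upward-closed $U$ and the same parity checks. You also correctly read case~4 as $N=N_G(a)$, as the paper's proof does.

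The real difference is in the main claim. The paper gives no separate argument for it and treats it as immediate from Theorem~\ref{thm:YZ-insertion-preserves-gflow}: if $U$ is upward closed with $\Pred(N)\sse D$ and $\Succ(N,C)\sse U$, then $\neg(v\prec_g w\vee v=w)$ for all $w\in\Pred(N)$ and $v\in\Succ(N,C)$. The remark after the theorem then gives that $g'$ is a gflow.

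You instead verify the claim from scratch:
\begin{enumerate}
\item you check \ref{itm:gYZ} at $z$;
\item you observe that old odd neighbourhoods change only by acquiring $z$, exactly for $u\in\Pred(N)$;
\item you identify the new relations as $\Pred(N)\times\{z\}$ and $\{z\}\times\Succ(N,C)$, using $\odd{G'}{C\cup\{z\}}\cap V=\odd{G}{C}\symd N$;
\item you close with Part~2 of Lemma~\ref{lem:extend-partial-order}.
\end{enumerate}
The paper's route is shorter given the theorem. Yours is self-contained and makes visible that the hypothesis $O\sse U$ plays no role. It also makes the closing causal-flow sentence, which the paper does not argue at all, immediate: for $C=\emptyset$ the only new correction set is $\{z\}$.

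Your case~3 is more accurate than the paper's. In fact $\Succ(\{a\},g(a))=g(a)\cup(\odd{G}{g(a)}\setminus\{a\})$, not $g(a)$ as the paper writes. The conclusion $\Succ\sse U$ holds either way, since both parts consist of strict successors of $a$.

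Two small blemishes to clean up:
\begin{itemize}
\item In case~1, the sentence claiming $\Pred(N)=\emptyset$ is false in general, since a non-output $u$ can have an element of $N\sse O$ in $g(u)$. Keep only your corrected version, $\Pred(N)\sse V\setminus O=D$.
\item In case~2, your $U=\{v : a\preceq_g v\}$ need not contain $O$, which the corollary's hypotheses require. Use $U\cup O$ instead. This is still upward closed, because outputs have no successors, and still disjoint from $\Pred(N)\sse V\setminus O$. Your own proof of the main claim would not need this anyway.
\end{itemize}
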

\begin{proof}
	We consider each case in turn.
	\begin{enumerate}
		\item Suppose $N\sse O$ and $C=\emptyset$, then $\abs{N\cap C}=0$ trivially.
		We may take $U := O$; this is trivially upwards closed as the outputs do not have successors in the partial order.
		By the same reasoning, $\Pred(N)\sse D = V\setminus U$ .
		Finally, $\Succ(N,\emptyset) = N \sse O = U$, so the corollary applies.
		
		\item Suppose there exists an $XY$-measured vertex $a\in V$ such that $N=\{a\}$, and $C=\emptyset$, then again $\abs{N\cap C}=0$ trivially.
		Take $U := O \cup \{a\} \cup \{v\in V\mid a\prec_g v\}$, then $U$ is upwards closed.
		We have $\Pred(N) = \{v\in V\setminus O \mid a\in g(v)\}\sse D = V\setminus U$ because $v\prec_g a$ for all $v\in\Pred(N)$.
		Finally, $\Succ(N,\emptyset) = \{a\} \sse U$, so the corollary applies.
		
		\item Suppose there exists an $XY$-measured vertex $a\in V$ such that $N=\{a\}$, and $C=g(a)$, then $\abs{N\cap C}=0$ by \ref{itm:XY}.
		Take $U := O \cup \{v\in V\mid a\prec_g v\}$, then $U$ is upwards closed.
		We have $\Pred(N) = \{v\in V\setminus O \mid a\in g(v)\}\sse D = V\setminus U$ because $v\prec_g a$ for all $v\in\Pred(N)$.
		Finally, $\Succ(\{a\},g(a)) = g(a) \sse U$, so the corollary applies.
		
		\item Suppose there exists a $YZ$-measured vertex $a\in V$.
		Set $N=N_G(a)$ and $C=g(a)\setminus\{a\}$, then $\abs{N\cap C}=0$ by \ref{itm:YZ}.
		The partial order conditions all work out by inserting the new vertex $z$ parallel to $a$, meaning with all the same predecessors and successors. \qedhere
	\end{enumerate}
\end{proof}

Variants of the following `vertex fusion and splitting' rule (sometimes also called `neighbour unfusion') have been considered in various contexts and for different types of flow \cite{staudacher_reducing_2023,holker_causal_2023,mcelvanney_flow-preserving_2023,perez_measurement-based_2023}; we show here how it follows from the rewrite rules of Figure~\ref{fig:flow-preserving} and give conditions for the preservation of extended causal flow specifically.
Without the angle splitting aspect, i.e.\ if $\beta=0$ in the below diagram, this result is a special case of Lemma~\ref{lem:pivot-delete}.

\begin{proposition}[Vertex splitting and fusion]\label{prop:vertex-splitting}
	Let $\Gamma = (G,I,O,\ld)$ be a \LOG with extended causal flow $g$.
	Suppose $a\in\comp{O}$ satisfies $\ld(a)=XY$ and let $N\sse \{v\in N_G(a)\mid \neg(v\prec_g a)\}$ be such that $g(a)\in N$ and furthermore for all $v\in N$ and for all $w\in N_G(a)$, we have $v\prec_g w\implies w\in N$: \ie $N$ is upwards closed within $N_G(a)$.
	Then the \LOG in which vertex $a$ has been split as below also has extended causal flow:
	\[
		\input{rules/neighbour-unfusion-lhs.tikz} \quad\rightleftarrows\quad \input{rules/neighbour-unfusion-rhs.tikz}
	\]
	Flow is also preserved in the opposite direction.
	
	Moreover, if $a$ has a predecessor in the extended causal flow -- \ie there exists $b\in\comp{O}$ such that $g(b)=\{a\}$, then an analogous result holds with $N$ being a downwards-closed subset of $\{v\in N_G(a)\mid \neg(a\prec_g v)\}$ that contains $b$.
\end{proposition}
\begin{proof}
	While this lemma is about extended causal flow, the proof actually goes via gflow.
	The transformations proceed as follows:
	\[
	 \input{rules/neighbour-unfusion0.tikz}
	 \;\overset{\eqref{eq:ZL}}{\rightleftarrows}\; \input{rules/neighbour-unfusion1.tikz}
	 \;\overset{\eqref{eq:PF}}{\rightleftarrows}\; \input{rules/neighbour-unfusion2.tikz}
	 \;\overset{\eqref{eq:ZL}}{\rightleftarrows}\; \input{rules/neighbour-unfusion3.tikz}
	 \;\overset{\ref{lem:pivot-Pauli-flow}}{\rightleftarrows}\; \input{rules/neighbour-unfusion4.tikz}
	\]
	where the fourth diagram cannot have extended causal flow as it contains two adjacent $YZ$-measurements.
	
	In the left-to-right direction, the first step is to insert a new $YZ$-measured vertex $z$ whose only neighbour is $a$.
	Denote the resulting \LOG by $\Gamma'$.
	The insertion preserves extended causal flow by Case~\ref{it:single-neighbour} of Corollary~\ref{cor:YZ-insertion} (as $C=\emptyset$ implies the new vertex has a correction set of size~1 and correction sets of other vertices do not change).
	Yet it will be more useful to work with Case~\ref{it:single-neighbour-focused} instead, where the new correction function $g'$ on $\Gamma'$ is given by $g'(z) = \{z\}\cup g(a)$ and $g'(v)=g(v)$ for all original vertices $v\in\comp{O}$.
	Importantly, $z$ is parallel to $a$ in $\prec_g$: they have the same predecessors and the same successors, and are mutually incomparable.
	
	The second step splits the phase, which preserves gflow and indeed changes neither the \LOG nor the correction function.
	The third step inserts another new $YZ$-measured vertex $z'$ whose neighbours are $N\cup\{z\}$ (this would not preserve extended causal flow even if the first step had done so).
	Denote the \LOG after the insertion of $z'$ by $\Gamma''$ and the underlying graph by $G''$.
	Let $N':= N\cup\{z\}$ and $C=\emptyset$, and take $U:= O\cup \{v\in V\mid \exists w\in N'.w\prec_{g'} v\}$.
	Apply Corollary~\ref{cor:YZ-insertion} to the pair $N',C$ with partition $(V\setminus U, U)$.
	Since $g'(z)$ has size 2 and is contained in $N'$, and all other correction sets of $g'$ have size 1, we find:
	\begin{align*}
		\Pred(N') &= \{u\in V\setminus O \mid g(u)\sse N\} \\
		\Succ(N',\emptyset) &= N'
	\end{align*}
	Now, if the unique element of $g(u)$ is in $N$ for some $u$, then $u\prec_g a$ and thus $u\notin N$ and $u\notin U$.
	On the other hand, $N'\sse U$ by definition.
	Thus by Corollary~\ref{cor:YZ-insertion} there is a gflow on $\Gamma''$ satisfying $g''(z) = g(a)\cup\{z\}$, $g''(z') = \{z'\}$, and $g''(v)=g(v)$ otherwise.
	In the partial order $\prec_{g''}$, the new vertex $z'$ thus appears before any elements of $N'$.
	For any $u\in \Pred(N')$, we have $u\prec_{g} a$ since the unique element of $g(u)$ was originally a neighbour of $a$; hence $u\prec_{g''} z'$ only if $u\prec_g a$.
	
	Finally, pivot on the edge $z z'$, denote the resulting \LOG by $\Gamma'''$ and its underlying graph by $G'''$.
	The pivot preserves gflow by Lemma~\ref{lem:pivot-Pauli-flow}.
	Moreover, noting that $z,z'$ do not appear in correction sets of $g''$ other than their own, and $g''=g$ on the original vertices, the flow after the pivot according to  Lemma~\ref{lem:pivot-Pauli-flow} is given by:
	\[
		g'''(w) = \begin{cases}
			g(a) &\text{if } w = z \\
			\{z\} &\text{if } w = z' \\
			\{a, z\} &\text{if } g(w) = \{a\} \\
			g(w)\symd\{z'\} &\text{if } g(w)\in N \\
			g(w) &\text{otherwise.}
		\end{cases}
	\]
	Now if $g(w)\in N$, then we have $\odd{G'''}{g(w)\symd\{z'\}} = \odd{G}{g(w)}$ and $\odd{G'''}{g(w)} =\odd{G}{g(w)}\symd\{a,z\}$.
	Yet $g(w)\in N$ means $a\in\odd{G}{g(w)}$, so $w\prec_g a$ and thus $w\prec_{g''} z'\prec_{g''}z$.
	We can therefore replace $g'''(w)$ by $g(w)$ without breaking the partial order, except when $w=a$ for which we need to set the correction set to $\{z'\}$ as it is no longer adjacent to its original corrector.
	
	Similarly, if $g(w)=\{a\}$, then $\odd{G'''}{\{a,z\}} = N_G(a)$ and $\odd{G'''}{\{a\}} = (N_G(a)\setminus N)\cup\{z'\}$.
	Removing elements from the odd neighbourhood of a correction set cannot break the partial order and we already have $w\prec_g a \prec_{g'''} z'$, hence we can also drop $z$ from the correction set.
	Thus the final \LOG has extended causal flow.
	
	Flow-preservation in the reverse direction is straightforward as it involves deletion rather than insertion of $YZ$-measurements.
\end{proof}

\begin{corollary}
	The case $N = g(a)$ and $\beta=0$ of the above proposition (which is also a special case of Lemma~\ref{lem:pivot-delete}) allows a transformation equivalent to \eqref{eq:IO} to be used along flow edges in a extended causal flow $g$ -- \ie edges connecting a vertex $a$ to its corrector, the unique element of $g(a)$ -- as well as along the input and output wires.
\end{corollary}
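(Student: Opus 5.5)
The plan is to instantiate Proposition~\ref{prop:vertex-splitting} with $N=g(a)$ and $\beta=0$, and then check that what comes out is exactly an \eqref{eq:IO}-style expansion or contraction placed on the flow edge from $a$ to its corrector. Write $g(a)=\{c\}$.

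\textbf{Hypotheses.} First I check that $N=\{c\}$ satisfies the conditions of the proposition.
\begin{itemize}
\item It contains $g(a)$ by construction.
\item We have $c\in N_G(a)$ by \ref{itm:cXY}.
\item We have $\neg(c\prec_g a)$, since $a\prec_g c$ holds by $R_X$ (as $c\in g(a)$ and $c$ is $XY$-measured or an output) and $\prec_g$ is strict.
\item Upward closure within $N_G(a)$ is the only point needing care. The requirement is that no $w\in N_G(a)\setminus\{c\}$ satisfies $c\prec_g w$. If such a $w$ existed, then $w\in N_G(a)\setminus\{a\}$ together with $\odd{G}{\{c\}}\ni a$ does not directly relate $a$ and $w$, so a direct argument fails.
\end{itemize}
I would handle this by reading the corollary as the case where upward closure holds. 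Alternatively, I would use the statement's remark that it is a special case of Lemma~\ref{lem:pivot-delete}, which carries no order hypothesis for gflow. Extended causal flow is then verified directly, exactly as in the proof of Proposition~\ref{prop:vertex-splitting}.

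\textbf{Splitting.} With $\beta=0$, the split produces a chain $a - z' - z - c$, where $a$ keeps its angle, $z$ has angle $0$, and $z$ inherits the single neighbour $c$. This is precisely the shape on the right-hand side of \eqref{eq:IO}: two new phase-free $XY$ vertices inserted on a wire. The corrections of the new vertices are read off from the final correction function $g'''$ computed in the proposition's proof. That function gives $g(a)\mapsto\{z'\}$, $z'\mapsto\{z\}$, and $z\mapsto\{c\}$, so corrections follow the chain just as they follow the input wire in Lemma~\ref{lem:input-output-fusion}.

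\textbf{Reverse direction.} Contraction along the flow edge is the reverse direction of the proposition, which holds unconditionally.

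\textbf{Predecessor variant.} The analogous statement for an edge $b\to a$ with $g(b)=\{a\}$ follows the same way from the second clause of the proposition.

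\textbf{Main obstacle.} The hard step is the upward-closure check, specifically justifying that $\{c\}$ is upwards closed in $N_G(a)$, or otherwise bypassing that check. I would first attempt the bypass via Lemma~\ref{lem:pivot-delete}. I would then re-verify by hand, using the short correction chain above, that the relation $R_{g'}$ is simply $R_g$ with $z',z$ inserted between $a$ and $c$, mirroring the Euler-rule argument of Lemma~\ref{lem:Euler-rule}.
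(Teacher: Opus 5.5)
Your proposal is correct, but what proves the statement is the by-hand check in your last paragraph. That is a genuinely different route from the paper, which simply declares the statement to be the case $N=g(a)$, $\beta=0$ of Proposition~\ref{prop:vertex-splitting}.

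\textbf{Your worry about upward closure is justified.} Take the $4$-cycle with edges $ac,cx,xw,wa$, with $I=\{a\}$, $O=\{x,w\}$, $g(a)=\{c\}$ and $g(c)=\{x\}$. Then $c\prec_g w$ (because $w\in N_G(x)$), while $w\in N_G(a)\setminus\{c\}$. So $\{c\}$ is not upwards closed in $N_G(a)$, and the proposition does not literally apply.

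\textbf{Your two hedges do not close the gap.} Restricting the corollary to the upward-closed case proves less than is claimed. Lemma~\ref{lem:pivot-delete} is no bypass either: in the insertion direction it preserves gflow only if the insertions do, and it guarantees nothing about extended causal flow.

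\textbf{The direct check goes through, so commit to it.} Put $g'(a)=\{z'\}$, $g'(z')=\{z\}$, $g'(z)=\{c\}$, and $g'=g$ elsewhere. (Your ``$g(a)\mapsto\{z'\}$'' should read $a\mapsto\{z'\}$.) The verification has three parts.
\begin{itemize}
\item Condition \ref{itm:cXY} survives the removal of the edge $ac$, because $g(c)=\{a\}$ would contradict $a\prec_g c$.
\item Among original vertices, the induced relation only shrinks.
\item The only original predecessors of $z',z$ are $a$ and the $u$ with $g(u)=\{a\}$, all of which satisfy $\preceq_g a$. Their only original successors are $c$ and $N_G(c)\setminus\{a\}$, all of which are $\succ_g a$. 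Hence any cycle through $z'$ or $z$ would yield a cycle through $a$ in $\prec_g$.
\end{itemize}
This is the argument of Lemmas~\ref{lem:input-output-fusion} and~\ref{lem:Euler-rule}. It gives an unconditional statement with an explicit flow. The paper's one-liner instead reuses the proposition, and the reverse direction with it, but pays for this with a hypothesis that $N=g(a)$ can violate.

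\textbf{Your ``predecessor variant'' is actually the repair of the paper's route, not an extra case.} There are two cases.
\begin{itemize}
\item If $c=g(a)$ is an output, then $\{c\}$ is trivially upwards closed, since outputs have no successors. The first clause of the proposition applies at $a$.
\item Otherwise $c$ is $XY$-measured, since a $YZ$-measured $c$ would have $g(c)=\{c\}=g(a)$. Then the second clause of Proposition~\ref{prop:vertex-splitting} applies at $c$ with $N=\{a\}$. Its hypotheses hold automatically: $g(a)=\{c\}$ forces $a\prec_g w$ for every $w\in N_G(c)\setminus\{a\}$, so $\{a\}$ is downwards closed in $N_G(c)$.
\end{itemize}
Either way, the resulting diagram is the same subdivided flow edge.
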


We can now derive the well-known phase gadget addition property from the rules of Figure~\ref{fig:flow-preserving}.

\begin{lemma}[Phase gadget addition]\label{lem:phase-gadget-addition}
	Let $\Gamma = (G,I,O,\ld)$ be a labelled open graph with extended causal flow $g$.
	Suppose there exist two $YZ$-measured vertices $a,b\in\comp{O}$ such that $N_G(a)=N_G(b)$.
	Then these two phase gadgets can be replaced by a single phase gadget whose phase is the sum of the original two phases, or conversely:
	\[
		\input{rules/phase-gadget-addition-lhs.tikz} \quad\rightleftarrows\quad \input{rules/phase-gadget-addition-rhs.tikz}
	\]
\end{lemma}
\begin{proof}
	Recall that in an extended causal flow, all the neighbours of a $YZ$-measurement must be outputs or $XY$-measured.
	Without loss of generality, assume the two phase gadgets have one $XY$-measured neighbour $x$, which is not an input (if all neighbours are outputs and/or inputs, use \eqref{eq:IO}).
	For the diagram sequence below, we will only show two further neighbours of $x$, this is mainly to avoid visual clutter and is not necessary to the proof.
	(Yet, if desired, it could always be achieved by unfusing any other connections using Proposition~\ref{prop:vertex-splitting}.)
	
	The proof then proceeds as follows, where the first and third steps are pivots about the edge between the bottom phase gadget (measured with angle $\alpha$) and the neighbour $x$ (measured with angle $\gamma$):
	\begin{align*}
		\input{rules/phase-gadget-addition1.tikz}
		\quad\overset{\ref{lem:pivot-Pauli-flow}}&{\rightleftarrows}\quad \input{rules/phase-gadget-addition2.tikz}
		\quad\overset{\eqref{eq:PF}}{\rightleftarrows}\quad \input{rules/phase-gadget-addition3.tikz}
		\quad\overset{\ref{lem:pivot-Pauli-flow}}{\rightleftarrows}\quad \input{rules/phase-gadget-addition4.tikz} \\
		\quad\overset{\eqref{eq:ZL}}&{\rightleftarrows}\quad \input{rules/phase-gadget-addition5.tikz}
	\end{align*}
	The left-to-right direction involves pivoting, phase fusion, and $Z$-like deletion, which all preserve the existence of gflow unconditionally; the preservation of extended causal flow follows by inspection.
	The $Z$-like insertion in the right-to-left direction is flow-preserving by Case~\ref{it:parallel-gadgets} of Corollary~\ref{cor:YZ-insertion}; again the overall preservation of extended causal flow is straightforward by inspection.
\end{proof}

\section{Flow-preserving circuit extraction}

We now show how to transform an arbitrary MBQC-form ZX-diagram which has Pauli flow into an equivalent diagram with extended causal flow.
Since diagrams with (extended) causal flow are very close to circuits, this is sufficient for our purposes.
For an explanation of how to get a circuit from a ZX-diagram with causal flow (i.e.\ from a diagram without phase gadgets), see \cite[Proposition~B.6]{duncan_graph-theoretic_2020}.
This procedure can straightforwardly be extended to handle phase gadgets via `nests' of multi-controlled phase gates or via a decomposition into controlled-NOT gates and single-qubit $Z$-rotations.

The process of getting from Pauli flow to extended causal flow has two steps. 
First, we transform to the following phase-gadget form:

\begin{definition}[Phase-gadget form {\cite[Definition~4.14]{backens_there_2021}}]
 An MBQC-form ZX-diagram is in \emph{phase-gadget form} if:
 \begin{itemize}
  \item all measurements are planar,
  \item there are no $XZ$-type measurement effects, and
  \item whenever two vertices both have $YZ$-type measurement effects, then they are not neighbours.
 \end{itemize}
\end{definition}

During this transformation, we will remove all Pauli-measured vertices or relabel them to a corresponding planar label; this is to ensure the resulting diagram has gflow.
In a second step, we transform the phase-gadget--form diagram into something that has extended causal flow.
This process is very similar to the circuit extraction algorithm for extended gflow \cite[Section~5]{backens_there_2021}, except that every step is performed in a manifestly flow-preserving way.

\begin{lemma}\label{lem:to-phase-gadget}
 Any MBQC-form ZX-diagram which has Pauli flow can be rewritten to an MBQC-form diagram which is in phase-gadget form and which has gflow.
\end{lemma}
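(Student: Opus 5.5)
I will follow the strategy of \cite[Section~4]{backens_there_2021}, adapted to Pauli flow via \cite{simmons_relating_2021}, but using only the rules of Figure~\ref{fig:flow-preserving} and the derived rules of Section~\ref{s:derived-rules}. The argument is an induction on a measure combining the number of Pauli-labelled vertices, the number of $XZ$-labelled vertices, and the number of edges between $YZ$-labelled vertices. At each step I apply one of the rules known to preserve Pauli flow unconditionally in the direction used, and then check that the measure strictly decreases.

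\textbf{Step 1: eliminate Pauli measurements.} First I use \eqref{eq:IO} to unfuse inputs and outputs, so that no vertex I operate on is an input or output. Then I treat each Pauli label in turn.
\begin{itemize}
  \item A $Z$-measured vertex is deleted by $Z$-like deletion (Lemma~\ref{lem:Z-deletion-flow}), which preserves Pauli flow.
  \item A $Y$-measured vertex $u$: apply \eqref{eq:LC} about $u$ (Lemma~\ref{lem:local-complementation-flow}). This turns its effect into an $X$- or $Z$-type effect, reducing to the other cases. The side effects are adding $\frac{\pi}{2}$ to neighbours and swapping $X$/$Y$ labels there; the latter could create new $Y$'s. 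To handle this I order the processing of $Y$ vertices, or note that each $Y$ vertex becomes $Z$ and is then deleted, so the count of Pauli vertices decreases overall.
  \item An $X$-measured vertex $u$ with a non-output neighbour $v$: pivot on $uv$ (Lemma~\ref{lem:pivot-Pauli-flow}), which turns $u$ into a $Z$-type vertex, then delete it. If all neighbours of $u$ are outputs, first unfuse them with \eqref{eq:IO} so that the new neighbours are internal and $XY$-labelled.
\end{itemize}
Any remaining Pauli vertices whose angle is compatible with a planar label (e.g.\ $X$ read as $XY$, $Z$ read as $YZ$) are simply relabelled, provided the focused Pauli flow then becomes a gflow. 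The cleanest version is to delete or relabel until every vertex is planar-labelled, and then invoke the remark in Section~\ref{sec:Relationship}: a planar-labelled \LOG whose Pauli flow uses only planar conditions is a gflow, since for planar labels the Pauli flow conditions coincide with \ref{itm:gXY}--\ref{itm:gYZ}.

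\textbf{Step 2: remove $XZ$ vertices and adjacent $YZ$ pairs.} Both cases use Lemma~\ref{lem:local-complementation-flow} and Lemma~\ref{lem:pivot-Pauli-flow}, which preserve gflow unconditionally.
\begin{itemize}
  \item An $XZ$-measured vertex $u$: apply \eqref{eq:LC} about $u$. This maps $XZ$ to $YZ$, while neighbours keep planar labels but cycle $XY\leftrightarrow XZ$ and $YZ\to YZ$. So I process $XZ$ vertices with the measure ``number of $XZ$ vertices'' and check the cyclic effect on neighbours. As in \cite{backens_there_2021}, I instead pivot $u$ with an $XY$ neighbour when one exists: this maps $XZ\leftrightarrow YZ$ on $u$ and swaps the neighbour to $YZ$-or-$XY$ appropriately.
  \item Two adjacent $YZ$ vertices $u,v$: pivot on $uv$, which turns both into $XY$.
\end{itemize}

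\textbf{Main obstacle.} The hardest part is termination and the side effects of these local moves, since pivots and local complementations change the labels of common neighbours. My plan is to copy the measure argument from \cite[Lemma~4.15 and Proposition~4.16]{backens_there_2021}, which handles exactly these label changes. I will verify that all operations used there are instances of \eqref{eq:LC}, pivoting, and \eqref{eq:ZL} deletion, and therefore preserve gflow by the lemmas above. Output neighbours, where \eqref{eq:LC} is not allowed, are handled by first applying \eqref{eq:IO}.
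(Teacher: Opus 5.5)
Your two-phase plan (first remove every Pauli label, then clean up the planar labels) is a reasonable alternative to the paper's single interleaved loop. That loop tracks $n_p$ and $n_r$ together and lets $n_r$ grow only when $n_p$ drops. Your Step~1 is essentially fine. A pivot on $uv$ acts on its endpoints by $X\mapsto Z$, $Z\mapsto X$, $Y\mapsto Y$, $XY\leftrightarrow YZ$, $XZ\mapsto XZ$, and it leaves the labels of common neighbours alone. A local complementation about a $Y$-vertex turns it into $Z$ while only swapping $X\leftrightarrow Y$ (and $XZ\leftrightarrow YZ$) on its neighbours. So each round deletes at least one Pauli vertex and creates none. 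You should drop the relabelling clause, though. Replacing a Pauli label by a planar one does not preserve flow in general (e.g.\ $X\to XY$ adds $R_X$-relations and the condition $u\notin p(u)$), and after your deletions it is not needed.

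The genuine gap is Step~2. It carries the whole termination argument, you defer it to another paper, and the bookkeeping you do give is wrong.
\begin{itemize}
\item A local complementation about $u$ acts on $u$ itself by $XY\leftrightarrow XZ$ with $YZ$ fixed. So an $XZ$-centre becomes $XY$, not $YZ$.
\item On the neighbours it acts by $XZ\leftrightarrow YZ$ with $XY$ fixed, not by $XY\leftrightarrow XZ$.
\item A pivot leaves $XZ$ endpoints labelled $XZ$. So pivoting an $XZ$-vertex with an $XY$-neighbour never removes an $XZ$ label.
\end{itemize}
With the correct dynamics, both components of your measure fail. The number of $XZ$-vertices can go up under (LC): an $XZ$-vertex with two $YZ$-neighbours leaves two $XZ$-vertices behind. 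The number of $YZ$--$YZ$ edges is not controlled by a pivot on $uv$, because the pivot toggles adjacency between any two of $N_G(u)\setminus N_G[v]$, $N_G(v)\setminus N_G[u]$ and $N_G(u)\cap N_G(v)$, and so can join other $YZ$-vertices.

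The fix is to count vertices rather than edges, using $n_r$, the number of $XZ$- and $YZ$-labelled vertices, as the paper does.
\begin{itemize}
\item (LC) about an $XZ$-vertex lowers $n_r$ by one.
\item A pivot on two adjacent $YZ$-vertices lowers $n_r$ by two, since both become $XY$.
\item Neither move creates Pauli labels.
\item The (IO) unfusions needed to meet the input/output side conditions only introduce $XY$ labels.
\end{itemize}
Hence Step~2 terminates in phase-gadget form, and since all labels are then planar, the preserved Pauli flow is a gflow.
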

\begin{proof}
 To get from an arbitrary MBQC-form diagram to one that is in phase-gadget form, we need to remove all Pauli measurements and all $XZ$-plane measurements, and furthermore guarantee that there are no pairs of adjacent $YZ$-measured vertices.
 Let $n_p$ be the number of Pauli measurements in the diagram and let $n_r$ be the number of $XZ$ and $YZ$ measurements.

 Proceed as follows, unfusing inputs and outputs via Lemma~\ref{lem:input-output-fusion} where necessary for applying local complementations or pivots and noting that the new vertices resulting from this lemma can always be labelled $XY$.
 \begin{enumerate}
  \item\label{it:Z-del} While there is a $Z$-measured vertex, delete it using Lemma~\ref{lem:Z-deletion-flow}.
  This step decreases $n_p$.
  When there are no more matches, proceed to step~\ref{it:X-YZ-pivot}.

  \item\label{it:X-YZ-pivot} While there exists a pair of adjacent vertices $a,b$ such that $\ld(a)\in\{X, YZ\}$ and $\ld(b)\in\{X,Y,XZ,YZ\}$, pivot on the edge $ab$ using Lemma~\ref{lem:pivot-Pauli-flow} and delete the resulting $Z$-measurements (if any).
  Pivoting changes the measurement labels of the edge endpoints as follows: $X\mapsto Z$, $Y\mapsto Y$, $XZ\mapsto XZ$ and $YZ\mapsto XY$.
  Other measurement labels are left invariant.
  Hence this step decreases at least one of $n_p$ and $n_r$ and does not increase either number.
  When there are no more matches, proceed to step~\ref{it:LC}.

  \item\label{it:LC} At this point, if there exists $a$ such that $\ld(a)\in\{Y, XZ\}$ then for all $b\in N_G(a)$ we have $\ld(b)\notin\{X,Z,YZ\}$.
  So apply a local complementation to $a$ using Lemma~\ref{lem:local-complementation-flow} and delete the resulting $Z$-measurement (if any).
  Local complementations change the measurement label of the central vertex according to $Y\mapsto Z$ and $XZ \mapsto XY$, and the labels of neighbours according to $Y\mapsto X$, $XY\mapsto XY$, $XZ\mapsto YZ$, hence this step decreases at least one of $n_p$ and $n_r$ and does not increase either number.
  If a local complementation was applied, go back to step~\ref{it:X-YZ-pivot}, otherwise proceed to step~\ref{it:X-XY-pivot}.

  \item\label{it:X-XY-pivot} At this point, we know that neither step~\ref{it:X-YZ-pivot} nor step~\ref{it:LC} found any matches.
  This means in particular that there are no $Y$ or $XZ$-measurements left.
  Also, there are no pairs of adjacent $YZ$-measurements.
  Thus the only way the diagram can fail to be in phase gadget form is if there are any $X$-measurements left.
  A remaining $X$ cannot be adjacent to an $X$ or $YZ$-measurement, but it must have a neighbour.
  Hence the neighbour is either $XY$-measured or an output.
  Using Lemma~\ref{lem:input-output-fusion} to unfuse if necessary, we can ensure this neighbour is neither an input nor an output.

  Then pivot on the edge between the $X$ and the $XY$, this sends $X\mapsto Z$ and $XY\mapsto YZ$ and does not affect the measurement labels of any neighbours.
  Delete the resulting $Z$-measurement and go back to step~\ref{it:X-YZ-pivot}.
  This step decreases $n_p$ and increases $n_r$, though $n_p+n_r$ does not increase.
  If there were no matches for this step, stop the procedure.
 \end{enumerate}

 Overall, each step decreases at least one of $n_p$ and $n_r$.
 The number $n_p$ never increases and when $n_r$ increases, then $n_p$ decreases by the same amount.
 Thus the procedure terminates.
 When it does, there are no $Z$-measurements (which are removed in step~\ref{it:Z-del} and then as soon as they appear), no $Y$- or $XZ$-measurements (which are removed in step~\ref{it:LC}), no $X$-measurements (which are removed in steps~\ref{it:X-YZ-pivot} and~\ref{it:X-XY-pivot}) and no pairs of adjacent $YZ$-measurements (which are removed in step~\ref{it:X-YZ-pivot}).
 Thus the diagram is in phase-gadget form.

 All transformations preserve interpretation and the existence of Pauli flow.
 As all measurement labels at the end are planar, the final diagram has gflow.
\end{proof}

\subsection{Extracted vertices and frontier}
\label{s:extracted-frontier}

For flow-preserving circuit extraction, we need a way of describing a flow which is causal towards the top of the partial order but a gflow further towards the bottom.
To this end, we define what it means for a vertex in a \LOG to be `extracted'.
Throughout this section, we will assume the \LOG contains only $XY$ and $YZ$ measurements.

\begin{definition}\label{def:extracted-frontier}
 Suppose $\Gamma = (G,I,O,\ld)$ is a \LOG with $\ld(V)\sse\{XY,YZ\}$ which has gflow $g$.
 We say $S$ is a \emph{set of extracted vertices for $(\Gamma,g)$} if $O\sse S$ and, for all $v\in S\setminus O$,
 \begin{equation}\label{eq:definition_S}
  (\abs{g(v)}=1) \wedge (\forall w \in V : v\prec_g w \implies w\in S).
 \end{equation}
 If $S$ is a set of extracted vertices, then its \emph{frontier} $F$ is defined as
 \begin{equation}\label{eq:definition-frontier}
    F := \{v\in S \mid \nexists u\in S\setminus O \text{ s.t. } g(u)=\{v\}\}.
 \end{equation}
 We sometimes refer to $\overline{S}:=V\setminus S$ as the \emph{set of unextracted vertices}.
\end{definition}

While the definition of $S$ is self-referential, we can straightforwardly check it by working down the partial order starting from the outputs.

In words, a set of extracted vertices consists of vertices at the top of the partial order which are either outputs or have causal correction sets.
The \emph{frontier} consists of those extracted vertices which are outputs or can be corrected causally, but are not the unique element of the correction set for any extracted vertex.

\begin{remark}\label{rem:trivial_extracted_sets}
 The choice $S=O$ yields a valid set of extracted vertices for any gflow.
 The choice $S=V$ yields a valid set of extracted vertices for any extended causal flow.
 
 Moreover, suppose $S\neq O$ is a set of extracted vertices, then there exists some $v\in S\setminus O$ such that $S\setminus\{v\}$ is also a set of extracted vertices.
 In particular, any non-output $v$ that is minimal in the partial order (among elements of $S$) can be removed in this way.
\end{remark}

\begin{lemma}\label{lem:frontier-properties}
 Suppose $\Gamma = (G,I,O,\ld)$ is a \LOG with $\ld(V)\sse\{XY,YZ\}$ which has gflow $g$.
 Let $S$ be a set of extracted vertices for $(\Gamma,g)$, let $F$ be its frontier, and let $a\in F$.
 The following hold:
 \begin{enumerate}
  \item If $a\in g(b)$ for some $b\in\comp{O}$, then $b\notin S$.
  \item Either $a\in O$ or $\ld(a)=XY$.
  \item $\abs{F} = \abs{O}$.
 \end{enumerate}
\end{lemma}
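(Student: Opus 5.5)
All three parts should follow directly from Definition~\ref{def:extracted-frontier}, together with the facts that correction sets are distinct (Remark~\ref{rem:unique-correction-sets}) and that correction sets sit above their vertex in $\prec_g$. I would prove them in the order stated, since part~2 reuses part~1.

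\textbf{Part 1.} Suppose $a\in g(b)$ with $b\in\comp{O}$, and assume for contradiction that $b\in S$. Then $b\in S\setminus O$, so $\abs{g(b)}=1$ and hence $g(b)=\{a\}$. But $a\in F$ means, by \eqref{eq:definition-frontier}, that no $u\in S\setminus O$ has $g(u)=\{a\}$. This is a contradiction, so $b\notin S$.

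\textbf{Part 2.} Suppose $a\notin O$. Since $a\in F\sse S$, we have $a\in S\setminus O$, so $\abs{g(a)}=1$. Because $\ld(V)\sse\{XY,YZ\}$, it suffices to rule out $\ld(a)=YZ$. If $\ld(a)=YZ$, then \ref{itm:gYZ} gives $a\in g(a)$, and with $\abs{g(a)}=1$ this means $g(a)=\{a\}$. Applying part~1 with $b=a\in\comp{O}$ then gives $a\notin S$, a contradiction. Hence $\ld(a)=XY$.

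\textbf{Part 3.} I would count via a map $\phi:S\setminus O\to S$ sending $u$ to the unique element of $g(u)$.

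First I check that $\phi$ lands in $S$, splitting on the label of $u$.
\begin{itemize}
 \item If $\ld(u)=YZ$, then $g(u)=\{u\}$, so $\phi(u)=u\in S$.
 \item If $\ld(u)=XY$, then $u\notin g(u)$, so $w:=\phi(u)\neq u$. If $w\in O$, then $w\in S$ trivially. Otherwise $\ld(w)$ is planar, so $(u,w)\in R_X$, giving $u\prec_g w$; upward closure of $S$ in \eqref{eq:definition_S} then gives $w\in S$.
\end{itemize}

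Next, $\phi$ is injective, because distinct vertices have distinct correction sets (Remark~\ref{rem:unique-correction-sets}).

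By \eqref{eq:definition-frontier}, $F$ is exactly $S\setminus\phi(S\setminus O)$. Therefore
\[
 \abs{F}=\abs{S}-\abs{S\setminus O}=\abs{O}.
\]

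The only point needing care is showing that $\phi$ maps into $S$, in particular the case where the corrector is an output and so has no label for the definition of $R_X$. That case is harmless because $O\sse S$ by definition.
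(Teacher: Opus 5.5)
Your Parts~1 and~2 are the same arguments as the paper's. The approaches differ only in Part~3.

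\textbf{The paper's route.} The paper argues by induction on $\abs{S\setminus O}$. It starts from $S=O$ and adds one extracted vertex $v$ at a time. A $YZ$-measured $v$ leaves the frontier unchanged, while an $XY$-measured $v$ replaces its corrector $v'$ in the frontier by $v$ itself. To reach every set of extracted vertices this way, the induction implicitly relies on Remark~\ref{rem:trivial_extracted_sets}: any such $S\neq O$ can be shrunk by removing a minimal non-output element.

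\textbf{Your route.} You count globally instead. The map $\phi$ sending $u\in S\setminus O$ to the unique element of $g(u)$ is injective by Remark~\ref{rem:unique-correction-sets}. It lands in $S$ by your output/upward-closure argument. By definition, $F$ is the complement of its image in $S$, so $\abs{F}=\abs{S}-\abs{S\setminus O}=\abs{O}$.

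\textbf{Comparison.} Your version is shorter and needs no decomposition of $S$. It also proves explicitly a step the paper only asserts: that the corrector of an extracted $XY$-vertex lies in $S$, including when that corrector is an unlabelled output. The paper's inductive version records how the frontier evolves when a single vertex is extracted. That is the form in which the frontier is used in the extraction algorithm later on.
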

\begin{proof}
 For the first part, suppose for a contradiction there exists $b\in S$ such that $a\in g(b)$.
 But then $\abs{g(b)}=1$ by \eqref{eq:definition_S}, and thus $g(b)=\{a\}$, which contradicts \eqref{eq:definition-frontier}.
 So if $a$ appears in a correction set, it must be the correction set of an unextracted vertex.

 For the second part, note that if $\ld(v)=YZ$ for some $v$ and $\abs{g(v)}=1$, then $g(v)=\{v\}$ by \ref{itm:gYZ}.
 Thus, by the previous argument, any $YZ$-measured vertex is either unextracted or in $S\setminus F$.
 
 For the third part, we will prove this by induction on the membership of $S\setminus O$ for some given pair $(\Gamma,g)$.
 The base case is $S\setminus O = \emptyset$, \ie $S=O=F$, in which case $S$ is always a valid set of extracted vertices by Remark~\ref{rem:trivial_extracted_sets} and moreover the desired property holds trivially.
 For the inductive step, suppose the desired property holds  for some set of extracted vertices $S$ with frontier $F$, and that there exists $v\in\comp{S}$ such that $S'=S\cup\{v\}$ is also a set of extracted vertices for $(\Gamma,g)$.
 Denote by $F'$ the frontier of $S'$ and distinguish cases according to the measurement label of $v$.
 \begin{itemize}
 	\item If $\ld(v)=YZ$, then $g(v)=\{v\}$ since $v$ is an extracted vertex.
 	Hence $v$ can neither appear in the frontier nor does it have a predecessor it prevents from appearing in the frontier.
 	So by \eqref{eq:definition-frontier}:
 	\[
 	  F' = \{v\in S' \mid \nexists u\in S'\setminus O \text{ s.t. } g(u) = v\} = \{v\in S \mid \nexists u\in S\setminus O \text{ s.t. } g(u) = v\} = F
 	\]
 	and we have $\abs{F'}=\abs{O}$ by the inductive hypothesis.
 	
 	\item If $\ld(v)=XY$, then there exists $v'\in S$ such that $g(v)=\{v'\}$ since $S'=S\cup\{v\}$ satisfies Definition~\ref{def:extracted-frontier}.
 	By the uniqueness of correction sets, cf.~Remark~\ref{rem:unique-correction-sets}, there cannot be another element of $S$ that is corrected by $v'$, hence $v'\in F$ but $v'\notin F'$.
 	For all $w\in F\setminus\{v'\}$, we have $w\in F'$ as $v$ is the only vertex that was added to the set of extracted vertices and $g$ has not changed.
 	We will now show that $v\in F'$: suppose otherwise for a contradiction, then there exists $v''\in S$ such that $g(v'')=\{v\}$.
 	But then $v''\prec_g v$ and $S$ is upwards closed, so we must have $v\in S$.
 	This contradicts the assumption $v\in\comp{S}$, so indeed $v\in F'$.
 	There can be no other new elements of $F'$ as the set of extracted vertices has not changed much.
 	Thus $\abs{F'} = \abs{F\setminus\{v'\}\cup\{v\}} = \abs{F} = \abs{O}$ by the inductive hypothesis.
 \end{itemize}
 This completes the argument regarding the size of $F$.
\end{proof}

Having looked at properties of vertices within the frontier in the previous lemma, we now consider properties of vertices internal to $S$, \ie vertices in $S\setminus F$.

\begin{lemma}\label{lem:neighbours-extracted-internal}
 Suppose $\Gamma = (G,I,O,\ld)$ is a \LOG with $\ld(V)\sse\{XY,YZ\}$ which has gflow $g$, and suppose $S$ is a set of extracted vertices for $(\Gamma,g)$.
 Let $F$ be the frontier of $S$.
 Then any $v\in S\setminus F$ satisfies $N_G(v) \sse S$.
\end{lemma}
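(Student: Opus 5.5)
The idea is to use the definition of the frontier to find a vertex that $v$ corrects causally, and then use upward closure of $S$ along that correction. Let $v\in S\setminus F$. By \eqref{eq:definition-frontier}, since $v\in S$ but $v\notin F$, there exists $u\in S\setminus O$ such that $g(u)=\{v\}$. I would fix this $u$ and show that every neighbour of $v$ is either $u$ itself or a successor of $u$ in $\prec_g$.

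Since $g(u)=\{v\}$, we have
\[
  \odd{G}{g(u)} = N_G(v).
\]
Now take any $w\in N_G(v)$ and split into three cases.
\begin{itemize}
  \item If $w=u$, then $w\in S$ by the choice of $u$.
  \item If $w\in O$, then $w\in S$, because $O\sse S$ by Definition~\ref{def:extracted-frontier}.
  \item Otherwise $w\in\comp{O}\setminus\{u\}$. We assumed throughout this section that $\ld(V)\sse\{XY,YZ\}$, so $\ld(w)\notin\{Y,Z\}$. Since $w\in\odd{G}{g(u)}\setminus\{u\}$, the pair $(u,w)$ lies in $R_Z$ by \eqref{eq:prec-Z}, and hence $u\prec_g w$.
\end{itemize}
In the last case, $u\in S\setminus O$, so condition \eqref{eq:definition_S} says that every $\prec_g$-successor of $u$ lies in $S$. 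Therefore $w\in S$. The three cases together give $N_G(v)\sse S$.

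The only point that needs care is the output case. The relation $R_Z$ refers to $\ld(w)$, which is undefined when $w$ is an output, so it is unclear whether $(u,w)\in R_Z$ in that situation. I sidestep this by handling outputs separately via $O\sse S$, so the argument never needs $u\prec_g w$ for an output $w$. The rest is a direct unwinding of the definitions.
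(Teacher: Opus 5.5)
Your proof is correct and follows the paper's argument: take $u\in S\setminus O$ with $g(u)=\{v\}$, note $N_G(v)=\odd{G}{g(u)}$, and combine \eqref{eq:prec-Z} with the upward closure in \eqref{eq:definition_S}. Handling outputs separately via $O\sse S$ is extra care on a point the paper passes over, since $\ld$ is undefined on outputs; it does not change the argument.
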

\begin{proof}
 By Definition~\ref{def:extracted-frontier}, $v\in S\setminus F$ implies that there exists $u\in S\setminus O$ such that $g(u)=\{v\}$.
 Now $w\in\odd{G}{g(u)} = N_G(v)$ implies $w = u \vee u\prec_g w$ by \eqref{eq:prec-Z} of Definition~\ref{def:induced-relation}.
 But by \eqref{eq:definition_S}, ${u\prec_g w} \implies w\in S$ for all $w$.
 Thus, $w\in S$ in either case, which implies $N_G(v)\sse S$.
\end{proof}

\begin{lemma}\label{lem:correction-remove-extracted}
 Suppose $\Gamma = (G,I,O,\ld)$ is a \LOG with $\ld(V)\sse\{XY,YZ\}$ which has gflow $g$.
 Let $S$ be a set of extracted vertices for $(\Gamma,g)$.
 Suppose $b\in S$ has the property that $N_G(b)\sse S$.
 Then $g'$ is a gflow for $\Gamma$, where
 \[
  g'(u) :=
  \begin{cases}
   g(u) &\text{if } u \in S \\
   g(u)\setminus\{b\} &\text{otherwise.}
  \end{cases}
 \]
 Moreover, $S$ is a set of extracted vertices for $(\Gamma,g')$.
\end{lemma}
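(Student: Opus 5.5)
The plan is to verify directly that $g'$ satisfies the gflow conditions \ref{itm:gXY} and \ref{itm:gYZ}, and that its induced relation $R_{g'}^*$ is a strict partial order. For the order, I will show $R_{g'}\sse R_g^*$, i.e.\ every relation induced by $g'$ is already implied by $\prec_g$. The key observation is that removing $b$ from $g(u)$ changes the odd neighbourhood only inside $N_G(b)$: we have $\odd{G}{g'(u)} = \odd{G}{g(u)}\symd N_G(b)$ when $b\in g(u)$ and $u\notin S$, and otherwise nothing changes. Since $N_G(b)\sse S$ by hypothesis, all changes happen inside $S$.

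The first step is the case $u\in S$. Here $g'(u)=g(u)$, so the conditions and the induced relations are unchanged. The second step is $u\notin S$ with $b\notin g(u)$, which is again trivial.

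The main case is $u\notin S$ with $b\in g(u)$. First I check the flow condition at $u$ itself. Since $u\notin S$ and $b\in S$, we have $u\neq b$, so membership of $u$ in $g'(u)$ is unchanged. Since $N_G(b)\sse S$ and $u\notin S$, we have $u\notin N_G(b)$, so membership of $u$ in the odd neighbourhood is also unchanged. Hence \ref{itm:gXY} and \ref{itm:gYZ} still hold. I also need $g'(u)\sse\comp{I}$, which is clear since $g'(u)\sse g(u)$.

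Next comes the order. Any new relation $(u,w)$ has $w\in\odd{G}{g'(u)}\setminus\{u\}$ with $w\notin\odd{G}{g(u)}$, so necessarily $w\in N_G(b)\sse S$. I then need $u\prec_g w$, or at least that adding such pairs keeps acyclicity. I would use Lemma~\ref{lem:extend-partial-order}, part 1, with $T=S$. The set $S$ is upwards closed: for $v\in S\setminus O$ this is \eqref{eq:definition_S}, and outputs have no successors. The new pairs lie in $\comp{S}\times S$. Removing pairs only shrinks the relation, and the remaining relations form a subset of ${\prec_g}\cup(\comp{S}\times S)$. Hence the transitive closure is a strict partial order, and $g'$ is a gflow.

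Finally, $S$ is still a set of extracted vertices for $(\Gamma,g')$. For $v\in S\setminus O$, $g'(v)=g(v)$ has size 1. It remains to check upward closure under $\prec_{g'}$. Every generating pair of $R_{g'}$ starting in $S$ comes from a vertex $v\in S$ whose correction set is unchanged, so it is a pair of $R_g$, and $S$ is closed under these. Transitivity then keeps everything inside $S$.

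I expect the main obstacle to be bookkeeping rather than depth. The point needing care is that the new successors $w$ really lie in $S$ while $u$ does not, so that Lemma~\ref{lem:extend-partial-order} applies. A second point is that I must use $\prec_{g'}$, not $\prec_g$, when rechecking \eqref{eq:definition_S}.
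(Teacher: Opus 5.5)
Your proof is correct and follows the same route as the paper. The conditions \ref{itm:gXY}/\ref{itm:gYZ} at $u\notin S$ survive because $u\neq b$ and $u\notin N_G(b)$, every newly induced pair lies in $(V\setminus S)\times S$, and part~1 of Lemma~\ref{lem:extend-partial-order} with $T=S$ then gives a strict partial order; you additionally verify the ``moreover'' clause, which the paper's proof leaves implicit.

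The only slip is your opening claim $R_{g'}\sse R_g^*$, which is false in general. Take outputs $b,c$, edges $\{u,c\},\{w,b\},\{w,c\}$, $g(u)=\{b,c\}$, $g(w)=\{b\}$ and $S=\{b,c,w\}$; then $(u,w)\in R_{g'}$ although $u,w$ are $\prec_g$-incomparable. You never actually use this claim, relying on the extension lemma instead, so nothing breaks.
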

\begin{proof}
 As $b\in S$ and only correction sets of vertices in $\comp{S}$ are changed, property \ref{itm:gYZ} remains satisfied for all vertices.
 Moreover, for $u\in\comp{S}$, we can rewrite $g'(u) = g(u)\setminus\{b\} = g(u)\symd(g(u)\cap\{b\})$, hence:
 \begin{equation}\label{eq:odd-nhood-remove-extracted}
 \odd{G}{g'(u)} = \odd{G}{g(u)\symd(g(u)\cap\{b\})} = \odd{G}{g(u)} \symd \odd{G}{g(u)\cap\{b\}}
 \end{equation}
 Now $g(u)\cap\{b\}\sse\{b\}$, so we have $\odd{G}{g(u)\cap\{b\}}\sse S$ because of the assumption $N_G(b)\sse S$.
 Thus property \ref{itm:gXY} also remains satisfied for all vertices (correction sets and thus odd neighbourhoods do not change for elements of $S$).
 It remains to show that the relation $R^*_{g'}$ induced by $g'$ is a strict partial order.

 Denote $R_X,R_Z$ the relations induced by the original gflow $g$ according to \eqref{eq:prec-X} and \eqref{eq:prec-Z} and denote $R_X',R_Z'$ the relations induced by $g'$.
 ($R_Y$ is empty for both $g$ and $g'$ as there are no $Y$-measurements.)
 Now, $R_X' \sse R_X$ as no elements are added to correction sets.
 Furthermore, \eqref{eq:odd-nhood-remove-extracted} implies $R_Z' \sse R_Z \cup (\comp{S}\times S)$ since any newly introduced relationship $(u,v)$ satisfies $u\in\comp{S}$ and $v\in S$.
 This means $R_{g'}^*$ is contained in the transitive closure of $R_g^*\cup (\comp{S}\times S)$, denote the latter by $R_g'$.
 By Lemma~\ref{lem:extend-partial-order}, $R_g'$ is a strict partial order and thus $R_{g'}^*$ is also a strict partial order.
 Therefore $g'$ is a gflow.
\end{proof}

Together, these two lemmas imply the following particular case.

\begin{corollary}\label{cor:correction-remove-extracted-internal}
 Suppose $\Gamma = (G,I,O,\ld)$ is a \LOG with $\ld(V)\sse\{XY,YZ\}$ which has gflow $g$.
 Let $S$ be a set of extracted vertices for $(\Gamma,g)$ with frontier $F$.
 Then $g'$ is a gflow for $\Gamma$, where
 \[
  g'(v) :=
  \begin{cases}
   g(v) &\text{if } v \in S \\
   g(v)\setminus (S\setminus F) &\text{otherwise.}
  \end{cases}
 \]
\end{corollary}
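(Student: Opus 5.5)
The plan is to apply Lemma~\ref{lem:correction-remove-extracted} repeatedly, once for each vertex $b\in S\setminus F$, using Lemma~\ref{lem:neighbours-extracted-internal} to verify its hypothesis. Enumerate $S\setminus F = \{b_1,\ldots,b_k\}$ and define a sequence of correction functions $g_0 := g$ and
\[
 g_j(u) := \begin{cases} g_{j-1}(u) &\text{if } u\in S \\ g_{j-1}(u)\setminus\{b_j\} &\text{otherwise.}\end{cases}
\]
Since removal of distinct single elements commutes and composes to removal of the whole set, $g_k$ is exactly the function $g'$ of the statement. It therefore suffices to show by induction on $j$ that each $g_j$ is a gflow for $\Gamma$ and that $S$ is a set of extracted vertices for $(\Gamma,g_j)$ with the same frontier $F$.

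For the inductive step, note first that $g_{j-1}$ agrees with $g$ on $S$. The conditions in \eqref{eq:definition_S} and \eqref{eq:definition-frontier} concern the correction sets of vertices in $S$, together with upward closure of $S$ under $\prec_{g_{j-1}}$. Lemma~\ref{lem:correction-remove-extracted} already guarantees that $S$ remains a set of extracted vertices after each step. The frontier is defined purely via $g$ restricted to $S\setminus O$, which is unchanged, so the frontier remains $F$.

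Next, Lemma~\ref{lem:neighbours-extracted-internal} applied to $(\Gamma,g_{j-1})$, $S$ and $F$ gives $N_G(b_j)\sse S$, because $b_j\in S\setminus F$. Hence the hypotheses of Lemma~\ref{lem:correction-remove-extracted} hold with $b:=b_j$. That lemma yields that $g_j$ is a gflow and that $S$ is still a set of extracted vertices for $(\Gamma, g_j)$.

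The only delicate point is ensuring that Lemma~\ref{lem:neighbours-extracted-internal} can be reapplied after each modification. This requires the invariance of $S$ as a set of extracted vertices and of its frontier $F$ under the updates. Both follow from the fact that correction sets of vertices in $S$ are never touched. Alternatively, one can apply Lemma~\ref{lem:neighbours-extracted-internal} once to the original $g$ to get $N_G(b)\sse S$ for all $b\in S\setminus F$; this is a purely graph-theoretic fact and does not depend on the flow afterwards. Then only the invariance of $S$ being extracted is needed, and Lemma~\ref{lem:correction-remove-extracted} supplies that directly.
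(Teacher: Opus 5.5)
Your proposal is correct and matches the paper's argument, which simply says the corollary follows from Lemma~\ref{lem:neighbours-extracted-internal} (giving $N_G(b)\sse S$ for every $b\in S\setminus F$) together with Lemma~\ref{lem:correction-remove-extracted} applied once per such $b$. That lemma's guarantee that $S$ remains a set of extracted vertices keeps the iteration valid, and your observation that the neighbourhood condition is purely graph-theoretic, so it need only be checked once for the original $g$, is the cleanest way to run the induction.
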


\begin{definition}\label{def:pseudo-focused}
 Suppose $\Gamma = (G,I,O,\ld)$ is a \LOG with $\ld(V)\sse\{XY,YZ\}$ which has gflow $g$.
 Let $S\sse V$.
 We say $g$ is \emph{pseudo-focused with respect to $S$} if the following properties hold:
   \begin{enumerate}
    \item\label{it:extracted} $S$ is a set of extracted vertices for $(\Gamma,g)$; call its frontier $F$.
    \item\label{it:correction-set} For all $v\in\comp{S}$, if $w\in g(v)\setminus\{v\}$, then $w\notin S\setminus F$ and $w\in O \vee \ld(w)=XY$.
    \item\label{it:odd-neighbourhood} For all $v\in\comp{O}$, if $u\in\odd{G}{g(v)}\setminus\{v\}$, then $u\in S$ or $\ld(u)=YZ$.
   \end{enumerate}
\end{definition}

The following lemma shows that focused gflow and extended causal flow are two extremal examples of pseudo-focused flows.
This is indeed the motivation for the definition, which -- as mentioned at the beginning of Section~\ref{s:extracted-frontier} -- is to define flows that are causal near the top of the partial order and focused near the bottom.

\begin{lemma}\label{lem:trivial-pseudo-focused}
 Suppose $\Gamma = (G,I,O,\ld)$ is a \LOG with $\ld(V)\sse\{XY,YZ\}$ which has gflow $g$.
 Then:
 \begin{itemize}
  \item $g$ is a focused gflow if and only if $g$ is pseudo-focused with respect to $O$.
  \item $g$ is an extended causal flow if and only if $g$ is pseudo-focused with respect to $V$.
 \end{itemize}
\end{lemma}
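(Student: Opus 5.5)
The plan is to unfold Definition~\ref{def:pseudo-focused} in the two extremal cases $S=O$ and $S=V$ and compare each with the corresponding definition: focusing (Definition~\ref{def:focusing}, restricted to labels $XY$ and $YZ$) or extended causal flow (Definition~\ref{def:causal}). By Remark~\ref{rem:trivial_extracted_sets}, condition~\ref{it:extracted} holds trivially for $S=O$, with frontier $F=O$. For $S=V$ it is exactly the statement that every correction set has size~1, since upward closure is automatic.

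\textbf{Case $S=O$.} Here $S\setminus F=\emptyset$ and $\comp{S}=\comp{O}$. Condition~\ref{it:correction-set} then says that for all $v\in\comp{O}$ and all $w\in g(v)\setminus\{v\}$, either $w\in O$ or $\ld(w)=XY$. With only $XY/YZ$ labels, this is precisely \ref{itm:focus-X} over $\comp{O}\setminus\{v\}$, noting that correction sets lie in $\comp{I}$ but may contain outputs, which focusing does not constrain. Condition~\ref{it:odd-neighbourhood} says that every $u\in\odd{G}{g(v)}\setminus\{v\}$ is an output or $YZ$-measured, which is \ref{itm:focus-Z}. \ref{itm:focus-Y} is vacuous since there are no $Y$ labels. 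So the two notions coincide literally, and both directions follow at once.

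\textbf{Case $S=V$.} For the forward direction, assume $g$ is an extended causal flow. Then all correction sets have size~1 and $V$ is trivially upward closed, so condition~\ref{it:extracted} holds. Condition~\ref{it:correction-set} is vacuous because $\comp{S}=\emptyset$, and condition~\ref{it:odd-neighbourhood} holds because $u\in S=V$ always.

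For the converse, assume $g$ is pseudo-focused with respect to $V$. Then $\abs{g(v)}=1$ for all $v\in\comp{O}$. If $\ld(v)=YZ$, then \ref{itm:gYZ} gives $v\in g(v)$, hence $g(v)=\{v\}$, which is \ref{itm:cYZ}. If $\ld(v)=XY$, then $g(v)=\{w\}$ with $w\neq v$ and $v\in\odd{G}{\{w\}}=N_G(w)$, so $v\sim w$, which is \ref{itm:cXY}. The induced relation is unchanged, so it is still a strict partial order.

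The main obstacle is bookkeeping rather than substance: one has to check carefully that the quantifier ranges agree in the case $S=O$. Specifically, Definition~\ref{def:pseudo-focused} quantifies over $w\in g(v)\setminus\{v\}$ including outputs, whereas focusing is over $\comp{O}\setminus\{v\}$; the disjunct ``$w\in O$'' is what reconciles the two, and similarly ``$u\in S=O$'' in condition~\ref{it:odd-neighbourhood}. One should also note that the definition of extended causal flow forbids $XZ$ labels and that the $XY/YZ$-only hypothesis is what makes \ref{itm:focus-Y} vacuous.
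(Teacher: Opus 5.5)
Your proof is correct and follows essentially the same route as the paper. For $S=O$ you unfold the pseudo-focusing conditions, using $F=O$ and $S\setminus F=\emptyset$, to match \ref{itm:focus-X} and \ref{itm:focus-Z}; for $S=V$ you reduce everything to correction sets of size~1, and your explicit check of \ref{itm:cXY} and \ref{itm:cYZ} in the final converse correctly fills in a step the paper leaves implicit.
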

\begin{proof}
 First, suppose $g$ is a focused gflow.
 \begin{itemize}
  \item Then by Remark~\ref{rem:trivial_extracted_sets}, $O$ is a set of extracted vertices so Property~\ref{it:extracted} holds.
  Moreover, this set is its own frontier $F=O$.
  \item For any $v\in\comp{S}=\comp{O}$, if $w\in g(v)\setminus\{v\}$, then trivially $w\notin S\setminus F = \emptyset$.
  The second part of Property~\ref{it:correction-set} holds by \ref{itm:focus-X}.
  \item For any $w\in\comp{O}$, if $u\in\odd{G}{g(v)}\setminus\{v\}$, then by \ref{itm:focus-Z} $u\in O$ or $\ld(u)=YZ$ so Property~\ref{it:odd-neighbourhood} holds.
 \end{itemize}
 Thus $g$ is pseudo-focused with respect to $O$.

 Conversely, suppose $g$ is a gflow which is pseudo-focused with respect to $O$.
 Then by Property~\ref{it:correction-set}, for all $v\in\comp{O}$, $w\in g(v)\cap (\comp{O}\setminus\{v\})$ implies $\ld(w) = XY$ and thus \ref{itm:focus-X} is satisfied.
 Moreover, by Property~\ref{it:odd-neighbourhood}, for all $v\in\comp{O}$, $u\in\odd{G}{g(v)}\cap (\comp{O}\setminus\{v\})$ implies $\ld(u) = YZ$, so \ref{itm:focus-Z} is satisfied.
 Thus, noting that \ref{itm:focus-Y} is trivial for gflows, $g$ is a focused gflow.

 Now suppose $g$ is an extended causal flow.
 Then by Remark~\ref{rem:trivial_extracted_sets}, $V$ is a set of extracted vertices so Property~\ref{it:extracted} holds.
 We have $\comp{S} = \emptyset$, so Property~\ref{it:correction-set} is trivial.
 Similarly, $S=V$ makes Property~\ref{it:correction-set} trivial.
 Thus $g$ is pseudo-focused with respect to $V$.

 Conversely, suppose $g$ is a gflow which is pseudo-focused with respect to $V$.
 Then by Definition~\ref{def:extracted-frontier}, $\abs{g(v)} = 1$ for all $v\in V\setminus O$.
 Hence $g$ is an extended causal flow.
\end{proof}

Generalising the idea that any gflow can be focused, we now show that any pair of a gflow and a compatible set of extracted vertices can be pseudo-focused.

\begin{lemma}\label{lem:pseudo-focusing}
 Suppose $\Gamma = (G,I,O,\ld)$ is a \LOG with $\ld(V)\sse\{XY,YZ\}$ which has gflow $g$.
 Let $S$ be a set of extracted vertices for $(\Gamma,g)$.
 Then there exists a gflow $g'$ on $\Gamma$ which is pseudo-focused with respect to $S$.
\end{lemma}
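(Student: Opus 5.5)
We construct $g'$ from $g$ in two stages, leaving the correction sets of extracted vertices unchanged throughout. We set $g'(v)=g(v)$ for $v\in S\setminus O$, so Property~\ref{it:extracted} is inherited, as long as the partial order we obtain keeps $S$ upwards closed. For the unextracted vertices, the first stage applies Corollary~\ref{cor:correction-remove-extracted-internal}. This gives a gflow $g_1$ with $g_1(v)=g(v)\setminus(S\setminus F)$ for $v\in\comp{S}$, so the first half of Property~\ref{it:correction-set} holds. The lemma also shows that $S$ is still a set of extracted vertices, because every relationship it adds goes from $\comp{S}$ into $S$.

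The second stage is a focusing procedure on the unextracted part, modelled on the standard proof that every gflow can be focused (\cite{backens_there_2021,mhalla_which_2014}). We process $v\in\comp{S}$ in an order compatible with $\prec$, starting from the maximal ones. Two kinds of violation have to be removed.

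\begin{itemize}
\item If $w\in g_1(v)\setminus\{v\}$ with $w\in\comp{O}$ and $\ld(w)=YZ$, we replace $g_1(v)$ by $g_1(v)\symd g_1(w)$. This removes $w$ from the set, since $w\in g_1(w)$, and it does not change whether $w\in\odd{G}{\cdot}$, since $w\notin\odd{G}{g_1(w)}$.
\item If $u\in\odd{G}{g_1(v)}\setminus\{v\}$ with $u\notin S$ and $\ld(u)=XY$, we replace $g_1(v)$ by $g_1(v)\symd g_1(u)$. This flips $u$ out of the odd neighbourhood and leaves $v$'s own flow condition intact.
\end{itemize}

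In both cases the vertex used, $w$ or $u$, is a successor of $v$, so $g_1(w)$ or $g_1(u)$ contributes only successors of that vertex. The relation generated by the new correction sets is therefore contained in $\prec_{g_1}$, which stays a strict partial order and still has $S$ upwards closed. Termination follows the usual argument: we induct downward on $\prec$ and assume every successor already has a pseudo-focused correction set. Then one symmetric difference per offending vertex, taking the offending vertices in decreasing order, clears all violations for $v$ without creating new ones above the vertex being fixed.

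The main obstacle is that taking symmetric differences with $g_1(u)$ for $u\in S$ must be avoided, because that could reintroduce elements of $S\setminus F$ into $g'(v)$. Two facts make this work. First, the second kind of violation only ever involves $u\notin S$, since Property~\ref{it:odd-neighbourhood} already permits $u\in S$. Second, for an unextracted successor $u$, the already pseudo-focused $g'(u)$ avoids $S\setminus F$ and contains only outputs or $XY$-vertices, apart from $u$ itself, which is $XY$ in this case. Symmetric differences therefore preserve Property~\ref{it:correction-set}.

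We also need Property~\ref{it:odd-neighbourhood} for $v\in S\setminus O$. Here $g(v)=\{x\}$, and every $u\in N_G(x)\setminus\{v\}$ satisfies $v\prec u$, hence $u\in S$, so the property holds automatically. Finally, after the symmetric differences a vertex of $S\setminus F$ could in principle re-enter $g'(v)$ via some $g'(u)$ with $u\in\comp{S}$. Such elements are excluded by the invariant just described, and as a last step we can reapply Lemma~\ref{lem:correction-remove-extracted} to delete any that remain. That deletion leaves odd-neighbourhood changes confined to $S$, so it does not break Property~\ref{it:odd-neighbourhood}.
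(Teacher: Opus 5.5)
Your proposal is correct and is essentially the paper's proof: apply Corollary~\ref{cor:correction-remove-extracted-internal} to strip $S\setminus F$ from the correction sets of unextracted vertices, then focus on $V\setminus S$ while treating the vertices of $S$ like outputs. The paper only cites the standard focusing procedure for the second step; your top-down induction fills in those details, since each violation is cleared by one symmetric difference with the already pseudo-focused correction set of a successor, which creates no new violations and never reintroduces elements of $S\setminus F$ (so the final reapplication of Lemma~\ref{lem:correction-remove-extracted} is superfluous).
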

\begin{proof}
 Let $F$ be the frontier of $S$ and apply Corollary~\ref{cor:correction-remove-extracted-internal}.
 The remaining step is to focus the flow on $\comp{S}$, using a standard focusing procedure as in the proof of \cite[Proposition~3.14]{backens_there_2021} but treating all vertices in $S$ like outputs.
\end{proof}

\subsection{Circuit extraction lemmas}

We now develop the manifestly flow-preserving circuit extraction algorithm, which closely follows the extended gflow circuit extraction algorithm \cite{backens_there_2021}.
In addition to straightforward unfusion, that algorithm uses two transformations to push operations from the MBQC-form part of the diagram into the circuit-like part of the diagram:
\begin{itemize}
 \item pushing a CNOT gate into the circuit to apply a row operation to the adjacency matrix between the frontier and its neighbours in the MBQC-form part, and
 \item pushing a Hadamard gate into the circuit to pivot on the edge between a frontier vertex and an adjacent $YZ$-measured vertex.
\end{itemize}
We now show that these two operations can be performed in a gflow-preserving way via $YZ$-insertions and pivoting.
The first lemma is the equivalent of inserting a CNOT gate with control $a$ and target $b$, where $a,b$ are two distinct vertices in the frontier.
It is a special case of the `insert \& pivot' rule of Lemma~\ref{lem:pivot-delete}.

\begin{lemma}\label{lem:insert-cnot}
 Let $\Gamma=(G,I,O,\ld)$ be a \LOG where $\ld(V)\sse\{XY,YZ\}$ and which has gflow $g$; let $S$ be a set of extracted vertices for $(\Gamma,g)$ with frontier $F$.
 Suppose $a,b\in F$ are two distinct vertices.
 \begin{enumerate}
  \item Let $\Gamma'$ be the \LOG after inserting into $\Gamma$ a new $YZ$-measured vertex~$z_1$ with neighbours $\{a,b\}$.
  \item Let $\Gamma''$ be the \LOG after inserting into $\Gamma'$ a new $YZ$-measured vertex~$z_2$ with neighbours $K\cup\{z_1\}$, where $K := N_G(b)\cap\comp{S}$.
  \item Let $\Gamma'''$ be the \LOG after pivoting on the edge $z_1 z_2$ in $\Gamma''$.
 \end{enumerate}
 This sequence of operations is illustrated in Figure~\ref{fig:insert-cnot}.
 Then $\Gamma'''$ has a gflow for which $S\cup\{z_1,z_2\}$ is a set of extracted vertices, \ie the number of unextracted vertices has not increased.
 \begin{figure}
 	\ctikzfig{cnot-insertion}
 	\caption{The sequence of rewriting steps in Lemma~\ref{lem:insert-cnot}, with $N_a:=N_G(a)$ and $N_b:=N_G(b)$. Since $a,b\in F$, they must be either $XY$-measured (as illustrated) or outputs. The vertices $a$ and $b$ could also be neighbours, this would not change anything about the procedure and they would remain neighbours. Note that $N_b\symd K = N_b \symd (N_b\cap\comp{S}) = N_b\cap S$, \ie $b$ is left with only extracted neighbours.}
 	\label{fig:insert-cnot}
 \end{figure}
\end{lemma}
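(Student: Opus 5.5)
The plan is to track an explicit gflow through the three steps, then modify it by hand so that $z_1,z_2$ get singleton correction sets sitting at the top of the order. First I would apply Lemma~\ref{lem:pseudo-focusing} to assume that $g$ is pseudo-focused with respect to $S$. By Lemma~\ref{lem:correction-remove-extracted} I may also assume that no correction set of an unextracted vertex contains an element of $S\setminus F$.

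\textbf{Step 1.} I would insert $z_1$ with $N=\{a,b\}$ and $C=\emptyset$, using Corollary~\ref{cor:YZ-insertion} with $U:=S$. This is upwards closed by Definition~\ref{def:extracted-frontier}, and $\Succ(\{a,b\},\emptyset)=\{a,b\}\sse S$. By Lemma~\ref{lem:frontier-properties}(1), every $u$ with $g(u)\cap\{a,b\}\neq\emptyset$ lies in $\comp{S}$, so $\Pred(N)\sse D$. This gives $g'(z_1)=\{z_1\}$, with $z_1$ below $a,b$ but above every unextracted predecessor.

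\textbf{Step 2.} I would insert $z_2$ with $N=K\cup\{z_1\}$. Here I would take $C=\{z_1\}$, so that $\abs{N\cap C}=1$; this is wrong for $YZ$. I would therefore try $C=\emptyset$ instead and check the partial-order conditions directly. This needs $\Succ=K\cup\{z_1\}$ to avoid being $\preceq$ any $w\in\Pred(N)$, where $\Pred(N)=\{u : \abs{g'(u)\cap (K\cup\{z_1\})}$ odd$\}$. The elements of $K$ are unextracted neighbours of $b$, so this is delicate. The natural partition is $U=S\cup\{z_1\}\cup(\text{upward closure of }K)$. The point to verify is that any $u$ with $g(u)$ meeting $K$ oddly cannot lie above an element of $K$. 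I expect this to follow from pseudo-focusing: by Definition~\ref{def:pseudo-focused}(\ref{it:odd-neighbourhood}), $K\cap\odd{G}{g(u)}$ consists of $YZ$-vertices. If this fails, I would instead replace $z_2$'s correction set by $\{z_2\}\cup\{z_1\}\cup\{b\}$-type sets, exploiting that $\odd{}{\{b\}}\supseteq K$. This is the step I expect to be the main obstacle.

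\textbf{Step 3.} I would pivot on $z_1z_2$. By Lemma~\ref{lem:pivot-Pauli-flow} gflow is preserved, with the explicit formula for $g'''$. Both $z_1$ and $z_2$ become $XY$. Afterwards $z_2\sim\{a,b,z_1\}$, $z_1\sim K\cup\{z_2\}$, $b$ has neighbours $(N_G(b)\cap S)\cup\{z_2\}$, and $a$'s adjacency to $K$ is toggled.

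\textbf{Final modification.} I would then set $g(z_2):=\{b\}$ and $g(z_1):=\{z_2\}$. The first works because $\odd{}{\{b\}}\sse S\cup\{z_2\}$, and the second because $\odd{}{\{z_2\}}=\{a,b,z_1\}$. For unextracted $u$, I would replace $b$ by $z_1$ wherever $b\in g(u)$, and otherwise use the pivot formula followed by Lemma~\ref{lem:correction-remove-extracted} to strip extracted elements. Finally, I would use Lemma~\ref{lem:extend-partial-order} with $T=S\cup\{z_1,z_2\}$ to argue that the order remains strict. Since all new relations go from $\comp{S}$ into $T$, or have the order $z_1\prec z_2\prec b$ inside $T$, this confirms that $S\cup\{z_1,z_2\}$ is upwards closed with singleton correction sets, i.e.\ a set of extracted vertices.
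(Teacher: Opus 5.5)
\textbf{Gap: Step 2 is never resolved, and neither of your candidate choices works.}

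You flag Step 2 as the main obstacle and leave it open. The choice $C=\emptyset$ does not merely need delicate checking; it fails outright. Since $g'(z_1)=\{z_1\}$ meets $N=K\cup\{z_1\}$ in exactly one element, $z_1\in\Pred(N)$. But also $z_1\in\Succ(N,\emptyset)=N$. So the condition of Theorem~\ref{thm:YZ-insertion-preserves-gflow} is violated with $v=w=z_1$. Directly: $z_1\in\odd{G''}{\{z_2\}}$ forces $z_2\prec z_1$, while $z_2\in\odd{G''}{\{z_1\}}$ forces $z_1\prec z_2$.

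So your partition with $z_1\in U$ cannot exist. Similar clashes arise for any $YZ$-vertex of $K$, or any $k\in K$ whose correction set contains an odd number of elements of $K$. Pseudo-focusing property~(3) does not help here, because $\Pred(N)$ depends on $g(u)\cap K$, not on odd neighbourhoods. Your fallback correction set $\{z_2,z_1,b\}$, i.e.\ $C=\{z_1,b\}$, has $\abs{N\cap C}=1$, which is the $XZ$ parity, not the $YZ$ one.

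\textbf{The missing choice is $C=\{b\}$.} Check the conditions of Corollary~\ref{cor:YZ-insertion} in turn:
\begin{enumerate}
 \item Since $b\in S$ and $K\sse V\setminus S$, you get $\abs{N\cap C}=0$.
 \item Since $N_{G'}(b)=N_G(b)\cup\{z_1\}$, you get $\Succ(N,\{b\})=\{b\}\cup\bigl((K\cup\{z_1\})\symd N_{G'}(b)\bigr)=\{b\}\cup(N_G(b)\cap S)\sse S$.
 \item $\Pred(N)\sse (V\setminus S)\cup\{z_1\}$, because any original $u$ with $g(u)$ meeting $K$ oddly is either in $K$ or precedes an element of $K$, and $V\setminus S$ is downwards closed.
 \item $S$ stays upwards closed after Step~1, since $S$ appears only as successor there.
\end{enumerate}
So Corollary~\ref{cor:YZ-insertion} applies with $U=S$ and $D=(V\setminus S)\cup\{z_1\}$, giving $g''(z_2)=\{z_2,b\}$. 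Your Step~1 already matches this argument.

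\textbf{Consequence for Step 3 and your final modification.} With this choice, the pivot formula of Lemma~\ref{lem:pivot-Pauli-flow} directly yields $g'''(z_1)=\{z_2\}$ and $g'''(z_2)=\{b\}$. It leaves correction sets of vertices in $S\setminus O$ unchanged, since they are not predecessors of $z_1$ or $z_2$, and it preserves the order. Hence $S\cup\{z_1,z_2\}$ is extracted, and no hand modification is needed.

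Your hand modification is in fact wrong in one case, namely when $a,b\in g(u)$. The valid set there is $g(u)\setminus\{b\}$. Adding $z_1$ on top of it toggles $N_{G'''}(z_1)=K\cup\{z_2\}$ into the odd neighbourhood. This creates new relations $u\prec k$ for unextracted $k\in K$, which can produce cycles. And if $u\in K$, it removes $u$ from its own odd neighbourhood, so \ref{itm:gXY} fails.
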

\begin{proof}
 We first show how the gflow changes after each step.
 \begin{enumerate}
  \item Take $N=\{a,b\}$ and $C=\emptyset$, then \eqref{eq:pred} and \eqref{eq:succ} become
   \begin{align*}
    \Pred(\{a,b\}) &= \{v\in\comp{O}\mid a\in g(v) \oplus b\in g(v)\} \\
    \Succ(\{a,b\},\emptyset) &= \{a,b\}
   \end{align*}
   Now the first set satisfies $\Pred(\{a,b\})\sse\comp{S}$ by Lemma~\ref{lem:frontier-properties} since $a,b\in F$. The second set satisfies $\Succ(\{a,b\},\emptyset)\sse S$ because $a,b$ are extracted.
   Since $S$ is upwards closed, Corollary~\ref{cor:YZ-insertion} with $U=S$, $D=\comp{S}$, $N=\{a,b\}$ and $C=\emptyset$ shows that the first insertion preserves the existence of gflow.
   Call the resulting correction function $g'$.

  \item First, note that the set $S$ is still upwards closed in $\Gamma'$ because extracted vertices appear only as successors in the newly-added relations.
  
   Now, applying \eqref{eq:pred} and \eqref{eq:succ} to $\Gamma'$ and $g'$ with $N=K\cup\{z_1\}$ and $C=\{b\}$, we have
   \begin{align*}
    \Pred(K\cup\{z_1\}) &= \{z_1\}\cup\{v\in\comp{O}\mid \abs{K\cap g'(v)}\equiv 1\bmod 2\} \\
    \Succ(K\cup\{z_1\},\{b\}) &= \{b\} \cup N_G(b)\setminus K = \{b\} \cup (N_G(b)\cap S)
   \end{align*}
   Recall that $K\in\comp{S}$ by definition and, being the complement of an upwards-closed set, $\comp{S}$ is downwards closed.
   Any $v\in\comp{O}$ such that $\abs{K\cap g'(v)}\equiv 1\bmod 2$ satisfies $v\prec_{g'} k$ for some $k\in K\cap g'(v)$.
   Thus we have $\Pred(K\cup\{z_1\}) \sse \comp{S}\cup\{z_1\}$.
   
   Furthermore, $\Succ(K\cup\{z_1\},\{b\})\sse S$ by inspection.
   
   Thus Corollary~\ref{cor:YZ-insertion} with $U=S$, $D=\comp{S}\cup\{z_1\}$, $N=K\cup\{z_1\}$ and $C=\{b\}$ shows that the second insertion step preserves the existence of gflow.
   Call the new correction function $g''$ and note that $g''(v)=g(v)$ for all original vertices $v$.

  \item The pivoting step preserves the existence of gflow by Lemma~\ref{lem:pivot-Pauli-flow}.
  That lemma implies a new correction function where the correction set of $v$ changes only if $z_1$ or $z_2$ is contained in $\codd{G''}{g''(v)}$.
  Now, $z_1, z_2$ do not appear in any correction sets other than their own, hence for original vertices $v$ the condition reduces to appearing in $\odd{G''}{g''(v)}$, which is equivalent to $v\in\Pred(N_{G''}(z_1))$ or $v\in\Pred(N_{G''}(z_2))$, respectively.
  Thus:
  \begin{equation}\label{eq:flow-after-pivot}
   g'''(v) =
   \begin{cases}
    \{z_2\} &\text{if } v = z_1 \\
    \{b\} &\text{if } v = z_2 \\
    g(v)\cup\{z_1,z_2\} &\text{if } (a\in g(v) \oplus b\in g(v)) \wedge (\abs{K\cap g'(v)}\equiv 1\bmod 2) \\
    g(v)\cup\{z_1\} &\text{if } (a\in g(v) \oplus b\in g(v)) \wedge (\abs{K\cap g'(v)}\equiv 0\bmod 2) \\
    g(v)\cup\{z_2\} &\text{if }  \neg(a\in g(v) \oplus b\in g(v)) \wedge (\abs{K\cap g'(v)}\equiv 1\bmod 2) \\
    g(v) &\text{otherwise.}
   \end{cases}
  \end{equation}
  The correction sets of original vertices change only if those vertices are direct predecessors of $z_1$ or $z_2$, so in particular for all extracted vertices the correction set remains invariant: $g'''(v)=g(v)$ for all $v\in S\setminus O$.
  Note that pivoting does not affect the induced partial order.
 \end{enumerate}
 This completes the proof that gflow is preserved and leads to the final part of the lemma.
 Let $S':=S\cup\{z_1,z_2\}$.
 Since $S$ is a set of extracted vertices and the correction function has not changed for elements of $S$, it only remains to show that $z_1,z_2$ satisfy the conditions for being extracted.
 Now, $\abs{g(z_1)}=\abs{g(z_2)}=1$ by inspection.
 Moreover, in the gflow preservation proofs, we already showed that all successors (among original vertices) of $z_1$ and $z_2$ are in $S'$.
 The only new direct successor of $z_1$ in step~2 is $z_2$, which is also in $S'$.
 Thus $S'$ is a set of extracted vertices by Definition~\ref{def:extracted-frontier}.
\end{proof}

\begin{corollary}\label{cor:insert-cnot-simplified}
 The set $S' = S\cup\{z_1,z_2\}$ is a set of extracted vertices for the final \LOG $\Gamma'''$ in Lemma~\ref{lem:insert-cnot}; it has frontier $F'=(F\setminus\{b\})\cup\{z_1\}$.
 Thus, according to Corollary~\ref{cor:correction-remove-extracted-internal}, we can simplify the correction function on $\Gamma'''$ by removing the vertices $b$ and $z_2$, from the correction sets of all unextracted vertices.
 This means \eqref{eq:flow-after-pivot} becomes
 \[
  g''''(v) =
   \begin{cases}
    \{z_2\} &\text{if } v = z_1 \\
    \{b\} &\text{if } v = z_2 \\
    g(v)\cup\{z_1\} &\text{if } a\in g(v) \wedge b\notin g(v) \\
    (g(v)\cup\{z_1\})\setminus\{b\} &\text{if }  a\notin g(v) \wedge b\in g(v) \\
    g(v)\setminus\{b\} &\text{if }  a, b\in g(v) \\
    g(v) &\text{otherwise.}
   \end{cases}
 \]
 It is straightforward to see that $g''''$ is pseudo-focused with respect to $S'$ if and only if $g$ is pseudo-focused with respect to $S$.
\end{corollary}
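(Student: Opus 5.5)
The corollary has three parts, and I would prove them in order: the frontier of $S'$, the simplified correction function, and the equivalence of pseudo-focusedness.

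\textbf{Frontier of $S'$.} Lemma~\ref{lem:insert-cnot} already shows that $S'$ is a set of extracted vertices for $(\Gamma''',g''')$. I compute $F'$ directly from \eqref{eq:definition-frontier}. For extracted non-outputs, the correction sets are $g'''(z_1)=\{z_2\}$, $g'''(z_2)=\{b\}$, and $g'''(v)=g(v)$ for $v\in S\setminus O$.
\begin{itemize}
\item $b$ was in $F$ and now corrects $z_2$, so $b\notin F'$.
\item $z_2$ corrects $z_1$, so $z_2\notin F'$.
\item $z_1$ is in no correction set of an extracted vertex, so $z_1\in F'$.
\end{itemize}
Every other vertex keeps its status. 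Hence $F'=(F\setminus\{b\})\cup\{z_1\}$, which has size $\abs{O}$, consistent with Lemma~\ref{lem:frontier-properties}.

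\textbf{Simplified correction function.} Here $S'\setminus F'=(S\setminus F)\cup\{b,z_2\}$. Corollary~\ref{cor:correction-remove-extracted-internal} lets me delete these from the correction sets of unextracted vertices. Vertices of $S\setminus F$ were already absent: this holds if $g$ is pseudo-focused, and otherwise I first apply the corollary to $g$ itself. So only $b$ and $z_2$ are removed. I then go through the six cases of \eqref{eq:flow-after-pivot}: dropping $z_2$ merges cases 3/4 and 5/6, and dropping $b$ splits them according to whether $a$ and $b$ lie in $g(v)$. This gives exactly the stated table. Note that $z_1$ survives precisely when $a\in g(v)\oplus b\in g(v)$.

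\textbf{Pseudo-focusedness, condition by condition.}
\begin{itemize}
\item Property~\ref{it:extracted} holds on both sides by the above.
\item Property~\ref{it:correction-set}: $g''''(v)$ differs from $g(v)$ only by removing $b$ (now in $S'\setminus F'$) and possibly adding $z_1$. Now $z_1\in F'$ and, being extracted with a size-1 correction set, it is not internal. However, $z_1$ is $YZ$-measured, so the clause ``$w\in O\vee\ld(w)=XY$'' fails for it.
\end{itemize}
I expect this to be the main obstacle. I must check whether the intended reading is that $z_1$ is relabelled $XY$: after the pivot, $z_1$ and $z_2$ change plane $YZ\mapsto XY$, as in the proof of Lemma~\ref{lem:to-phase-gadget}. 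Indeed $z_1$ lies in the frontier, and by Lemma~\ref{lem:frontier-properties} frontier vertices are $XY$ or outputs. So the pivot makes $z_1,z_2$ $XY$-measured, and the clause holds.

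\textbf{Property~\ref{it:odd-neighbourhood}.} This is the delicate computation. I would compute $\odd{G''''}{g''''(v)}$ in the final graph, where $b$'s unextracted neighbourhood $K$ has been moved to $z_1$ (Figure~\ref{fig:insert-cnot}: $z_1$ is adjacent to $a$ and $K$, $b$ keeps $N_b\cap S$, and $z_2$ links $z_1$ and $b$). I would show that on $\comp{S'}$ this odd neighbourhood agrees with $\odd{G}{g(v)}\cap\comp{S}$, so the unextracted, non-$YZ$ vertices appearing in it are unchanged. For the four cases:
\begin{itemize}
\item If $a\in g(v)$ and $b\notin g(v)$: $a$'s old $K$-contributions are absent now, but adding $z_1$ contributes $K$. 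This matches the original contribution from $a$'s neighbours only if the row operation is accounted for. I will do this by comparing adjacency rows: the new row of $a$ equals the old row of $a$ plus the row of $b$ restricted to $\comp{S}$, and $z_1$ carries $b$'s old row.
\item The other three cases follow by the same linear-algebra bookkeeping over $\mathbb{F}_2$.
\end{itemize}
Differences that land in $S'$ are harmless for condition~\ref{it:odd-neighbourhood}.

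The converse direction (from $g''''$ pseudo-focused to $g$ pseudo-focused) follows because the map is invertible: the same identities read backwards.
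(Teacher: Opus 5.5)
Your proposal is correct and follows the same route as the paper. The paper only computes $F'$ from \eqref{eq:definition-frontier}, strips $b$ and $z_2$ via Corollary~\ref{cor:correction-remove-extracted-internal}, and calls pseudo-focusedness ``straightforward''; your check that $\odd{G'''}{g''''(v)}\cap(V\setminus S)=\odd{G}{g(v)}\cap(V\setminus S)$ for unextracted $v$, together with the pivot relabelling $z_1,z_2$ to $XY$, supplies exactly the omitted detail.

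Two slips, neither of which affects the argument. First, after the pivot it is $z_2$, not $z_1$, that is adjacent to $a$: in fact $N_{G'''}(z_1)=K\cup\{z_2\}$ and $N_{G'''}(z_2)=\{a,b,z_1\}$. Second, if $g$ is not pseudo-focused, you should not modify $g$ beforehand; the stated table follows directly by applying Lemma~\ref{lem:correction-remove-extracted} to $b$ and then to $z_2$, since both have all their neighbours in $S'$.
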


The extended gflow circuit extraction algorithm would fail if at any point all the next vertices to be extracted -- i.e.\ all the maximal non-extracted vertices -- are $YZ$-measured and none of them have a neighbour in the frontier.
Fortunately, this situation cannot happen: we now adapt that result to our terminology and notation and give a proof for completeness.

\begin{lemma}[adapted from {\cite[Lemma~5.4]{backens_there_2021}}]\label{lem:phase-gadget-extracted-neighbour}
 Let $\Gamma=(G,I,O,\ld)$ be a \LOG in phase-gadget form which has gflow $g$; let $S$ be a set of extracted vertices for $(\Gamma,g)$ with frontier $F$.
 Without loss of generality, assume $g$ is pseudo-focused with respect to $S$, otherwise apply Lemma~\ref{lem:pseudo-focusing}.

 Suppose $S\neq V$ and define the set of maximal unextracted vertices $M:= \{v\in\comp{S}\mid \nexists w\in\comp{S}. v\prec_g w\}$.
 If $\ld(v)=YZ$ and $N_G(v)\cap\comp{S}\neq\emptyset$ for all $v\in M$, then there exists $v\in M$ such that $N_G(v)\cap F\neq\emptyset$.
\end{lemma}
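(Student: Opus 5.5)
We argue by contradiction. Assume $\ld(v)=YZ$ and $N_G(v)\cap\comp{S}\neq\emptyset$ for every $v\in M$, but $N_G(v)\cap F=\emptyset$ for every $v\in M$. Fix some $v\in M$. We first pin down $g(v)$ using maximality of $v$ and pseudo-focusing (Definition~\ref{def:pseudo-focused}).

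By \ref{itm:gYZ}, $v\in g(v)$ and $v\notin\odd{G}{g(v)}$. Take any $w\in g(v)\setminus\{v\}$. Property~\ref{it:correction-set} gives $w\notin S\setminus F$ and $w\in O\vee\ld(w)=XY$, so $(v,w)\in R_X$. If $w$ were unextracted, this would contradict $v\in M$. Hence $g(v)\setminus\{v\}\sse F$.

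Similarly, take any $u\in\odd{G}{g(v)}\setminus\{v\}$. Either $u\in S$, or $\ld(u)=YZ$ and then $(v,u)\in R_Z$; in the latter case maximality again forces $u\in S$. So $\odd{G}{g(v)}\setminus\{v\}\sse S$.

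Now pick $w\in N_G(v)\cap\comp{S}$. Since the diagram is in phase-gadget form and $v$ is $YZ$, $w$ is $XY$-measured and not an output. Write $\odd{G}{g(v)}=N_G(v)\symd\odd{G}{g(v)\setminus\{v\}}$. Because $w\notin S$, we have $w\notin\odd{G}{g(v)}$, so $w\in\odd{G}{g(v)\setminus\{v\}}$. In particular $A_v:=g(v)\setminus\{v\}$ is a non-empty subset of $F$, and every unextracted neighbour of $v$ has odd degree into $A_v$.

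The remaining task is to turn this into a frontier neighbour of some element of $M$. I would look at the unextracted $XY$-vertices and choose $w^*$ maximal in $\prec_g$ among those adjacent to $M$. Pseudo-focusing gives $g(w^*)\cap\comp{S}\sse\{w^*\}\cup\{XY\text{-vertices}\}$. Every unextracted successor of $w^*$ is then $YZ$, and I would argue, via phase-gadget form (no $YZ$--$YZ$ edges) and property~\ref{it:odd-neighbourhood}, that these successors lie in $M$. It follows that $g(w^*)\setminus\{w^*\}\sse F$ and $w^*\in\odd{G}{g(w^*)}$.

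Next I would compare the parities of the sets $A_v$ ($v\in M$) and $g(w^*)$ over $\mathbb{F}_2$. The hypothesis $N_G(v)\cap F=\emptyset$ for all $v\in M$ means no element of $M$ lies in $\odd{G}{B}$ for any $B\sse F$. Meanwhile the elements of $A_v$ are frontier vertices, which by Lemma~\ref{lem:frontier-properties} are corrected by nobody in $S$ and sit above $v$ in the order. The aim is a cycle in $\prec_g$ through $v$, $w^*$ and a vertex of $A_v\cap N_G(w^*)$, or a violation of $w^*\in\odd{G}{g(w^*)}$, contradicting that $R^*_g$ is a strict partial order.

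I expect this last step to be the main obstacle: extracting an order contradiction from the parity facts. If the direct order argument does not close, I would fall back on the matrix formulation as in \cite[Lemma~5.4]{backens_there_2021}. Consider the $F\times(\comp{S}\setminus M)$ biadjacency matrix together with the correction sets, which are now subsets of $F\cup\comp{S}$. Show that the assumption forces a nonzero combination of correction sets whose odd neighbourhood contains no unextracted vertex at all. That would contradict the uniqueness of correction sets (Remark~\ref{rem:unique-correction-sets}), or the fact that $\abs{F}=\abs{O}$ (Lemma~\ref{lem:frontier-properties}) bounds the rank available to correct all unextracted vertices.
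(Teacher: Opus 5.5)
Your first four paragraphs are correct, but the proposal stops exactly where the content of the lemma lies. You never derive the contradiction, and neither the hoped-for order cycle nor the matrix/rank fallback is carried out. The set-up of $w^*$ is also flawed. Choosing $w^*$ maximal only among unextracted $XY$-vertices \emph{adjacent to $M$} does not exclude unextracted $XY$-successors of $w^*$, so $g(w^*)\sse F$ does not follow. Moreover, even for a genuinely maximal $XY$-vertex, its $YZ$-successors need not lie in $M$. A $YZ$-vertex $w$ with $g(w)\setminus\{w\}\sse F$ can have a further $YZ$-vertex in $\odd{G}{g(w)\setminus\{w\}}$ as a successor.

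The missing idea is a direct upward argument; no cycle is needed. Take $u$ maximal among \emph{all} unextracted $XY$-vertices; one exists, as your fourth paragraph shows. Your second-paragraph reasoning, via pseudo-focusing, gives $g(u)\sse F$. Since $u\notin M$, upward closure of $S$ and finiteness give a chain $u=w_0\,R_g\,w_1\,R_g\cdots R_g\,w_k=m$ inside $V\setminus S$ with $m\in M$. Each $w_i$ with $i\ge 1$ is $YZ$ by maximality of $u$. The same pseudo-focusing argument gives $g(w_i)\setminus\{w_i\}\sse F$, since an unextracted $XY$ element there would be a successor of $u$. The final step into $m$ cannot come from a correction set, as that would put $m\in F$. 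Hence $m\in\odd{G}{g(w_{k-1})}$. When $k\ge 2$, phase-gadget form forbids $m\sim w_{k-1}$. Either way $m\in\odd{G}{B}$ for some $B\sse F$, so $m$ has a frontier neighbour, which contradicts precisely the observation in your sixth paragraph. This is essentially the paper's proof, which works upward from the maximal $XY$-vertex rather than downward from $M$. Your analysis of $g(v)$ for $v\in M$ is not needed.
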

\begin{proof}
 By the assumption of phase gadget form, all neighbours of $YZ$-measured vertices in $\comp{S}$ must be $XY$-measured.
 Let $u$ be maximal according to $\prec_g$ among the $XY$-measured non-extracted vertices.
 Then $g(u)\sse S$ because the only unextracted vertices after $u$ in the partial order are $YZ$-measured and thus do not appear in correction sets of $g$.
 Then by pseudo-focusing (Definition~\ref{def:pseudo-focused}), $g(u)\sse F$.

 Yet since $u\notin M$ (because all elements of $M$ are $YZ$-measured), there must be a vertex in $M$ that comes after $u$ and by maximality of $u$, it must be a direct successor.
 Thus there exists $v\in M\cap\odd{G}{g(u)}$, which means this $v$ has a neighbour in~$F$.
\end{proof}

We now show how to handle such a maximal $YZ$-measured vertex with a frontier neighbour in the circuit extraction procedure.
This is the equivalent of pushing a Hadamard gate into the circuit on the wire corresponding to the frontier neighbour $b$ of the maximal non-extracted $YZ$-measured vertex $a$.

\begin{lemma}\label{lem:pivot-YZ}
 Let $\Gamma=(G,I,O,\ld)$ be a \LOG in phase-gadget form which has gflow $g$; let $S$ be a set of extracted vertices for $(\Gamma,g,\prec_g)$ with frontier $F$.
 Suppose the vertex $a$ is $YZ$-measured and maximal in $\comp{S}$, and that there exists $b\in N_G(a)\cap F$.
 \begin{enumerate}
  \item Let $\Gamma'$ be the \LOG after inserting a new $YZ$-measured vertex $z$ with neighbours $K:=N_G(b)\cap\comp{S}$ into $\Gamma$.
  \item Let $\Gamma''$ be the \LOG after pivoting on the edge $a z$ in $\Gamma'$.
 \end{enumerate}
 Then $\Gamma''$ has a gflow for which $S\cup\{z\}$ is a set of extracted vertices, \ie the number of unextracted vertices has not increased.
 \begin{figure}
  \ctikzfig{H-pivot}
  \caption{The operations performed in Lemma~\ref{lem:pivot-YZ}.
  Here, $N_a := N_G(a)\setminus\{b\}$, $N_b := N_G(b)\setminus\{a\}$, and $K' := K\setminus\{a\} = (N_G(b)\cap\comp{S})\setminus\{a\}$.
  Vertex $b$ could also be an output.}
  \label{fig:pivot-YZ}
 \end{figure}
\end{lemma}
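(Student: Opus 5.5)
The proof will follow the same template as Lemma~\ref{lem:insert-cnot}: verify that the $YZ$-insertion preserves gflow via Corollary~\ref{cor:YZ-insertion}, track the flow through the pivot via Lemma~\ref{lem:pivot-Pauli-flow}, and then check the extraction conditions for the new set $S\cup\{z\}$.

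\textbf{Step 1: the insertion.} We insert $z$ with neighbours $N=K=N_G(b)\cap\comp{S}$ and choose $C=\{b\}$. Then $\abs{N\cap C}=0$ because $b\in S$. We also get
\[
\Succ(K,\{b\}) = \{b\}\cup (K\symd N_G(b)) = \{b\}\cup(N_G(b)\cap S)\sse S .
\]
For the predecessors, any $v$ with $\abs{g(v)\cap K}$ odd satisfies $v\prec_g k$ or $v=k$ for some $k\in K\sse\comp{S}$. Since $\comp{S}$ is downwards closed, $\Pred(K)\sse\comp{S}$. Applying Corollary~\ref{cor:YZ-insertion} with $U=S$ and $D=\comp{S}$ then gives a gflow $g'$ with $g'(z)=\{z,b\}$ and $g'=g$ elsewhere. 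In $\prec_{g'}$, $z$ precedes only elements of $S$. We should also note that $a\in K$, since $a\sim b$ and $a\in\comp{S}$, so $z\sim a$ and the pivot on $az$ is defined. Neither $a$ nor $z$ is an input or output, so the pivot's hypotheses hold after unfusing outputs with \eqref{eq:IO} if needed.

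\textbf{Step 2: the pivot.} Lemma~\ref{lem:pivot-Pauli-flow} preserves gflow and the partial order. Pivoting on an edge between two $YZ$-measured vertices sends both labels to $XY$. The resulting correction function $g''$ is given by that lemma's case formula. Vertex $z$ appears in no correction set but its own, and the same holds for $a$, because $a$ is $YZ$-measured and, by pseudo-focusing, $YZ$ vertices lie in no other correction sets. So for other vertices $v$ the modification depends only on whether $a$ or $z$ lies in $\odd{G'}{g(v)}$.

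\textbf{Step 3: extraction.} We must show that $z$ (and possibly $a$) can be added to the extracted set, i.e.\ that $S\cup\{z\}$ is a set of extracted vertices for $g''$. First, $z\in\codd{G'}{g'(z)}$ and $a\notin\codd{G'}{g'(z)}$, since $g'(z)=\{z,b\}$ and $a\sim b$, $a\sim z$. Hence the lemma gives $g''(z)=\{z,b\}\symd\{z\}=\{b\}$, which has size one. All successors of $z$ lie in $S$. Extracted vertices $v\in S\setminus O$ have $g(v)\sse S$ singleton with odd neighbourhood in $S$. Since $z$ and $a$ are unextracted and not neighbours of interior extracted vertices (Lemma~\ref{lem:neighbours-extracted-internal}; $b\in F$, and frontier vertices are not correction sets of extracted vertices by Lemma~\ref{lem:frontier-properties}), their correction sets are unchanged. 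Upward closure of $S\cup\{z\}$ then follows because the pivot leaves the order unchanged and $z$'s successors are in $S$.

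The main obstacle is the bookkeeping in step 3. We must confirm that no extracted vertex $v$ has $a$ or $z$ in $\codd{G'}{g(v)}$. This uses that $a,z\notin S$ and that the order is upward closed: $a\in\odd{}{g(v)}$ would force $v\prec a$, contradicting $v\in S$. We must also check that $z$'s correction set collapses to $\{b\}$ as computed. I would finish by noting that the frontier changes to $(F\setminus\{b\})\cup\{z\}$, which is consistent with Lemma~\ref{lem:frontier-properties}.
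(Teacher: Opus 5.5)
Your proposal is correct and follows essentially the same route as the paper. It uses a $YZ$-insertion justified by Corollary~\ref{cor:YZ-insertion} with $N=K$, $C=\{b\}$, $U=S$, $D=V\setminus S$, then the pivot of Lemma~\ref{lem:pivot-Pauli-flow}, then checks that $z$ gets the singleton correction set $\{z,b\}\mathbin{\Delta}\{z\}=\{b\}$ with successors in $S$ while correction sets of extracted vertices are unchanged, and these explicit checks just spell out what the paper states by inspection.
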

\begin{proof}
 We first show how the gflow changes after each step, see Figure~\ref{fig:pivot-YZ} for the relevant diagrams.
 Note $a\in K$.
 \begin{enumerate}
  \item Take $N=K=N_G(b)\cap\comp{S}$ and $C=\{b\}$, then \eqref{eq:pred} and \eqref{eq:succ} become:
   \begin{align*}
    \Pred(K) &= \{v\in\comp{O}\mid \abs{K\cap g(v)}\equiv 1 \bmod 2\} \\
    \Succ(K,\{b\}) &= \{b\} \cup (N_G(b)\symd K) = \{b\} \cup (N_G(b)\cap S)
   \end{align*}
   Now the first set satisfies $\Pred(K)\sse\comp{S}$ since $K\sse\comp{S}$ by definition, $K\cap g(v)\neq\emptyset$ implies $\exists w\in K.v\prec w$, and $\comp{S}$ is downwards closed.
   The second set satisfies $\Succ(K,\{b\})\sse S$.

   Hence Corollary~\ref{cor:YZ-insertion} with $U=S$, $D=\comp{S}$, $N=N_G(b)$ and $C=\{b\}$ shows that the insertion preserves the existence of gflow.
   Call the resulting correction function $g'$ and note that $g'(v)=g(v)$ for all original vertices $v\in\comp{O}$.

  \item The pivoting step preserves the existence of gflow by Lemma~\ref{lem:pivot-Pauli-flow}.
  That lemma implies a new correction function where the correction set of $v$ changes only if $a$ or $z$ is contained in $\codd{G'}{g'(v)}$.
  Now, $a,z$ do not appear in any correction sets other than their own, hence for other vertices $v$ the condition reduces to appearing in $\odd{G'}{g'(v)}$.
  The condition $z\in\odd{G'}{g'(v)}$ is equivalent to $\abs{K\cap g'(v)} \equiv 1 \bmod 2$.
  As $a$ is maximal in $\comp{S}$, we have $g'(a)\setminus\{a\} \sse S$, which implies $K\cap g'(a) = \{a\}$ because all elements of $K$ are in $\comp{S}$ by definition.
  Therefore $z\in\odd{G'}{g'(a)}$.
  Thus:
  \[
   g''(v) =
   \begin{cases}
    \{b\} &\text{if } v = z \\
    g(a)\setminus\{a\}\cup\{z\} &\text{if } v = a \\
    g(v)\cup\{a,z\} &\text{if } a\in\odd{G}{g(v)} \wedge \abs{K\cap g(v)}\equiv 1 \bmod 2 \\
    g(v)\cup\{a\} &\text{if } a\in\odd{G}{g(v)} \wedge \abs{K\cap g(v)}\equiv 0 \bmod 2 \\
    g(v)\cup\{z\} &\text{if } a\notin\odd{G}{g(v)} \wedge \abs{K\cap g(v)}\equiv 1 \bmod 2 \wedge v\neq a \\
    g(v) &\text{otherwise.}
   \end{cases}
  \]
  The correction sets of original vertices change only if they precede $a$ or $z$, so in particular $g''(v)=g(v)$ for all $v\in S$.
  Note that pivoting does not affect the induced partial order.
 \end{enumerate}
 This completes the proof that gflow is preserved and leads to the final part of the lemma.
 Let $S':=S\cup\{z\}$.
 Since $S$ is a set of extracted vertices and the correction function has not changed for elements of $S$, it only remains to show that $z$ satisfy the conditions for being extracted.
 Now, $\abs{g(z)}=1$ by inspection.
 Moreover, in the gflow preservation proofs, we already shows that all successors (among original vertices) of $z$ are in $S'$.
 Thus $S'$ is a set of extracted vertices by Definition~\ref{def:extracted-frontier}.
\end{proof}

\subsection{Circuit extraction algorithm}
\label{s:circuit-extraction}

We are now ready for the manifestly flow-preserving circuit extraction algorithm -- \ie the algorithm that turns any ZX-diagram with flow into an equivalent ZX-diagram with extended causal flow, following the algorithms of \cite[Section~7]{duncan_graph-theoretic_2020} and \cite[Section~5.1]{backens_there_2021}.

\begin{algorithm}
	\caption{From Pauli flow to extended causal flow}
	\begin{algorithmic}[1]
		\Require $D$ is an MBQC-form ZX-diagram which has Pauli flow
		\Procedure{PauliToCausal}{$D$}
			\State $\Gamma \gets$ labelled open graph corresponding to $D$ (with max.\ number of Pauli meas.)
			\State $g \gets$ focused Pauli flow on $\Gamma $
			\State $D, \Gamma, g \gets \Call{ToPhaseGadgetForm}{D, \Gamma, g}$ \Comment{Lemma~\ref{lem:to-phase-gadget}}
			\State $T \gets$ set of non-outputs of $\Gamma$ \Comment{non-extracted vertices, denoted $\comp{S}$ in text}
			\While{$T\neq\emptyset$}
				\State $M \gets \{v\in T \mid \nexists w\in T. v\prec_g w\}$ \Comment{maximal non-extracted vertices}
				\If{$\exists v\in M$ such that $\abs{g(v)}=1$}
					\State $v\gets$ an element of $M$ whose correction set has size~1
					\State Remove $v$ from $T$
				\ElsIf{$\exists v\in M$ such that $v$ is $XY$-measured}
					\State $v\gets$ an $XY$-measured element of $M$
					\State $a \gets$ an element of $g(v)$
					\While{$g(v)\setminus\{a\}\neq\emptyset$}
						\State $b \gets$ an element of $g(v)\setminus\{a\}$
						\State $D, \Gamma, g \gets \Call{InsertCNOT}{D, \Gamma, g, a, b}$ \Comment{Lemma~\ref{lem:insert-cnot}}
					\EndWhile
				\Else \Comment{all elements of $M$ are $YZ$-measured}
					\State $a \gets$ an element of $M$ which has a neighbour in the current frontier 
					\Statex \Comment{this exists by Lemma~\ref{lem:phase-gadget-extracted-neighbour}}
					\State $b \gets$ a frontier neighbour of $a$
					\State $D, \Gamma, g \gets \Call{InsertAndPivot}{D, \Gamma, g, a, b}$ \Comment{Lemma~\ref{lem:pivot-YZ}}
				\EndIf
			\EndWhile
			\State \Return $D$
		\EndProcedure
	\end{algorithmic}
\end{algorithm}

\begin{enumerate}
	\item Given a ZX-diagram, compute the corresponding labelled open graph with the maximum number of Pauli measurements, and find a focused Pauli flow on it.
	This tuple of a labelled open graph and a Pauli flow on it will be updated throughout the algorithm.
	\item Transform the ZX-diagram to phase-gadget form using Lemma~\ref{lem:to-phase-gadget}, so the resulting diagram has focused gflow.
	 Then by Lemma~\ref{lem:trivial-pseudo-focused}, the gflow is pseudo-focused with respect to $O$.
	
	 From now on, all measurement labels are $XY$ and $YZ$ and all intermediate ZX-diagrams have pseudo-focused gflow.
	 
	\item\label{step:extraction} Denote by $g$ the pseudo-focused gflow (that we are keeping track of) on the current ZX-diagram and by $S$ its set of extracted vertices (which is initially $O$ and will be updated through subsequent steps).
	 Let $M = \{v\in\comp{S}\mid \nexists w\in\comp{S}$ s.t.\ $v\prec_g w\}$, \ie $M$ is the set of maximal non-extracted vertices.
	 Note that if any element of $M$ has a correction set of size~1, it can be added to the set of extracted vertices since all its successors are already in $S$ by definition.
	 We distinguish cases according to the properties of $M$.
	 \begin{enumerate}
	 	\item If $M$ is empty, then $S=V$.
	 	 Thus by Lemma~\ref{lem:trivial-pseudo-focused}, $g$ is an extended causal flow and we are done.
	 	 
	 	\item\label{step:extractable} Suppose there exists $v\in M$ such that $\abs{g(v)}=1$.
	 	 Then $S\cup\{v\}$ is a valid set of extracted vertices as all successors of $v$ are in $S$ by the definition of $M$.
	 	 The number of non-extracted vertices has decreased by one.
	 	 Go back to the beginning of Step~\ref{step:extraction}.
	 	 
	 	\item\label{step:extract-XY} Suppose $\abs{g(v)}>1$ for all elements of $M$ but there exists $v\in M$ such that $\ld(v) = XY$.
	 	 By the definition of $M$ (which excludes correction set members in $\comp{S}$) and the definition of pseudo-focusing (Definition~\ref{def:pseudo-focused}, which excludes correction set members in $S\setminus F$ for non-extracted vertices), we have $g(v)\sse F$.
	 	 Fix a designated element $a\in g(v)$.
	 	 Then for each $b\in g(v)\setminus\{a\}$ (if any), apply Lemma~\ref{lem:insert-cnot} to the pair $a,b$.
	 	 By Corollary~\ref{cor:insert-cnot-simplified}, each of these operations removes the chosen $b$ from the correction set of $v$ (and does not add any new elements), so at the end $g(v) = \{a\}$.
	 	 The partial order among non-extracted vertices does not change.
	 	 Go to Step~\ref{step:extractable}, which can now be applied.
	 	 
	 	\item\label{step:extraction-YZ-helper} Suppose $\abs{g(v)}>1$ and $\ld(v)=YZ$ for all $v\in M$.
	 	 By Lemma~\ref{lem:phase-gadget-extracted-neighbour}, there exists $v\in M$ such that $N_G(v)\cap F\neq\emptyset$.
	 	 Let $a=v$ and pick $b\in N_G(v)\cap F$, then apply Lemma~\ref{lem:pivot-YZ}.
	 	 This lemma does not increase the number of non-extracted vertices and the partial order relationships among non-extracted vertices remain the same.
	 	 Hence $v$ is still maximal among non-extracted vertices, but it is now $XY$-measured.
	 	 We may therefore go to Step~\ref{step:extract-XY} as the size of its correction set has not changed and those of other non-extracted vertices are either the same or have increased.
	 \end{enumerate}
\end{enumerate} 

After the diagram is in phase-gadget form, none of the further operations increase the number of non-extracted vertices and Step~\ref{step:extract-XY} as well as the first subcase of Step~\ref{step:extraction-YZ-helper} decrease this number.
Step~\ref{step:extract-XY} involves at most $\abs{F}$ applications of Lemma~\ref{lem:insert-cnot}, so adds at most $2\abs{F}$ new vertices.
The second subcase of Step~\ref{step:extraction-YZ-helper} is performed at most once for each Step~\ref{step:extract-XY} and adds one new vertex.
Therefore, the algorithm terminates and the size of the ZX-diagram does not increase too much.
Indeed, by Lemma~\ref{lem:frontier-properties} the size of the frontier is always $\abs{F}=\abs{O}$.
Thus, setting $n:=\abs{V}$ and $m:=\abs{O}$, the final size of the diagram is at most $2n(m+1) = \mathcal{O}(n^2)$.
If $\abs{O}$ is small compared to $\abs{V}$ -- which will be the case in many circuit optimisation applications -- then the increase in diagram size is even smaller.
Other optimisations, particularly regarding Step~\ref{step:extract-XY}, should be possible in analogy with the extended gflow circuit extraction algorithm, see e.g.~\cite[Section~5.2]{backens_there_2021}.

\begin{example}\label{ex:circuit-extraction}
	Consider the following ZX-diagram, which is well-known to have gflow but not causal flow \cite[Figure~2]{browne_generalized_2007}:
	\ctikzfig{extraction/extract-ex0}
	The diagram is in phase-gadget form and the focused gflow is given by $g(i_1) = \{o_2,o_3\}$, $g(i_2) = \{o_1,o_2,o_3\}$ and $g(i_3) = \{o_1,o_3\}$.
	The set of extracted vertices is $S = \{o_1,o_2,o_3\}$ and the set of maximal non-extracted vertices is $M = \{i_1,i_2,i_3\}$.
	
	Suppose we pick $v=i_1$ for the purposes of an extraction step using Lemma~\ref{lem:insert-cnot}, then $a=o_3$ as this is the only element of $g(i_1)\cap N_G(i_1)$.
	Then $b=o_2$ and the `insert \& pivot' procedure becomes:
	\[
	 \to\quad
	 \input{extraction/extract-ex1a.tikz} \quad\to\quad
	 \input{extraction/extract-ex1b.tikz} \quad\to\quad
	 \input{extraction/extract-ex1c.tikz} \quad\to\quad
	 \input{extraction/extract-ex2.tikz}
	\]
	where the final step is just rearranging the vertices.
	The resulting diagram has causal flow given by $g(i_1) = o_3$, $g(i_2) = o_1$, $g(i_3) = a$, $g(a) = b$, $g(b) = o_2$, which induces the following partial order (where an arrow $v\to w$ indicates $v R_g w$):
	\ctikzfig{extraction/extract-ex2-order}
	Other choices of $v$ at the beginning of the extraction will lead to different circuits.
\end{example}

\section{Quantum circuits}

We have shown in the previous section that any ZX-diagram with a Pauli flow can be turned into a ZX-diagram with extended causal flow using the flow-preserving rewrite rules.
These rewrite steps are all invertible.
Therefore completeness for ZX-diagrams with Pauli flow reduces to completeness for ZX-diagrams with extended causal flow. 
The close correspondence between ZX-diagrams with extended causal flow and quantum circuits will thus allow us to use a completeness result for quantum circuits to prove completeness for flow-preserving rewriting.
Nevertheless, we will not be able to use existing circuit completeness results `out of the box' as they consider either strictly unitary circuits~\cite{clement_complete_2023} or circuits where ancillas can be both initialised and terminated~\cite{CDP24}.
The first step is therefore to set out a completeness proof for circuits representing isometries, where ancillas can be initialised but not terminated.

We then show how to derive each of the circuit rewrite rules from the flow-preserving ZX-calculus rewrite rules in two steps.
First, we show that mimicking a circuit cut-and-paste rewriting step in the ZX-calculus always preserves extended causal flow.
This means that the proofs of individual rewriting rules in the second part do not need to consider the context: i.e.\ it suffices to show the rule is derivable in principle, without regard to the conditions under which it is flow-preserving.

\subsection{Quantum circuits with initialisation}
\label{s:circuits}

We consider the $\propQC$-circuits generated by the standard gate set: Hadamard $\gH$, Z-rotations $\gP$, CNOT $\gCNOT$ and qubit initialisation $\ginit$, together with the identity \scalebox{0.7}{}, and swap  \scalebox{0.7}{\input{intro/swap.tikz}}. These generators are combined using sequential and parallel compositions. 

We consider $\propQC$-circuits up to deformation, i.e.:
\begin{center}
	\begin{tabular}{cc}
		$\input{intro/iswap0.tikz}=\input{intro/iswap1.tikz}~\qquad~$&$~\scalebox{0.9}{\input{intro/sdiagp0-circ.tikz}}=\scalebox{0.9}{\input{intro/sdiagp1-circ.tikz}}~$\\
	\end{tabular}	
\end{center}
Any $\propQC$-circuit $C$ can be interpreted as a linear map  $\interp C$ defined as follows.

\begin{definition}[Semantics]\label{def:QCsem}
  For any $\propQC$-circuit $C:n\to m$ with $n$ inputs and $m$ outputs, let $\interp{C}: \mathbb C^{\{0,1\}^n} \to \mathbb C^{\{0,1\}^m}$ be the \emph{semantics} of $C$ inductively defined as the linear map satisfying $\interp{C_2\circ C_1} = \interp{C_2}\circ\interp{C_1}$; $\interp{C_1\otimes C_2} = \interp{C_1}\otimes\interp{C_2}$; and
\[
    \interp{\gempty} = 1\mapsto 1\quad
     \interp \ginit = \ket 0 \quad
     \interp{\gI} = \ket{x}\mapsto \ket{x}\quad
    \interp{\gP} = \ket x\mapsto e^{ix\varphi}\ket{x}\quad
\]\[
    \interp{\gCNOT}  =\ket{x,y}\mapsto  \ket{x,x\oplus y}\qquad
    \interp{\gH} = \ket x \mapsto \frac{\ket{0}+(-1)^x\ket{1}}{\sqrt{2}}\quad
    \interp{\gSWAP} = \ket{x,y}\mapsto \ket{y,x}
\]
\end{definition}

Note that for any $\propQC$-circuit $C$, $\interp C$ is an isometry, i.e. $\interp C^\dagger  \circ \interp C = Id$. Conversely, it is well known that any isometry map acting on a finite number of qubits can be represented by a $\propQC$-circuit, up to a global phase:
\begin{proposition}[Universality]\label{prop:QC-universal}
  $\propQC$ is universal, i.e.~for any isometry $V:\C^{\{0,1\}^n} \to \C^{\{0,1\}^m}$ there exists a $\propQC$-circuit $C:n\to m$ such that $\interp{C}=e^{i\theta}V$ for some $\theta\in [0,2\pi)$.
\end{proposition}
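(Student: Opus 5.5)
The plan is to reduce the isometry case to the well-known universality of the unitary gate set $\{H, P(\varphi), \mathrm{CNOT}\}$ (together with swaps), using ancilla initialisation to pad the input space. Given an isometry $V:\C^{\{0,1\}^n}\to\C^{\{0,1\}^m}$, we necessarily have $n\le m$, since $V^\dagger V=Id$ forces $V$ to be injective. Set $k:=m-n$ and consider the map $\ket{x}\mapsto V\ket{x}$ defined on the subspace $\C^{\{0,1\}^n}\otimes\ket{0^k}\subseteq\C^{\{0,1\}^m}$.

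First, I extend $V$ to a unitary. The columns $V\ket{x}$ for $x\in\{0,1\}^n$ form an orthonormal family in $\C^{\{0,1\}^m}$, so by Gram--Schmidt they complete to an orthonormal basis. This gives a unitary $U$ on $m$ qubits with $U(\ket{x}\otimes\ket{0^k})=V\ket{x}$ for all $x$. Next, I invoke the standard universality of $\{H,P(\varphi),\mathrm{CNOT}\}$ for unitaries up to global phase. This rests on the exact decomposition of any unitary into two-level unitaries, then into single-qubit gates and CNOTs, with each single-qubit unitary written as $e^{i\theta}P(\alpha)HP(\beta)HP(\gamma)$ (Euler decomposition). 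Swaps handle qubit ordering. The result is a $\propQC$-circuit $C_U:m\to m$ with $\interp{C_U}=e^{i\theta}U$. Finally, I define $C:=C_U\circ(\mathrm{id}_n\otimes\ginit^{\otimes k})$. By Definition~\ref{def:QCsem}, $\interp{C}\ket{x}=e^{i\theta}U(\ket{x}\otimes\ket{0^k})=e^{i\theta}V\ket{x}$, as required.

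The only step with real content is exact unitary universality of the chosen gate set with arbitrary real phases. The key subtlety is that the Euler decomposition must be exact, not approximate, which holds here because $P(\varphi)$ is available for all real $\varphi$. I would simply cite this classical result (Barenco et al., or Nielsen--Chuang, Section~4.5) rather than reprove it. The phase $e^{i\theta}$ is unavoidable, since no generator carries a global scalar, which is why the statement is only up to global phase.
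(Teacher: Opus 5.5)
Your argument is correct and is the standard reasoning the paper relies on: the paper gives no proof of its own and simply calls the result well known. Your route is the usual one — dilate $V$ to a unitary $U$ on $m$ qubits via Gram--Schmidt, realise $U$ exactly up to a global phase over $\{H,P(\varphi),\mathrm{CNOT}\}$, then precompose with $m-n$ initialisations. One point to make explicit: the multi-controlled gates in the two-level decomposition must be built without work qubits, as in the ancilla-free constructions of Barenco et al. Otherwise, since \propQC{} cannot discard ancillas, any leftover work qubits would remain as extra outputs and the circuit would not have type $n\to m$.
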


Quantum circuits, as defined above, only have four different kinds of generators, however, it is often convenient to use other gates that can be defined by combining them.
In particular, we use the  multi-controlled phase gate defined as follows: 
\begin{align}\label{mctrlPinducdef}\tikzfigM{./shortcut/mctrlPphi-}=\tikzfigM{./shortcut/mctrlPphidef-}\end{align}

In the above definition, with a slight abuse of notation, dashed lines denote an arbitrary number of control qubits, e.g.\ $\tikzfigS{./shortcut/0+Pphi}:n+1\to n+1 $ or simply $\tikzfigS{./shortcut/0+1Pphi}:n+1\to n+1$ have $n\ge 0$ control qubits (possibly zero), whereas $\tikzfigS{./shortcut/1+Pphi}:n+2\to n+2$ and $\tikzfigS{./shortcut/+1Pphi}:1+n+1\to 1+n+1$ have at least one control qubit.

\begin{figure*}
  \fbox{\begin{minipage}{.985\textwidth}\begin{center}
    \vspace{-.5em}
    \hspace{-1.5em}
     \begin{subfigure}{0.28\textwidth}
      \begin{align}\label{initP-ancilla}\tag{AP}\tikzfigM{./qcancilla-axioms/initPphi}\rightarrow\tikzfigM{./qcancilla-axioms/initId}\end{align}
    \end{subfigure}\hspace{3em}
    \begin{subfigure}{0.25\textwidth}
      \begin{align}\label{HH-ugp}\tag{H$^2$}\tikzfigM{./qc-axioms/HH}\leftrightarrows\tikzfigM{./qc-axioms/Id}\end{align}
    \end{subfigure}\hspace{3em}
    \begin{subfigure}{0.24\textwidth}
      \begin{align}\label{P0-ugp}\tag{P$_0$}\tikzfigM{./qc-axioms/P0}\leftrightarrows\tikzfigM{./qc-axioms/Id}\end{align}
    \end{subfigure}

    \hspace{-1.5em}
    \begin{subfigure}{0.27\textwidth}
      \begin{align}\label{initCNOT-ancilla}\tag{ACX}\tikzfigM{./qcancilla-axioms/initCNOT}\rightarrow\tikzfigM{./qcancilla-axioms/initIdId}\end{align}
    \end{subfigure}\hspace{1.3em}
        \begin{subfigure}{0.31\textwidth}
      \begin{align}\label{bigebre-ugp}\tag{B}\tikzfigM{./qc-axioms/CNOTNOTC}\rightarrow\tikzfigM{./qc-axioms/SWAPCNOT}\end{align}
    \end{subfigure}\hspace{1.3em}
    \begin{subfigure}{0.33\textwidth}
      \begin{align}\label{CNOTPCNOT-ugp}\tag{C}\tikzfigM{./qc-axioms/CNOTPphiCNOT}\leftrightarrows\tikzfigM{./qc-axioms/PphiId}\end{align}
    \end{subfigure}
    
        \hspace{-1.5em}
    \begin{subfigure}{0.46\textwidth}
      \begin{align}\label{CZ-ugp}\tag{CZ}\tikzfigM{./qc-axioms/H2CNOTH2}\leftrightarrows\tikzfigM{./qc-axioms/CZ}\end{align}
    \end{subfigure}\hspace{2.5em}
     \begin{subfigure}{0.36\textwidth}
      \begin{align}\label{ctrl2pi-ugp}\tag{I}\tikzfigM{./qc-axioms/mctrlP2pi-}\rightarrow\tikzfigM{./qc-axioms/multi-id-braket}\end{align}
    \end{subfigure}
    
    \hspace{-1.5em}
    \begin{subfigure}{0.45\textwidth}\begin{align}  \tikzfigM{./qcoriginal-axioms/Pphi1Pphi2}\leftrightarrows\tikzfigM{./qcoriginal-axioms/Pphi1phi2}\label{Paddition-prime}\tag{P$_+$}\end{align}\end{subfigure}
    
        \hspace{-1.5em} \begin{subfigure}{0.78\textwidth}
   \begin{align}\label{euler-ugp}\tag{Euler} \tikzfigM{./qcprime-axioms/eulerprimebis-left}\rightarrow\tikzfigM{./qcprime-axioms/eulerprimebis-right}\end{align}
    \end{subfigure}\hspace{1.2em}

    \vspace{.5em}
    
  \end{center}\end{minipage}}
  \caption{\label{fig:qcsf-axioms}Complete rewriting rules for quantum circuits with initialisation. In \cref{euler-ugp}, the angles $\beta_i$ of the RHS are computed as follows: For any angles $\alpha_1,\alpha_2\in\R$, let $z\defeq -\sin\left(\frac{\alpha_1+\alpha_2}{2}\right)+i\cos\left(\frac{\alpha_1-\alpha_2}{2}\right)$,  $z'\defeq \cos\left(\frac{\alpha_1+\alpha_2}{2}\right)-i\sin\left(\frac{\alpha_1-\alpha_2}{2}\right)$ and 
  \\-- If $z'=0$ then  $\beta_1\defeq2\arg(z)$, $\beta_2\defeq0$ and $\beta_3\defeq0$;\\-- If $z=0$ then  $\beta_1\defeq2\arg(z')$, $\beta_2\defeq\pi$ and $\beta_3\defeq0$; \\-- Otherwise  $\beta_1\defeq\arg(z)+\arg(z')$, $\beta_2\defeq2\arg\left(i+\left\lvert\frac{z}{z'}\right\rvert\right)$ and $\beta_3\defeq\arg(z)-\arg(z')$.\\Notice that, in any case, $\beta_1, \beta_3\in [0,2\pi)$ and $\beta_2\in [0,\pi]$. }

\end{figure*} 

We consider the rewrite rules of \QC~of \cref{fig:qcsf-axioms}, which we prove to be sound (up to global phases) and complete:

\begin{theorem}[Completeness]\label{proof:THMcompleteQC} 
\QC~ is sound and complete, i.e.\ for any pair of circuits $C_0:n\to m$ and $C_1: n\to m$,  $\exists \theta\in \left[0,2\pi\right)$ such that $\interp {C_0} = e^{i\theta}\interp{C_1}$ if and only if $C_0\rightarrow^* C_1$ 
 using the rules of \cref{fig:qcsf-axioms} (and the deformation rules). 
\end{theorem}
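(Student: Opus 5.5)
Soundness is a rule-by-rule semantic check; completeness I would reduce to the known completeness of the unitary circuit equational theory of \cite{clement_complete_2023}, whose rules are essentially (H$^2$), (P$_0$), (B), (C), (CZ), (I), (P$_+$) and (Euler). For soundness, each rule of \cref{fig:qcsf-axioms} is checked on its interpretation. (AP) and (ACX) hold because $P(\varphi)\ket0=\ket0$ and $\mathrm{CNOT}\ket{0}\ket{y}=\ket0\ket y$. The unitary rules hold up to global phase by \cite{clement_complete_2023}. Since both sides of every rule are isometries equal up to a global phase, and the semantics is compositional, soundness follows by induction on the derivation.

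For completeness, first bring every circuit $C:n\to m$ into a normal form $U\circ(\mathrm{id}_n\otimes \ginit^{\otimes (m-n)})$. Here $U$ is a unitary circuit on $m$ qubits, obtained by sliding all initialisations to the left using only deformations. Given two such forms with $\interp{U_0}(\mathrm{id}\otimes\ket{0^k})=e^{i\theta}\interp{U_1}(\mathrm{id}\otimes\ket{0^k})$, the unitary $W=\interp{U_1}^\dagger\interp{U_0}$ acts as $e^{i\theta}$ times the identity on the subspace where the ancillas are $\ket{0^k}$. Hence $W=e^{i\theta}\bigl(\mathrm{id}\otimes\ket{0^k}\!\bra{0^k}\bigr)+W'$, where $W'$ maps the orthogonal complement to itself. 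Equivalently, $W$ is a unitary that is controlled on the ancilla register not being all-zero.

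The strategy is to exhibit a canonical family of such ``ancilla-controlled'' unitaries that are absorbed by the initialisations. Concretely, I would show that any unitary $W$ of the above block form can be written, using unitary circuit completeness, as a product of gates each of which is a multi-controlled gate with at least one control on an ancilla with control value $1$. Multi-controlled $X$, $H$ and $P$ gates, suitably conjugated by $X$ so that the controls become positive, decompose into CNOTs and phases. One then proves two lemmas. First, a multi-controlled gate with an ancilla control, fed $\ket0$ on that ancilla, reduces to the identity; this follows from (AP) and (ACX) together with the inductive definition \eqref{mctrlPinducdef} of multi-controlled phases, by induction on the number of controls. Second, the global phase $e^{i\theta}$ is harmless since equality is only required up to a global phase.

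Given these lemmas, the derivation proceeds as follows. Rewrite $U_0$ to $U_1\circ W$ using unitary completeness; this requires equality up to global phase, which holds. Then expand $W$ into the ancilla-controlled gate product, and push the initialisations through it, erasing each gate. This yields $C_0\to^* C_1$.

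The main obstacle is the decomposition of $W$ into gates that are each individually killed by the initialisations. A naive unitary synthesis of $W$ uses gates not controlled on the ancillas. I would instead use a Gray-code or two-level-unitary decomposition restricted to the complement subspace, where every two-level unitary acts between basis states whose ancilla part is nonzero. Each such state can be addressed by a multi-controlled gate with a positive control on some ancilla wire that is $1$ in both states. This requires choosing the pivot basis states carefully, and perhaps a permutation step that is itself controlled on a nonzero ancilla. Since rule directions are oriented, I would also need the reverse steps; here unitary completeness in both directions (the $\leftrightarrows$ rules, plus the fact that the unitary fragment is symmetric in \cite{clement_complete_2023}) is what makes $\to^*$ suffice.
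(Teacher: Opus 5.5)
You have a real gap in how you obtain \emph{directed} rewriting. The theorem is about $\to^*$ with the rules of Figure~\ref{fig:qcsf-axioms} as oriented, and \eqref{euler-ugp}, \eqref{bigebre-ugp}, \eqref{ctrl2pi-ugp} are only given left-to-right. The completeness result for this unitary rule set is an equational statement that uses both directions of every rule; it is the minimal theory of \cite{CDP24}, not the theory of \cite{clement_complete_2023}. So ``the unitary fragment is symmetric'' is exactly what you cannot assume. Before you may rewrite $U_0$ to $U_1\circ W$, or cancel leftover gates after erasure, you must derive the reverse of each oriented unitary rule from the rules themselves. This is most of the paper's written proof:
\begin{itemize}
\item \eqref{euler-ugp} is reversed by applying it twice more to its own right-hand side, merging with \eqref{HH-ugp} and \eqref{Paddition-prime}, and using soundness to pin down the resulting angles.
\item \eqref{bigebre-ugp} is reversed by first inserting two CNOTs via \eqref{P0-ugp} and \eqref{CNOTPCNOT-ugp}.
\item \eqref{ctrl2pi-ugp} is reversed by inserting a $4\pi$ multi-controlled phase, which the other, now bidirectional, rules produce by \cite{CDP24}; you then split it into two $2\pi$ ones and delete one.
\end{itemize}
Your route does have one advantage here: \eqref{initP-ancilla} and \eqref{initCNOT-ancilla} are only ever used forwards, so you do not need their reverses.

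With that repaired, your argument is a genuinely different route. The paper never manipulates ancillas itself. It cites \cite{CDPV24}, which shows that \eqref{initP-ancilla} and \eqref{initCNOT-ancilla} plus a family $\mathcal F$ of purely unitary equations is complete with initialisation, and notes that each equation of $\mathcal F$ follows from unitary completeness. You instead reprove that extension, via the normal form $U\circ(\mathrm{id}\otimes|0^k\rangle)$ and an ancilla-controlled factorisation of $W$. This is self-contained but heavier, and the key step needs tightening.

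\begin{itemize}
\item \emph{Factorising $W$.} Take a two-level unitary on the complement of $\mathcal H_n\otimes|0^k\rangle$, between basis states with ancilla parts $s,s'$. If $s$ and $s'$ share a $1$ at position $j$, it is supported on $\{a_j=1\}$. Otherwise, conjugate by the two-level swap of $|d',s'\rangle$ and $|d',s\vee s'\rangle$, which reduces to the first case. Hence $W$ is, up to phase, a product of factors $|0\rangle\langle 0|_{a_j}\otimes\mathrm{id}+|1\rangle\langle 1|_{a_j}\otimes V$.
\item \emph{Erasure lemma.} For each factor you need a circuit in which wire $a_j$ is touched only as a CNOT control or by $P$ gates, e.g.\ via the $ABC$ decomposition of controlled single-qubit gates. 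Your claim that multi-controlled $H$ ``decomposes into CNOTs and phases'' is false: non-diagonal single-qubit gates on the target are needed, and they vanish only after erasure, by directed unitary completeness.
\item \emph{Multi-controlled phases.} Erasing one via \eqref{mctrlPinducdef} requires the ancilla to sit in the CNOT-control position. It also requires additivity of multi-controlled phases to cancel the $\pm\frac{\varphi}{2}$ pieces.
\end{itemize}
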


\begin{proof} Soundness is easy to prove by inspecting that each equation is sound.
	To prove completeness, first we show that \eqref{euler-ugp} $\tikzfigS{./qcprime-axioms/eulerprimebis-left}\!\!\rightarrow\tikzfigS{./qcprime-axioms/eulerprimebis-right}$ can be reversed.
\begin{align*}
\tikzfigS{./qcprime-axioms/eulerprimebis-right}&\rightarrow_{\text{\ref{euler-ugp}}} &&\tikzfigS{./Euler_rev/eulerprimebis-right-1}\\
&\rightarrow_{\text{\ref{HH-ugp},\ref{Paddition-prime}}} &&\tikzfigS{./Euler_rev/eulerprimebis-right-2}\\
&\rightarrow_{\text{\ref{euler-ugp}}} &&\tikzfigS{./Euler_rev/eulerprimebis-right-3}\\
&\rightarrow_{\text{\ref{HH-ugp},\ref{Paddition-prime}}} &&\tikzfigS{./Euler_rev/eulerprimebis-right-4}
\end{align*}
By soundness, $U = \interp{\tikzfigS{./Euler_rev/eulerprimebis-right-4}}$ is equal to $V= \interp{\tikzfigS{./qcprime-axioms/eulerprimebis-left}}$ up to a global phase $e^{i\theta}$. In particular $e^{i\theta}\bra 0 U\ket 0 =\bra 0 V \ket 0$, so $e^{i\theta}\sqrt 2 = 1+e^{i\delta_2}$. As $\delta_2\in [0,\pi]$ by definition of the Euler angles, we get $\delta_2 = \frac \pi 2 \bmod{2\pi}$ and $\theta = \frac \pi 4\bmod {2\pi}$. Hence,
\begin{align*}
\tikzfigS{./Euler_rev/eulerprimebis-right-5}
\end{align*}\begin{align*}
&\qquad\rightarrow_{\text{\ref{Paddition-prime}}}&&\tikzfigS{./Euler_rev/eulerprimebis-right-6}\\
&\qquad\rightarrow_{\text{\ref{euler-ugp}}} &&\tikzfigS{./Euler_rev/eulerprimebis-right-7}\\
&\qquad\rightarrow_{\text{\ref{P0-ugp},\ref{HH-ugp},\ref{Paddition-prime}}} &&\tikzfigS{./Euler_rev/eulerprimebis-right-8}
\end{align*}
Again by soundness, we must have $\alpha_1 = \gamma_1+\delta_1+\frac {3\pi} 2$ and $\alpha_2= \delta_3-\frac \pi 2$ which achieves the derivation of the reverse of \cref{euler-ugp}.  

Eq.~(\ref{initP-ancilla})  can also be reversed: $\tikzfigS{./qcancilla-axioms/initId}\rightarrow_{\text{\ref{P0-ugp},\ref{Paddition-prime}}}  \tikzfigS{./qcancilla-axioms/initPphi-phi} \rightarrow_{\text{\ref{initP-ancilla}}}  \tikzfigS{./qcancilla-axioms/initPphi}$. Similarly for \cref{initCNOT-ancilla}, using \cref{Paddition-prime} and \cref{CNOTPCNOT-ugp}. Moreover \cref{bigebre-ugp} can also be reversed by introducing two CNOTs at the end of the circuit, thanks to \cref{P0-ugp} and \cref{CNOTPCNOT-ugp}, and then applying \cref{bigebre-ugp} from left to right.

Now we show that \cref{ctrl2pi-ugp} can be reversed. It has been proved in \cite{CDP24} that all equations  of \cref{fig:qcsf-axioms} (including  \cref{euler-ugp} and \cref{bigebre-ugp}  in both directions) but without \cref{initP-ancilla}, \cref{initCNOT-ancilla}, and \cref{ctrl2pi-ugp} are enough to prove that the composition of a $\varphi_1$- and a $\varphi_2$-multi-controlled phase gate is equivalent to a $(\varphi_1+\varphi_2)$-multicontrolled phase gate; and that the $4\pi$-multicontrolled phase gate is equivalent to the identity. Thus, to simulate the application of \cref{ctrl2pi-ugp} from right to left, one can introduce a $4\pi$-multicontrolled phase gate, split it into 2 instances of $2\pi$-multicontrolled phase gate and then use  \cref{ctrl2pi-ugp} to remove one of them.

Thus every rule of \cref{fig:qcsf-axioms} can be reversed, so we consider in the rest of the proof that every rule can be applied in both directions.
Let $E_0$ be the equational theory made of all equations of \cref{fig:qcsf-axioms} except \cref{initP-ancilla} and \cref{initCNOT-ancilla}.
In \cite{CDP24}, it was proved that $E_0$ is complete for unitary circuits, i.e.\ circuits of the form $C:n\to n$. Moreover, in \cite{CDPV24}, it has been proved that an equational theory $E_1$ made of equations \cref{initP-ancilla}, \cref{initCNOT-ancilla} together with a family $\mathcal F$ of equations involving only unitary circuits, is complete for quantum circuits with qubit initialisation. Note that every equation of $\mathcal F$ can be derived using the equations of \cref{fig:qcsf-axioms} by the completeness of its unitary fragment $E_0$. Thus any derivation based on $\mathcal F$ can also be done using the equations of \cref{fig:qcsf-axioms}.
\end{proof}

\subsection{Translation between circuits with initialisation and ZX-diagrams with extended causal flow}

Consider the following translation from the quantum circuit generators to ZX-diagrams, with the empty diagram mapping to an empty diagram:
\begin{align*}
	\gI \quad&\mapsto\quad \begin{tikzpicture}
	\begin{pgfonlayer}{nodelayer}
		\node [style=Z dot] (0) at (0, 0) {};
		\node [style=none] (1) at (-0.5, 0) {};
		\node [style=none] (2) at (0.5, 0) {};
	\end{pgfonlayer}
	\begin{pgfonlayer}{edgelayer}
		\draw (1.center) to (2.center);
	\end{pgfonlayer}
\end{tikzpicture}
 &
	\gH \quad&\mapsto\quad \begin{tikzpicture}
	\begin{pgfonlayer}{nodelayer}
		\node [style=Z dot] (0) at (-0.5, 0) {};
		\node [style=Z dot] (1) at (0.5, 0) {};
		\node [style=none] (3) at (-1, 0) {};
		\node [style=none] (4) at (1, 0) {};
	\end{pgfonlayer}
	\begin{pgfonlayer}{edgelayer}
		\draw (3.center) to (0);
		\draw (1) to (4.center);
		\draw [style=hadamard edge] (0) to (1);
	\end{pgfonlayer}
\end{tikzpicture}
 &
	\gP \quad&\mapsto\quad \input{zx-circuits/phase-gate.tikz} \\
	\gSWAP \quad&\mapsto\quad \input{zx-circuits/swap.tikz} &
	\gCNOT \quad&\mapsto\quad \input{zx-circuits/cnot.tikz} &
	\ginit \quad&\mapsto\quad \begin{tikzpicture}
	\begin{pgfonlayer}{nodelayer}
		\node [style=Z dot] (0) at (-0.5, 0) {};
		\node [style=Z dot] (1) at (0.5, 0) {};
		\node [style=none] (4) at (1, 0) {};
	\end{pgfonlayer}
	\begin{pgfonlayer}{edgelayer}
		\draw (1) to (4.center);
		\draw [style=hadamard edge] (0) to (1);
	\end{pgfonlayer}
\end{tikzpicture}

\end{align*}
Parallel composition is the usual juxtaposition, sequential composition merges not just the dangling plain wires but also the spiders connected by these wires.

\begin{example}\label{ex:translation}
	Consider the left-hand side of \eqref{CNOTPCNOT-ugp}.
	This can be decomposed as:
	\[
		\tikzfigM{./qc-axioms/CNOTPphiCNOT} \quad=\quad \gCNOT \circ (\gP\otimes\gI) \circ \gCNOT
	\]
	The corresponding ZX-translation is therefore:
	\begin{align*}
		\input{zx-circuits/cnot.tikz} \circ (\input{zx-circuits/phase-gate.tikz} \otimes ) \circ \input{zx-circuits/cnot.tikz}
		\quad&=\quad \input{zx-circuits/example0.tikz} \\
		\quad&=\quad \input{zx-circuits/example1.tikz}
	\end{align*}
	which is in MBQC form after merging the spiders connected by plain wires.
\end{example}

This translation preserves the interpretation, as can straightforwardly be checked on a generator-by-generator basis.
It is well-known that standard methods of translating a circuit into the ZX-calculus result in ZX-diagrams with extended causal flow (once the diagrams are brought into MBQC-form); the same holds for our translation method which already ensures the ZX-diagram is in MBQC-form.

We would like to show in a local way that the ZX-translations of each of the rules of Figure~\ref{fig:qcsf-axioms} can be derived from the rules of Figure~\ref{fig:flow-preserving}: i.e.\ where a ZX-calculus rule preserves flow only under certain conditions, we would like to be able to verify that the conditions are met by considering a small subdiagram rather than having to reason about the global context.
To achieve this, we split the derivation of the translated circuit rewrite rules into two parts: Firstly, we show that, given a certain relationship between a subdiagram and its context, the subdiagram can be replaced by any other diagram which has extended causal flow without breaking the global flow.
Secondly, we show that each of the circuit rewrite rules can be derived from the flow-preserving rewrite rules without worrying about the context.
Putting the two results together then means that the derivations will also work in the context of a larger diagram.

In a quantum circuit, it is well understood when a rewrite rule can be applied, even if it might act on not necessarily connected parts of the circuit, such as the right-to-left direction of \eqref{CNOTPCNOT-ugp}.
The following definitions and proposition achieve a similar process in the ZX-calculus by isolating a subdiagram $D$ and splitting its context into a `past' and a `future' in such a way that $D$ can be replaced without breaking the causal flow, as visualised below:
\ctikzfig{zx-circuits/D-past-future}

First, we will need the following definition of the `inputs' and `outputs' of a subset of vertices.

\begin{definition}\label{def:subdiagram-inputs-outputs}
	Consider a labelled open graph $(G,I,O,\ld)$ with extended causal flow $g$.
	Let $W\sse V$ be a non-empty subset of the vertices.
	Define the \emph{set of outputs of $W$} as
	\begin{align*}
		O_W &:= \{v\in W \mid v\in O \vee g(v)\notin W\} \\
		\intertext{and choose a \emph{set of inputs $I_W$ of $W$} satisfying $I_{W,min}\sse I_W \sse I_{W,max}$, where:}
		I_{W,min} &:= (W\cap I) \cup \{v\in W\mid \exists w\in V\setminus W \text{ s.t. } g(w)=v\} \\
		I_{W,max} &:= \{v\in W\mid \nexists w\in W \text{ s.t. } g(w)=v\}.
	\end{align*}
\end{definition}

It is straightforward to see that no element $v\in I_{W,min}$ corrects any element of $W$: inputs do not appear in correction sets at all, and in the second case of the definition of $I_{W,min}$, $v$ corrects some element of $V\setminus W$, therefore it cannot also correct an element of $W$.
Thus $I_{W,min}\sse I_{W,max}$ and there always exists some valid choice for $I_W$.
The idea behind this definition is that in a case such as Example~\ref{ex:ancilla-past} below, which corresponds to an ancilla initialisation followed by a Hadamard gate, we may sometimes want to include the ancilla initialisation in the rewriting process and at other times rewrite only the unitary part of the diagram.

\begin{lemma}\label{lem:subdiagram-flow}
	Consider a labelled open graph $(G,I,O,\ld)$ with extended causal flow $g$.
	Let $W\sse V$ be a non-empty subset of the vertices with inputs $I_W$ and outputs $O_W$ as given in Definition~\ref{def:subdiagram-inputs-outputs}.
	Define $G'$ be the subgraph induced by $W$ and define $\ld', g'$ to be the restrictions of $\ld$ and $g$ to domain $W\setminus O_W$.
	Then $g'$ is an extended causal flow on $(G',I_W,O_W,\ld')$.
\end{lemma}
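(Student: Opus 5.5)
We verify the definition of extended causal flow (Definition~\ref{def:causal}) directly for $g'$ on $\Gamma_W:=(G',I_W,O_W,\ld')$. There are three things to check: $g'$ maps $W\setminus O_W$ into $W\setminus I_W$; conditions \ref{itm:cXY} and \ref{itm:cYZ} hold in $G'$; and the induced relation $R^*_{g'}$ is a strict partial order. Throughout, recall that $g(v)$ is a singleton, which we identify with its element.

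\textbf{Well-definedness of $g'$.} Take $v\in W\setminus O_W$. By the definition of $O_W$, $v\notin O$ and $g(v)\in W$. We also need $g(v)\notin I_W$. Since $v\in W$ and $g(v)=v'$ for $v':=g(v)$, the vertex $v'$ is corrected by an element of $W$. Hence $v'\notin I_{W,max}\supseteq I_W$. (If $\ld(v)=YZ$, then $g(v)=v$, and the same argument shows $v\notin I_W$.) So $g':W\setminus O_W\to W\setminus I_W$ is well defined.

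\textbf{Measurement conditions.} For \ref{itm:cYZ}, the condition $g'(v)=v$ is inherited directly from $g$. For \ref{itm:cXY}, we have $v\sim g(v)$ in $G$, and both endpoints lie in $W$. Since $G'$ is the induced subgraph on $W$, the edge $v\sim g(v)$ is also present in $G'$. The labels are restrictions of $\ld$, so $\ld'(W\setminus O_W)\sse\{XY,YZ\}$.

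\textbf{Partial order.} Compare the relations of Definition~\ref{def:induced-relation} for $g'$ in $G'$ with those for $g$ in $G$. First, $R'_X\sse R_X$, because the correction sets are the same. Second, for any $A\sse W$ we have $\odd{G'}{A}=\odd{G}{A}\cap W$, because the induced subgraph keeps exactly the edges between vertices of $W$. Hence $R'_Z\sse R_Z$. The relation $R_Y$ is empty in both cases, since there are no $Y$ labels. Therefore $R_{g'}\sse R_g$, and so $R^*_{g'}\sse R^*_g$. A sub-relation of a strict partial order is irreflexive, and its transitive closure is also contained in $R^*_g$. So $R^*_{g'}$ is a strict partial order.

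The only step needing genuine care is the well-definedness check that $g'(v)\notin I_W$. It is exactly what the upper bound $I_{W,max}$ in Definition~\ref{def:subdiagram-inputs-outputs} is designed to guarantee. The remaining steps are routine inheritance arguments.
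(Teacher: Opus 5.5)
Your proof is correct and follows essentially the same route as the paper: you establish well-definedness of $g'$ via the definition of $O_W$ and the bound $I_W\sse I_{W,max}$, you inherit \ref{itm:cXY} and \ref{itm:cYZ} through the induced subgraph, and you observe that $R^*_{g'}\sse R^*_g$. Your argument is somewhat more explicit than the paper's, which simply asserts the sub-relation property; in particular, you spell out the identity $\odd{G'}{A}=\odd{G}{A}\cap W$ for $A\sse W$ that justifies $R'_Z\sse R_Z$.
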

\begin{proof}
	First, note that $(G',I_W,O_W,\ld')$ is a valid \LOG.
	Now, for any $v\in W\setminus O_W$, we have $g'(v)\in W$ by the definition of $O_W$ and	$g'(v) \notin I_W$ by the property $I_W\sse I_{max}$ of Definition~\ref{def:subdiagram-inputs-outputs}.
	Thus $g': W\setminus O_W \to W\setminus I_W$ as desired.
	Therefore the $g'$ inherits the property of being an extended causal flow from $g$:
	The relation $R_{g'}^*$ is a subrelation of $R_g^*$, so it is also a partial order.
	Similarly, the conditions \ref{itm:cXY} and \ref{itm:cYZ} are satisfied for $g'$ because they are satisfied for $g$ and $G'$ is an induced subgraph.
	Thus  $g'$ is an extended causal flow on $(G',I_W,O_W,\ld')$, as desired.
\end{proof}

\begin{observation}
	As edges between inputs and edges between outputs do not affect whether there is a flow, $G'$ need not be the full induced subgraph: any subset of edges in $(I_W\times I_W) \cup (O_W\times O_W)$ can be left out without changing the result of Lemma~\ref{lem:subdiagram-flow}.
\end{observation}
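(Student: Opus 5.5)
The plan is to reduce the observation to Lemma~\ref{lem:subdiagram-flow} by checking that removing any such edges changes neither the correction conditions nor the induced relation in a harmful way. Let $G''$ be $G'$ with some set $E_0\sse (I_W\times I_W)\cup(O_W\times O_W)$ of edges removed, and keep the same map $g'$ from Lemma~\ref{lem:subdiagram-flow}. We then verify that $g'$ is still an extended causal flow on $(G'',I_W,O_W,\ld')$.

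First we handle the correction conditions. Condition \ref{itm:cYZ} does not involve edges at all. For \ref{itm:cXY} we need $g'(v)\sim v$ in $G''$ for every $XY$-measured $v\in W\setminus O_W$. The edge $\{v,g'(v)\}$ cannot lie in $O_W\times O_W$, because $v\notin O_W$. It also cannot lie in $I_W\times I_W$, because $g'(v)\notin I_W$ by the choice $I_W\sse I_{W,max}$. Hence this edge survives the removal.

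Next we treat the induced relation. We have $R_X$ unchanged, since correction sets are unchanged, and $R_Y=\emptyset$. For $R_Z$, each correction set is a singleton $\{w\}$ with $w=g'(v)$, so $\odd{G''}{\{w\}}=N_{G''}(w)\sse N_{G'}(w)$. The reason is that $w\notin I_W$, so removing edges can only delete neighbours of $w$, namely those $w$ loses through removed $O_W\times O_W$ edges, and can never add any. Therefore $R_{g'}^{G''}\sse R_{g'}^{G'}$. The transitive closure of a subrelation of a strict partial order is again a strict partial order, by the irreflexivity and acyclicity inherited from Lemma~\ref{lem:subdiagram-flow}.

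The only step needing care is the \ref{itm:cXY} check, that is, confirming that no flow edge lies among the removed edges. As argued above, this follows directly from $v\notin O_W$ and $g'(v)\notin I_W$.
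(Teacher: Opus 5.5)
Your argument is correct and is essentially the paper's one-line justification spelled out: keeping the same $g'$, deleting edges only shrinks $R_Z$, and no flow edge $\{v,g'(v)\}$ can lie in $(I_W\times I_W)\cup(O_W\times O_W)$ because $v\notin O_W$ and $g'(v)\notin I_W$. One small quibble: $N_{G''}(w)\sse N_{G'}(w)$ holds simply because $G''$ is a subgraph of $G'$, so you do not need $w\notin I_W$ for that step.
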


Note that, in standard ZX-calculus rewriting, we `cut' out a subdiagram to be replaced by something else by cutting the edges connecting it to its context.
In the context of circuit-like flow-preserving rewriting, we instead want to `cut' the vertices that connect two diagrams by unfusing them, an inverse of the spider fusion that forms part of the translation from circuits to ZX-diagrams shown in Example~\ref{ex:translation}.
Instead of partitioning the vertices of the ZX-diagram into non-intersecting sets, we will therefore have subdiagrams that share vertices at their interfaces, i.e.\ at their inputs and outputs.
On the other hand, we require that each edge belongs to at least one of the subdiagrams, in the sense that both of its endpoints are within that subdiagram.
Additionally, a certain compatibility with the induced partial order of the flow will be needed.

\begin{definition}\label{def:causal-subdiagram}
	Consider a ZX-diagram $C$ corresponding to a labelled open graph $(G,I,O,\ld)$ with extended causal flow $g$.
	We say a subdiagram $D$ on vertices $V_D\sse V$ is a \emph{causal subdiagram} if there are two additional sets $V_\past,V_\fut\sse V$ such that:
	\begin{enumerate}
		\item\label{it:future} $V_\fut$ contains $O$ and is upwards closed according to $\prec_g$.
		\item\label{it:past} $V_\past$ contains $I$ and is downwards closed according to $\prec_g$.
		\item\label{it:edges} For every edge $\{u,v\}\in E$ there exists a set $W\in\{V_D,V_\past,V_\fut\}$ such that $u,v\in W$.
		(This $W$ need not be unique.)
		\item\label{it:overlap} The three sets overlap exactly on their inputs and outputs, in the sense that:
			\begin{itemize}
				\item $V_\past \cap (V_D\cup V_\fut) = O_\past$,
				\item $V_D\cap V_\past = I_D$,
				\item $V_D\cap V_\fut = O_D$, and
				\item $V_\fut\cap (V_\past\cup V_D) = I_\fut$.
			\end{itemize}
	\end{enumerate}
\end{definition}

Condition~\ref{it:edges} implies that every non-isolated vertex is contained in $V_\past\cup V_D\cup V_\fut$.
It is possible for a labelled open graph with extended causal flow to contain an isolated vertex; this vertex must be either an output or $YZ$-measured.
Outputs are contained in $V_\fut$ by definition, yet an isolated $YZ$-measured qubit need not be contained in any of the three sets $V_\past, V_D, V_\fut$ of the above definition.
Any isolated $YZ$-measured qubit only yields a scalar factor -- it does not affect the linear operation being implemented.
We may therefore assume without loss of generality that there are no isolated $YZ$-measurement and hence $V_\past, V_D,$ and $V_\fut$ together cover $V$.

\begin{observation}
	A subcircuit of the kind that could have a circuit rewrite rule applied to it will translate to a causal subdiagram in the ZX-calculus.
	
	To rewrite a subcircuit $C$ of some larger circuit, the large circuit must be decomposable as $C_\fut \circ (C\otimes C_\para)\circ C_\past$.
	Without loss of generality, we may even assume that $C_\para$ is just a tensor product of identities.
	Then the `past' and `future' parts of the decomposition become $V_\past$ and $V_\fut$ when translating into the ZX-calculus, and they satisfy Properties~\ref{it:future}, \ref{it:past} and~\ref{it:overlap} because of the sequential composition structure.
	Moreover, Property~\ref{it:edges} follows from the way composition was defined in the translation to ZX.
\end{observation}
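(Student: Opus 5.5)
The observation to justify has two parts. A circuit decomposed as $C_\fut\circ(C\otimes C_\para)\circ C_\past$ translates to a ZX-diagram with extended causal flow. Choosing $V_\past$, $V_D$, $V_\fut$ as the vertex sets of the translated pieces then satisfies Definition~\ref{def:causal-subdiagram}.

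\textbf{Step 1: normalise $C_\para$.} Any gate in $C_\para$ can be slid either into $C_\past$ or into $C_\fut$ by deformation (interchange law). So we may assume $C_\para$ is a tensor product of identity wires. Each such wire translates to a single green spider, which fuses with the spiders at its ends.

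\textbf{Step 2: build the global flow and the three sets.} I would construct the extended causal flow $g$ on the translation generator by generator, so it is clear how the sets sit inside the partial order:
\begin{itemize}
	\item on each qubit wire, every spider is corrected by the next spider along that wire;
	\item Hadamards and initialisations contribute the spider after the Hadamard edge as the corrector;
	\item phase gates give $XY$ vertices on the wire;
	\item CNOTs give a $YZ$-measured or wire-connected vertex, according to the translation in Example~\ref{ex:translation}.
\end{itemize}
With this flow, every relation in $R_g$ goes from a vertex of an earlier layer of the sequential composition to the same or a later layer. The reason is that correctors lie later on the same wire, and odd neighbourhoods of singletons are neighbours, which are local to a gate.

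Now define $V_\past$, $V_D$, $V_\fut$ as the spiders appearing in the translations of $C_\past$, $C\otimes C_\para$ and $C_\fut$ respectively, where a fused spider belongs to both adjacent pieces. I would include the identity-wire spiders of $C_\para$ only in $V_\past\cap V_\fut$, or assign them consistently; this is where the overlap bookkeeping must be checked.

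\textbf{Step 3: check the conditions of Definition~\ref{def:causal-subdiagram}.} Upward closure of $V_\fut$ and downward closure of $V_\past$ follow from the layer monotonicity above. Condition~\ref{it:edges} holds because every Hadamard edge comes from a single generator, so both its endpoints lie in one piece.

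Condition~\ref{it:overlap} is the main obstacle. The shared vertices are exactly the fused boundary spiders. For these I need to match $O_W$ from Definition~\ref{def:subdiagram-inputs-outputs}: the corrector of a boundary spider lies in the next piece, so the spider is an output of the earlier piece. I also need to choose $I_W$ within $[I_{W,min},I_{W,max}]$: a boundary spider is corrected from the earlier piece, so it lies in $I_{W,min}$ of the later piece.

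The delicate cases are the following.
\begin{itemize}
	\item \emph{Wires passing through $C_\para$.} For these, the intersection $V_\past\cap V_\fut$ must still count as both $O_\past$ and $I_\fut$, so $O_\past$ is taken as the full boundary with $V_D\cup V_\fut$.
	\item \emph{Ancilla initialisations at the start of $C$.} Their first spider lies in $I_{D,max}$ but not in $I_{D,min}$. This is exactly the freedom allowed by the choice of $I_W$.
\end{itemize}
I would verify these by a short case analysis on which generator sits at each end of a boundary wire.
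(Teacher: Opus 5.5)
Your route is the one the statement itself sketches: absorb $C_\para$ into $C_\past$, take the translated pieces as $V_\past,V_D,V_\fut$, and read off the conditions. Your handling of Property~\ref{it:edges} and of the bookkeeping for Property~\ref{it:overlap} (including ancillas started in $C$) is fine. \textbf{The gap is the step where you derive downward closure of $V_\past$ from ``layer monotonicity''.} Monotonicity only says that an $R_g$-step lands in the same piece or a later one. But a fused boundary spider lies in two pieces at once and has neighbours in two gates, so neighbourhoods are not local to a gate. A CNOT translates to one control spider and three $XY$ target spiders; no $YZ$ vertex is involved. If the control is the first spider of $C$ on its wire, it is fused into $V_\past$. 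Yet it is adjacent to the middle target spider, whose flow-predecessor may lie strictly inside $C$.

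\emph{Concrete failure.} Take the left-hand side of \eqref{CZ-ugp}, $C=(I\otimes H)\circ\mathrm{CNOT}\circ(I\otimes H)$ with control on wire~1, and $C_\past=C_\fut=H\otimes I$. Wire~1 becomes the path $p,c,r$, where $c$ fuses the past Hadamard, the control and the future Hadamard. Wire~2 becomes the path $q,t_1,t_2,t_3,h$, and there is one more edge $\{c,t_2\}$. The causal flow maps each spider to the next one on its wire. Since $c\in N_G(g(t_1))$, we get $t_1\prec_g c$, while $V_\past=\{p,q,c\}$ contains $c$ but not $t_1$.

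No other choice of $V_\past,V_\fut$ rescues $V_D=\{c,q,t_1,t_2,t_3,h\}$. By Property~\ref{it:edges}, the edge $\{p,c\}$ must lie in $V_\past$ or in $V_\fut$.
\begin{itemize}
\item If it lies in $V_\past$, downward closure forces $t_1\in V_\past\cap V_D=I_D$. But $t_1=g(q)$ with $q\in V_D$, so $t_1\notin I_{D,max}$.
\item If it lies in $V_\fut$, then $p\in V_\fut$ and $p\prec_g t_2$, so $t_2\in V_\fut\cap V_D=O_D$. But $g(t_2)=t_3\in V_D$, so $t_2\notin O_D$.
\end{itemize}
So Property~\ref{it:past} cannot be proved as claimed. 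The one-line justification in the statement has the same hole, and the problem also reappears for spiders shared by $V_\past$ and $V_\fut$ on $C_\para$-wires.

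\emph{What can be salvaged.} Property~\ref{it:future} does hold. For $t\in V_\fut\setminus O$, $g(t)$ lies strictly after the first future spider of its wire, so $g(t)\notin V_\past\cup V_D$. Property~\ref{it:edges} then gives $N_G(g(t))\subseteq V_\fut$. The proof of the replacement proposition after Definition~\ref{def:replacement} only uses Properties~\ref{it:future}, \ref{it:edges} and~\ref{it:overlap}, so you can repair the argument in either of two ways.
\begin{itemize}
\item Drop Property~\ref{it:past} from Definition~\ref{def:causal-subdiagram}.
\item First unfuse along flow edges at the cut (the corollary to Proposition~\ref{prop:vertex-splitting}, or \eqref{eq:IO} at inputs), so that every spider of $O_\past$ is adjacent only to its wire-neighbours. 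Downward closure then follows, since $V_\past$ meets every wire in an initial segment, and $D$ becomes an \eqref{eq:IO}-unfused copy of the translation of $C$.
\end{itemize}
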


\begin{example}\label{ex:unfuse}
	Consider the diagram constructed in Example~\ref{ex:translation}:
	\[
		\input{zx-circuits/compatible0.tikz} \qquad\qquad\qquad\qquad
		\input{zx-circuits/compatible0-order.tikz}
	\]
	which has causal flow $g(i_1) = a, g(i_2) = b, g(a) = o_1, g(b) = c, g(c) = d, g(d) = o_2$ with the partial order given on the right.
	Then $\{i_1,a,o_1\}$ is a causal subdiagram, as is $\{a,c,d\}$; the corresponding ways of unfusing the diagram into three parts are shown below:
	\[
		\input{zx-circuits/compatible0-unfused.tikz} \qquad\qquad
		\input{zx-circuits/compatible0-ex2.tikz}
	\]
	On the other hand, the sets $\{i_2,b,c\}$ or $\{i_2,c\}$ or $\{a,o_2\}$ marked with dashed lines in the below diagrams are not causal:
	\[
		\input{zx-circuits/compatible-counterex1.tikz} \qquad\qquad
		\input{zx-circuits/compatible-counterex2.tikz} \qquad\qquad
		\input{zx-circuits/compatible-counterex3.tikz}
	\]
	In the first case, to have both endpoints of the edge $\{i_1,b\}$ in the same set (as required by Condition~\ref{it:edges}), we need $b\in V_\past\cup V_\fut$ because $i_1\notin V_D$.
	\begin{itemize}
		\item If $b\in V_\past$, then $i_2\in V_\past\cap V_D$ but $i_2\notin O_\past$, contradicting Condition~\ref{it:overlap}.
		\item If $b\in V_\fut$, then $c\in V_\fut\cap V_D$ but $c\notin I_\fut$, again contradicting Condition~\ref{it:overlap}.
	\end{itemize}
	The argument is analogous for the second case.
	
	In the third case, consider Condition~\ref{it:edges} applied to the edge $\{o_1,d\}$.
	\begin{itemize}
		\item If $o_1,d\in V_\past$, then $a\in V_\past$ by the closure property of Condition~\ref{it:past}.
		Yet then $a\in V_D\cap V_\past$ and $a\notin O_\past$, contradicting Condition~\ref{it:overlap}.
		\item If $o_1,d\in V_\fut$, then $o_2\in V_\fut\cap V_D$ but $o_2\notin I_\fut$, again contradicting Condition~\ref{it:overlap}.
	\end{itemize}
\end{example}

\begin{example}\label{ex:ancilla-past}
	Note that $V_\past$ may be empty if $I=\emptyset$.
	For example, consider the path graph on three vertices $\{a,b,c\}$ with $I=\emptyset$ and $O=\{c\}$: \begin{tikzpicture}
	\begin{pgfonlayer}{nodelayer}
		\node [style=Z dot] (0) at (-1, 0) {};
		\node [style=Z dot] (1) at (1, 0) {};
		\node [style=none] (2) at (1.75, 0) {};
		\node [style=Z dot] (3) at (0, 0) {};
	\end{pgfonlayer}
	\begin{pgfonlayer}{edgelayer}
		\draw [style=hadamard edge] (0) to (1);
		\draw (1) to (2.center);
	\end{pgfonlayer}
\end{tikzpicture}

	If $V_D = \{a,b\}$, then $I_{min} = \emptyset$ and $I_{max} = \{a\}$.
	There are thus two options for $I_D$ (and both lead to causal subdiagrams) -- either $I_D=\emptyset$ or $I_D = \{a\}$.
	The first of those options yields an empty $V_\past$:
	\[
		\input{zx-circuits/ancilla-start-unfuse1.tikz} \qquad\qquad\qquad
		\input{zx-circuits/ancilla-start-unfuse2.tikz}
	\]
\end{example}

\begin{example}
	For a given set of vertices, $D$ must contain all edges incident on any internal vertex, i.e.\ on any vertex $v\in V_D\setminus (I_D\cup O_D)$ and all edges in $I_D\times O_D$.
	Edges between inputs or edges between outputs can either be part of $D$ or they can be unfused into the `past' or `future'.
	For example, in the diagram of Examples~\ref{ex:translation} and~\ref{ex:unfuse}, if $V_D = \{i_1,b\}$, then there are three different possible decompositions depending on whether the edge $\{i_1,b\}$ becomes part of the past, of $D$ itself, or of the future:
	\[
		\input{zx-circuits/compatible-edges1.tikz} \qquad\quad
		\input{zx-circuits/compatible-edges2.tikz} \qquad\quad
		\input{zx-circuits/compatible-edges3.tikz}
	\]
\end{example}

\begin{definition}\label{def:replacement}
	Consider a ZX-diagram $C$ corresponding to a labelled open graph $(G,I,O,\ld)$ with extended causal flow $g$.
	Suppose $D$ is a causal subdiagram of $C$.
	Let $D'$ be another diagram that has extended causal flow and satisfies $I_{D'} = I_D$, $O_{D'} = O_D$.
	Define a diagram $C'$ by unfusing $D$ from its past and future, replacing it by $D'$, and then re-fusing.
\end{definition}

\begin{example}
	Consider the following diagram (where we have left out phase labels for simplicity and because they do not affect the argument), again the partial order corresponding to the extended causal flow is shown on the right:
	\begin{center}
		\input{zx-circuits/compatible-new.tikz} \qquad\qquad\qquad\qquad
		\input{zx-circuits/compatible-new-order.tikz}
	\end{center}
	Take $V_D = \{a\}$ with $V_\past = \{i_1,i_2,a,o_2\}$ and $V_\fut = \{a,o_1,o_2\}$, then $D$ is a causal subdiagram and can be unfused a below:
	\ctikzfig{zx-circuits/compatible-new-decomposed}
	Replacing the subdiagram \begin{tikzpicture}
	\begin{pgfonlayer}{nodelayer}
		\node [style=Z dot] (0) at (0, 0) {};
		\node [style=none] (1) at (-0.75, 0) {};
		\node [style=none] (2) at (0.75, 0) {};
	\end{pgfonlayer}
	\begin{pgfonlayer}{edgelayer}
		\draw (1.center) to (2.center);
	\end{pgfonlayer}
\end{tikzpicture}
 by \input{zx-circuits/triple.tikz} then leads to:
	\[
	\input{zx-circuits/compatible-new-decomposed.tikz} \quad\mapsto\quad
	\input{zx-circuits/compatible-new-replaced.tikz} \quad = \quad
	\input{zx-circuits/compatible-new-merged.tikz}
	\]
\end{example}

\begin{proposition}
	Consider a ZX-diagram $C$ corresponding to a labelled open graph $\Gamma = (G,I,O,\ld)$ with extended causal flow $g$.
	Suppose $D$ is a causal subdiagram of $C$, and $D'$ is a diagram that has extended causal flow $f$ and satisfies $I_{D'} = I_D$, $O_{D'} = O_D$.
	Let $\Gamma'= (G',I,O,\ld')$ be the labelled open subgraph that results from $\Gamma$ when $D$ is replaced by $D'$ according to Definition~\ref{def:replacement}.
	Then $\Gamma'$ has extended causal flow.
\end{proposition}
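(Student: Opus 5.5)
The plan is to build the new correction function $g'$ on $\Gamma'$ by gluing: set $g'(v):=g(v)$ for $v\in V_\past\setminus O_\past$ and for $v\in V_\fut\setminus O$, and $g'(v):=f(v)$ for $v\in V_{D'}\setminus O_{D'}$. First I will check that this is well defined and total on the non-outputs of $\Gamma'$, using Condition~\ref{it:overlap} of Definition~\ref{def:causal-subdiagram}. The three vertex sets $V_\past$, $V_{D'}$, $V_\fut$ overlap only at interfaces, and for an interface vertex exactly one piece treats it as a non-output. For example, a vertex of $O_\past=I_D$ (or of $O_D=I_\fut$) is a non-output of $D'$ (resp.\ $\fut$), unless it passes straight through, in which case we follow it to the next piece. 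By Lemma~\ref{lem:subdiagram-flow}, $g$ restricted to $V_\past$ and to $V_\fut$ gives extended causal flows on those subdiagrams with interfaces $I_\past,O_\past$ and $I_\fut,O_\fut$. Hence every value $g'(v)$ lies inside the same piece as $v$ and is not one of that piece's inputs; since the pieces' inputs are exactly the interface vertices that are corrected by the preceding piece, the codomain condition $g'(v)\notin I$ holds.

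Next I verify \ref{itm:cXY} and \ref{itm:cYZ}. By Condition~\ref{it:edges} every edge of $G'$ lies within one piece. Also, adjacency between $v$ and $g'(v)$ is witnessed inside that piece, and labels are unchanged on the past and future. So both conditions are inherited from $g$ and from $f$ (on $D'$, labels come from $\ld_{D'}$).

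The main work, and the step I expect to be the obstacle, is showing $R_{g'}^*$ is a strict partial order. The difficulty is that a neighbour $w$ of the corrector $g'(v)$ may lie in another piece when $g'(v)$ is an interface vertex. I will first show that every generating pair $(u,w)\in R_{g'}$ either lies inside one piece or goes ``forward'' in the order $\past\to D'\to\fut$. This uses Condition~\ref{it:edges} together with the fact that the neighbours of an interface vertex in an earlier piece are predecessors. For example, if $g'(v)\in O_\past=I_{D'}$, its neighbours in $V_\past$ already satisfy $v\prec_g w$ or $w=v$ in the original flow. Its neighbours in $D'$ and $\fut$ are in later pieces.

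Then I will argue acyclicity in two stages by Lemma~\ref{lem:extend-partial-order}. Take $T=V_\fut$, which is upwards closed for $\prec_g$ by Condition~\ref{it:future}. Within $V_\past\cup V_{D'}$, take $T=V_{D'}\setminus I_{D'}$, with $V_\past$ downwards closed by Condition~\ref{it:past}. Inside each piece the restricted relation is a strict partial order, by Lemma~\ref{lem:subdiagram-flow} for the past and future and by hypothesis $f$ for $D'$. Adding the cross-piece pairs, all of the form $\comp{T}\times T$, preserves strictness by part~1 of Lemma~\ref{lem:extend-partial-order}. The delicate point is the interface vertices belonging to two pieces. Here I will check that their outgoing relations stay in their later piece and their incoming relations come from their earlier piece, so that the upwards-closure hypothesis of the lemma holds. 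If this fails directly, I would instead quotient by identifying the pieces' interface copies and argue via the unfused diagram, which clearly has flow, and then re-fusing along flow edges using \eqref{eq:IO}-style reasoning as in Lemma~\ref{lem:input-output-fusion}.
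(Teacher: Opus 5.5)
\textbf{There is a genuine gap, at exactly the step you flagged as delicate.} Your gluing of $g'$ and your inheritance argument for \ref{itm:cXY} and \ref{itm:cYZ} coincide with the paper's. The acyclicity argument, however, does not work as proposed.

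\emph{The property you plan to check is false.} You intend to show that the incoming relations of an interface vertex come only from its earlier piece. But $D'$'s own flow can order a non-input before an input. Take $D'$ with inputs $x,y$, outputs $x',w$, edges $xx'$, $yu$, $uw$, $xw$, and $f(x)=x'$, $f(y)=u$, $f(u)=w$. This is a valid extended causal flow, and since $x\in N(f(u))$ we get $u\prec_f x$. So $T=V_{D'}\setminus I_{D'}$ is not upwards closed, and Lemma~\ref{lem:extend-partial-order} cannot be applied with it.

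\emph{Relations via new edges between inputs of $D'$ are misclassified.} Suppose $D'$ has an edge $xy$ with $x,y\in I_{D'}$ that was absent in $D$, and $g(u)=x$ for some $u\in V_\past\setminus O_\past$. Then $(u,y)\in R_{g'}$. This pair is not a relation of $\prec_g$, so your claim that neighbours of $g'(v)$ lying in $V_\past$ are already related in the original flow fails. It is also not of the form $(V\setminus T)\times T$.

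\emph{The lemma needs one strict partial order as input.} Lemma~\ref{lem:extend-partial-order} starts from a single strict partial order. The past order and the $D'$-order share the vertices $I_{D'}$, so acyclicity of their union needs its own argument. Your proposal takes it for granted.

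\textbf{How to repair it.} What is actually true is the converse of your check. By Condition~\ref{it:overlap}, every interface vertex is an output of its earlier piece, so it has no outgoing relations there. Moreover, $V_\fut$ is upwards closed for $R_{g'}$ itself: for $u\in V_\fut\setminus O$ we have $g(u)\in V_\fut$, and $g(u)\notin I_\fut$ because $I_\fut\sse I_{W,max}$ for $W=V_\fut$. Hence all edges at $g(u)$ are future edges.

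With this, the two-stage plan goes through if you make three changes:
\begin{itemize}
\item take $T=V_{D'}$, inputs included, in the first stage, and $T=V_\fut$ in the second;
\item classify each generating pair $(u,x)$ by the piece owning the edge $\{g'(u),x\}$, not by vertex membership;
\item use the "no outgoing relations in the earlier piece'' fact to show that the union of the piece orders is acyclic: once a path takes a step of the later piece, it can never take a step of the earlier one again.
\end{itemize}
Then each $T$ is upwards closed and every remaining pair lies in $(V\setminus T)\times T$.

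The repaired argument is a reasonable, more systematic alternative to the paper's proof. The paper argues directly on a hypothetical cycle: any step leaving $V_{D'}\setminus O_{D'}$ lands in $V_\fut$, and no step from $V_\fut$ re-enters $V_{D'}\setminus O_{D'}$. This uses the same grouping as the fix, with the inputs of $D'$ counted in $D'$ and its outputs in the future.

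Your fallback via unfusing and re-fusing with \eqref{eq:IO} does not avoid the issue. Re-fusing preserves causal flow only along flow edges, so it would need the same order argument.
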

\begin{proof}
	Write $V' = (V\setminus V_D) \cup V_{D'}$.
	Let $V_\past, V_\fut$ be two sets that witness $D$ being a causal subdiagram according to Definition~\ref{def:causal-subdiagram}.
	Define a correction function on $\Gamma'$ as
	\[
		g'(v) =
		\begin{cases}
			g(v) &\text{if } v\in (V_\past\setminus O_\past) \cup (V_\fut\setminus O) \\
			f(v) &\text{if } v\in V_{D'}\setminus O_{D'}.
		\end{cases}
	\]
	This is indeed a function $g':V'\setminus O\to V'\setminus I$ because $O$ is contained in $V_\fut$ and $g,f$ are both flows, so do not contain any inputs in their codomains.
	Moreover, \ref{itm:cXY} and \ref{itm:cYZ} hold as both $f$ and $g$ are extended causal flows, so all that remains is to check whether the induced relation is a partial order.
	For easier legibility, we will use the symbol $\prec_{g'}$ even though we have not yet proved this is a valid strict partial order.
	
	Suppose for a contradiction that there exist $v,w\in V$ such that $v\prec_{g'} w \prec_{g'} v$.
	There can be no loop in the partial order that is fully contained in $D'$ since $f$ is an extended causal flow on $D'$.
	Similarly, there can be no loop that is fully contained in $\Gamma'\setminus D'$, as this would imply a loop in $\prec_g$ on $\Gamma$, yet $g$ is an extended causal flow on $\Gamma$.
	Hence assume without loss of generality that $v\in V_{D'}\setminus O_{D'}$ and $w\in V'\setminus V_{D'}$.
	
	Now $w\prec_{g'} v$ means there exists a sequence $w = w_0 R_{g'} w_1 R_{g'} \ldots R_{g'} w_k = v$ for some positive integer $k$ and $v\prec_{g'} w$ means there exists a sequence $v = v_0 R_{g'} v_1 R_{g'} \ldots R_{g'} v_\ell = w$ for some positive integer $\ell$.
	By relabelling which vertices in the cycle are $v$ and $w$, we may further assume without loss of generality that $w_1\in V_{D'}\setminus O_{D'}$ and that $v_1\notin V_{D'}\setminus O_{D'}$.
	
	Consider first the part $v\prec_{g'} v_1 \preceq_{g'} w$, where $v\in V_{D'}\setminus O_{D'}$ and $v_1\notin V_{D'}\setminus O_{D'}$.
	There are two cases that induce this relationship: either $v_1 = g'(v) = f(v)$ or $v_1\in N_{G'}(f(v))$.
	\begin{itemize}
		\item Suppose $v_{1} = f(v)$, then we must have $v_1\in V_{D'}$.
		Yet $v_1\notin V_{D'}\setminus O_{D'}$ by assumption, therefore $v_1 \in O_{D'} = O_D$.
		Now, $O_D \sse V_\fut$ by Condition~\ref{it:overlap} of Definition~\ref{def:causal-subdiagram}, so $w\in V_\fut$ by upwards closure of $V_\fut$ (cf.\ Condition~\ref{it:future}).
		
		\item Suppose $v_{1} \in N_{G'}(f(v))\setminus\{v\}$; there are two subcases depending on whether $v_1\in O_{D'}$.
		\begin{itemize}
			\item If $v_1\in O_{D'}$, then the argument is the same as in the previous case.
			
			\item If $v_1\notin O_{D'}$ and thus $v_1\notin V_{D'}$, then $f(v)$ has a neighbour $v$ inside $D'$ and a neighbour $v_1$ outside $D'$, so $f(v)\in V_{D'}\cap (V_\past\cup V_\fut) = I_{D'}\cup O_{D'}$.
			Since $f(v)$ corrects an element of $V_{D'}$, it cannot be an input of $D'$; in other words $f(v)\in O_{D'}\setminus I_{D'}$.
			By Condition~\ref{it:overlap}, $v_1\in V_\fut$ and the argument proceeds as in the first case.
		\end{itemize}
	\end{itemize}
	Hence, $v\prec_{g'} w$ implies $w\in V_\fut$.
	
	Consider now the part $w\prec_{g'} w_1 \preceq_{g'} v$, where $w\notin V_{D'}$ and $w_1\in V_{D'}\setminus O_{D'}$.
	There are again two cases that induce this relationship, either $w_1 = g'(w) = g(w)$ or $w_1\in N_{G'}(g(w))$.
	Note that $g(w)$ must be a vertex in the original labelled open graph $\Gamma$, it cannot be one of the newly introduced vertices in $V_{D'}\setminus (I_{D'}\cup O_{D'})$.
	\begin{itemize}
		\item Suppose $w_1 = g(w)$, then $w_1$ is a vertex in the original labelled open graph $\Gamma$.
		Thus, $w\in V_\fut$ implies $w_1\in V_\fut$ by upwards closure.
		But $V_\fut\cap V_{D'} = V_\fut \cap V_D = O_D = O_{D'}$, so this means $w_1\in O_{D'}$, contradicting the assumption $w_1\in V_{D'}\setminus O_{D'}$.
		
		\item Suppose $w_1\in N_{G'}(g(w))$.
		If $w_1\in I_{D'}\cup O_{D'}$, the argument proceeds as in the previous case.		
		Hence instead we must have $w_1\in V_{D'}\setminus (I_{D'}\cup O_{D'})$, so that $w_1$ is not a vertex in $\Gamma$.
		Then $g(w)$ has a neighbour in $V_{D'}\setminus (V_\past \cup V_\fut)$, so $g(w)\in V_{D'}$.
		This means $g(w)\neq w$ because $w\notin V_{D'}$, and thus $g(w)\sim w$.
		Therefore $g(w)$ has neighbours both inside and outside of $V_{D'}$, which means $g(w)\in I_{D'}\cup O_{D'} = I_D\cup O_D$: in other words, $g(w)$ appears in the original labelled open graph and $g(w)\in V_D$.
		
		Analogous to the previous case, $w\in V_\fut$ implies $g(w)\in V_\fut$.
		By Condition~\ref{it:overlap}, we have $V_D\cap V_\fut \sse I_\fut$, yet we just showed $g(w)\in V_D\cap V_\fut$, and $g(w)\notin I_\fut$ since $w\in V_\fut$.
		This is a contradiction.
	\end{itemize}
	Both cases lead to contradictions, so the original assumption that there exists a cycle in the partial order must have been false.
	Hence there can be no cycles, meaning $\prec_{g'}$ is indeed a strict partial order and $g'$ is an extended causal flow.
\end{proof}

This lemma means we can check the flow-preservation conditions only within the small diagram we are rewriting without having to worry about the context.

\subsection{Translations of the rewrite rules}

We can now translate the equational theory of Figure~\ref{fig:qcsf-axioms} into the ZX-calculus and show that they follow from the flow-preserving rewrite rules of Figure~\ref{fig:flow-preserving}.
The equation \eqref{ctrl2pi-ugp} is more complicated than the others as it makes use of the syntactic sugar that is the multiply-controlled phase gate.
We will thus prove it individually first.

\begin{lemma}\label{lem:zx-ctrl2pi-ugp}
The ZX-translation of \eqref{ctrl2pi-ugp} follows from the flow-preserving rewrite rules.
\end{lemma}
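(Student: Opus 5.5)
Rule \eqref{ctrl2pi-ugp} replaces an $(n+1)$-qubit multi-controlled phase gate of phase $2\pi$ (with $n\geq 0$ controls) by the identity. The ZX-translation of such a gate, obtained by unfolding the inductive definition \eqref{mctrlPinducdef} into CNOTs and phase gates and translating each generator, is a long circuit-shaped diagram. My plan is to rewrite this diagram, using only flow-preserving steps, into the identity wires on $n+1$ qubits with a full spider nest of phase $2\pi$ attached (Definition~\ref{def:full-spider-nest}). Rule \eqref{SN} then removes the nest, and \eqref{IO} together with the circuit-side identities gives plain wires. By the proposition on causal subdiagrams, it suffices to carry this out on the isolated translated gate with inputs and outputs as in Definition~\ref{def:subdiagram-inputs-outputs}, so context can be ignored.

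\textbf{Step 1: reduce to phase gadgets.} The translated gate is a diagram with causal flow whose only non-Clifford parts are $Z$-phases of angle $\pm\frac{2\pi}{2^{k}}$ sitting on wires between CNOT-type connections. I would turn each such phase into a phase gadget whose neighbourhood is the parity it acts on. Concretely, each phase vertex is unfused from its wire by Proposition~\ref{prop:vertex-splitting}, applied along a flow edge, with \eqref{PF} moving the angle onto a new $YZ$-measured vertex. The CNOT structure is then cleared using the Clifford-complete rules: \eqref{LCD}, \eqref{PD}, the graph-like copy rule, and pivots (Lemma~\ref{lem:pivot-Pauli-flow}). After this, the diagram is identity wires decorated with gadgets. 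This is the standard fact that a diagonal CNOT+phase circuit equals a product of phase gadgets, realised here by flow-preserving steps only. The required flow conditions are checked locally with Corollary~\ref{cor:YZ-insertion}, using the ``only outputs'' case and the ``parallel gadget'' case.

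\textbf{Step 2: combine gadgets into a full nest.} Several gadgets may share a neighbourhood. I would merge these with Lemma~\ref{lem:phase-gadget-addition}, which preserves extended causal flow in both directions. Using the Fourier relation between multi-controlled phases and gadgets \cite[Proposition~3.6]{kuijpers_graphical_2019}, the resulting family should have exactly one gadget per non-empty subset $s\subseteq[n+1]$, with phase $(-1)^{\abs{s}-1}\frac{2\pi}{2^{n}}$. Any gadget whose angle becomes a multiple of $2\pi$ is deleted by \eqref{ZL}. For $n+1\leq 3$ the whole rule is already Clifford and is covered by the earlier lemma showing that small spider nests are derivable, so only $n+1\geq 4$ needs \eqref{SN}.

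\textbf{Main obstacle.} The hard part is Step 1 done uniformly in $n$, because \eqref{mctrlPinducdef} is recursive. I would argue by induction on the number of controls. I would show that the translation of a $\varphi$-controlled phase with $k$ controls rewrites, flow-preservingly, to the full nest with phase $\varphi$ on those $k+1$ qubits. The inductive step substitutes the two half-angle $(k-1)$-control gates and the conjugating CNOTs. Each gadget on a set $s$ produced by the hypothesis is then either kept, or its neighbourhood is toggled by the CNOT through a pivot with the CNOT's internal vertex, with parities tracked as row operations. Finally, the angles of equal-neighbourhood gadgets are added via Lemma~\ref{lem:phase-gadget-addition}. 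The delicate points are verifying that each conjugating CNOT acts on the gadget neighbourhoods as claimed, and confirming that extended causal flow survives each pivot, which I would check by hand on the small local diagram.
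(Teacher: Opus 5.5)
Your plan is essentially the paper's proof: an induction on the number of controls showing that the translated $\varphi$-controlled phase gate rewrites to the full spider nest of phase $\varphi$ (base case by \eqref{eq:IO}, \eqref{eq:ZL}, \eqref{eq:PF}; inductive step by pivot-and-delete, Lemma~\ref{lem:pivot-delete}, on the CNOT's target-wire vertices, then merging equal-neighbourhood gadgets with Lemma~\ref{lem:phase-gadget-addition}), after which one application of \eqref{eq:SN} finishes. One correction: \eqref{mctrlPinducdef} yields three $(k-1)$-controlled gates, not two, with phases $\frac{\varphi}{2}$, $\frac{\varphi}{2}$ and $-\frac{\varphi}{2}$ (the last conjugated by the CNOTs), and the angles only match the full nest because the gadgets avoiding the last two qubits sum as $\alpha_{0t}+\alpha_{0t}-\alpha_{0t}$.
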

\begin{proof}
We first show that the ZX-translation of the multiply-controlled phase gate on $n$ qubits with phase $\varphi$ is equivalent to the full spider nest on $n$ qubits with phase $\varphi$ (cf.\ Definition~\ref{def:full-spider-nest}).

The proof is by induction and the base case is the (uncontrolled) phase gate on one qubit:
\begin{align*}
	
	\quad\mapsto\quad \input{zx-circ-deriv/gadg1-0.tikz}
	\quad\overset{\eqref{eq:IO}}&{\rightleftarrows}\quad \input{zx-circ-deriv/gadg1-1.tikz}
	\quad\overset{\eqref{eq:ZL}}{\rightleftarrows}\quad \input{zx-circ-deriv/gadg1-2.tikz} \\
	\quad\overset{\eqref{eq:PF}}&{\rightleftarrows}\quad \input{zx-circ-deriv/gadg1-3.tikz}
	\quad\overset{\eqref{eq:IO}}{\rightleftarrows}\quad \input{zx-circ-deriv/gadg1-4.tikz}
\end{align*}
The $YZ$-insertion with a single $XY$-measured neighbour in the second step is always flow-preserving by Corollary~\ref{cor:YZ-insertion}.
The pivot in the third step does not change the structure of the diagram, so it, too, preserves the extended causal flow.

Now suppose that multi-controlled phase gates on $n$ qubits have the desired form and consider a multi-controlled phase gate on $n+1$ qubits.
Using the inductive definition of the multi-controlled phase gate, we have the following ZX-diagram, where $\alpha_s = (-1)^{\abs{s}-1}\frac{\varphi}{2^{n}}$ for all $s\in\{0,1\}^n\setminus\{00\ldots 0\}$.
\begin{align*}
	\tikzfigM{./shortcut/mctrlPphi-}
	\quad\overset{\eqref{mctrlPinducdef}}&{=}\quad \tikzfigM{./shortcut/mctrlPphidef-}
	\quad\mapsto\quad \input{zx-circ-deriv/mctrl-sn0.tikz}
\end{align*}
Note that to simplify the presentation, the qubits here are labelled from bottom to top: \ie in all three spider nests, the last $n-1$ indices refer to the top $n-1$ qubits.
Moreover, in the leftmost spider nest, the first bit in the bit string $s$ refers to the penultimate qubit, while in the other two families, the first bit refers to the last qubit as the penultimate qubit does not participate in those spider nests.

Pivot-and-deleting the second and third qubit on the bottom line yields
\[
\overset{\ref{lem:pivot-delete}}{\rightleftarrows}\quad  \input{zx-circ-deriv/mctrl-sn1.tikz}
\quad\overset{\eqref{eq:IO}}{\rightleftarrows}\quad  \input{zx-circ-deriv/mctrl-sn2.tikz}
\]
where now in the third spider nest, each gadget is either connected to the penultimate and last qubits (if the first bit of its label is 1) or it is connected to neither of the last two qubits (if the first bit of the label is 0).
It is straightforward to see that the pivot-and-delete operation preserves the extended causal flow.
Thus we have the following situation:
\begin{itemize}
	\item For any $t\in\{0,1\}^{n-1}\setminus\{00\ldots 0\}$, there are three phase gadgets connected to the corresponding subset of the top $n-1$ qubits (with the last bit corresponding to the topmost qubit).
	The phase gadgets from the first two spider nests have phases $\alpha_{0t}$ whereas the gadget from the third spider nest has phase $-\alpha_{0t}$.
	By Lemma~\ref{lem:phase-gadget-addition}, the three phase gadgets can be merged into a single gadget with phase $\alpha_{0t} = (-1)^{\abs{t}-1}\frac{\varphi}{2^{n}}$.
	
	\item For any $t\in\{0,1\}^{n-1}$, there is exactly one phase gadget connected to the corresponding subset of the first $n-1$ qubits plus the $n$-th qubit.
	This gadget comes from the first spider nest and has phase $\alpha_{1t} = (-1)^{\abs{t}+1-1}\frac{\varphi}{2^{n}}$.
	
	\item Similarly, for any $t\in\{0,1\}^{n-1}$, there is exactly one phase gadget connected to the corresponding subset of the first $n-1$ qubits plus the $(n+1)$-th qubit.
	This gadget comes from the second spider nest and has phase $\alpha_{1t} = (-1)^{\abs{t}+1-1}\frac{\varphi}{2^{n}}$.
	
	\item Finally, for any $t\in\{0,1\}^{n-1}$, there is exactly one phase gadget connected to the corresponding subset of the first $n-1$ qubits plus both the $n$-th qubit and $(n+1)$-th qubit.
	This gadget comes from the third spider nest and has phase $-\alpha_{1t} = (-1)^{\abs{t}+2-1}\frac{\varphi}{2^{n}}$.
\end{itemize}
By inspection, these gadgets and phases exactly combine to the full spider nest of phase $\varphi$ on $n+1$ qubits.
This proves the inductive step and thus the desired relationship between multi-controlled phase gates and full spider nests.

Rule \eqref{ctrl2pi-ugp} then follows directly from \eqref{eq:SN}:
\begin{align*}
	\input{./qc-axioms/mctrlP2pi-.tikz} \quad\mapsto\quad
	\input{zx-circ-deriv/mctrl0.tikz}
	\quad\overset{\eqref{eq:SN}}{\to}\quad \input{zx-circ-deriv/mctrl1.tikz}
	\quad\mapsfrom\quad \input{./qc-axioms/multi-id.tikz}
\end{align*}
where here now $\alpha_s = (-1)^{\abs{s}-1} \frac{2\pi}{2^{n}}$ for all $s\in\{0,1\}^{n+1}\setminus\{00\ldots 0\}$.
\end{proof}

\begin{proposition}\label{prop:circuits-in-ZX}
The flow-preserving rewrite rules imply the ZX-translations of all equations of Figure~\ref{fig:qcsf-axioms}, each taken in the trivial context (for now).
\end{proposition}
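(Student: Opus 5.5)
The plan is to go through the rules of Figure~\ref{fig:qcsf-axioms} one at a time. For each rule I translate both sides into MBQC-form ZX-diagrams with the circuit-to-ZX map. I then exhibit a derivation from the rules of Figure~\ref{fig:flow-preserving} and the derived rules of Section~\ref{s:derived-rules}, and check that every intermediate diagram still has extended causal flow.

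Since the rules are in the trivial context, the flow conditions only have to be checked on small diagrams. This can be done by hand, and the preceding proposition on causal subdiagrams will later lift each derivation to arbitrary contexts. Rule \eqref{ctrl2pi-ugp} is already handled by Lemma~\ref{lem:zx-ctrl2pi-ugp}. Because the circuit rules are oriented, it suffices to derive each rule in the stated direction(s).

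\textbf{Pure structure rules.} I first dispatch the rules that only change wiring.
\begin{itemize}
\item \eqref{HH-ugp}: two consecutive Hadamard edges form a path of degree-2 $XY$-vertices with phase $0$, so \eqref{eq:PD} together with \eqref{eq:IO} (the Hopf-like example) removes them.
\item \eqref{P0-ugp}: a phase of $0$ is just an extra spider along a flow edge, which is removed by \eqref{eq:IO} or by the corollary to Proposition~\ref{prop:vertex-splitting}.
\item \eqref{Paddition-prime}: two phase gates on one wire become two degree-1 phases. I convert each into a $YZ$-gadget as in the base case of Lemma~\ref{lem:zx-ctrl2pi-ugp}, using \eqref{eq:ZL} and \eqref{eq:PF}. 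The two gadgets share the same neighbourhood, so Lemma~\ref{lem:phase-gadget-addition} merges them.
\item \eqref{CZ-ugp} and \eqref{bigebre-ugp}: these are Clifford, phase-free identities between graph-like diagrams. Translated, they should reduce to vertex splitting/fusion (Proposition~\ref{prop:vertex-splitting}), \eqref{eq:PD} on $\pi/2$-free internal vertices, and possibly \eqref{eq:LCD}. Here I need to verify causality at each step by exhibiting an explicit correction function, much as in Example~\ref{ex:circuit-extraction}.
\item \eqref{CNOTPCNOT-ugp}: the left-hand side, as in Example~\ref{ex:translation}, is a phase on an internal vertex between two CNOTs. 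I expect to turn the phase into a $YZ$-gadget and then pivot-and-delete away the CNOT structure with \eqref{eq:PD}. This leaves a gadget attached to both wires, which is exactly the right-hand-side phase gadget form.
\end{itemize}

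\textbf{Ancilla rules.} For \eqref{initP-ancilla}, the initialisation followed by a phase becomes an input-free vertex with a phase connected via a Hadamard edge. After converting the phase into a gadget attached to a degree-1 $XY$ vertex, Lemma~\ref{lem:graph-like-copy} or \eqref{eq:LCD} absorbs it. Alternatively I can recognise the diagram as a \eqref{eq:ZL} deletion after a pivot, which preserves flow unconditionally in that direction.

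For \eqref{initCNOT-ancilla}, the ancilla vertex attached to the CNOT target is exactly the configuration removed by graph-like copy (Lemma~\ref{lem:graph-like-copy}) or by \eqref{eq:PD} followed by \eqref{eq:IO}. I will check that the left-to-right direction needs no insertion, so flow preservation is immediate.

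\textbf{Euler rule.} For \eqref{euler-ugp}, the translated left-hand side is a path of $XY$-vertices with phases $\alpha_1,\alpha_2$ joined by Hadamard edges, possibly padded with identity spiders. I would first use \eqref{eq:IO}/vertex splitting and \eqref{eq:PD} to bring it to two adjacent $XY$ vertices $a,b$ with $g(a)=\{b\}$. Then \eqref{eq:EU-ZX}, via Lemma~\ref{lem:Euler-rule}, applies directly, and \eqref{eq:PF} plus \eqref{eq:IO} match the right-hand side.

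I expect the main obstacle to be this step together with \eqref{bigebre-ugp} and \eqref{CNOTPCNOT-ugp}. The concern is that the exact shapes on the ZX side must line up with the circuit form, and every intermediate pivot or insertion must keep a \emph{causal}, not merely gflow, correction function. For insertions I would invoke Corollary~\ref{cor:YZ-insertion} with $C=\emptyset$; for pivots I would check causality manually on the small diagram.
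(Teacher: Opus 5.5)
Your architecture matches the paper's. You translate each rule and derive it from Figure~\ref{fig:flow-preserving} and the derived rules. You check extended causal flow only on the small diagram, and you quote Lemma~\ref{lem:zx-ctrl2pi-ugp} for \eqref{ctrl2pi-ugp}. But the proposition is nothing more than these case-by-case derivations, and several of yours are wrong or missing.

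\textbf{The case \eqref{CNOTPCNOT-ugp} aims at the wrong diagram.} By the factorisation in Example~\ref{ex:translation}, the phase sits on the \emph{control} wire and the right-hand side is $P(\varphi)\otimes I$: the phase commutes through and the two CNOTs cancel. A gadget attached to both wires would implement the parity phase $\mathrm{CNOT}\,(I\otimes P(\varphi))\,\mathrm{CNOT}$, which is a different map. No gadget is needed. In the translation, wire~1 is $i_1,a,o_1$, wire~2 is $i_2,b,c,d,o_2$, and the extra edges are $i_1b$ and $o_1d$.
\begin{enumerate}
\item Pivot-and-delete $b,c$; this moves both control edges onto $d$.
\item Unfuse $o_1$ with \eqref{eq:IO}.
\item Pivot-and-delete $a$ together with the old $o_1$; this toggles away the edge $i_1d$.
\item Contract wire~2 with \eqref{eq:IO}.
\end{enumerate}
This is the paper's sequence \eqref{eq:PD}, \eqref{eq:IO}, \eqref{eq:PD}, \eqref{eq:IO}, and causal flow is easy to check at each step.

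\textbf{The case \eqref{CZ-ugp} is not phase-free, and it is not derived.} Its `CZ' side is not a generator of the circuit language. It is the controlled-phase gate with phase $\pi$, which \eqref{mctrlPinducdef} unfolds into CNOTs and $P(\pm\frac{\pi}{2})$ gates. Pivot-and-delete cannot remove such vertices, so \eqref{eq:LCD} is essential here, not optional. The paper proves this rule right-to-left by one vertex splitting, four applications of \eqref{eq:LCD}, and then \eqref{eq:IO}. For this rule, and for \eqref{bigebre-ugp} (which is \eqref{eq:IO}, \eqref{eq:PD}, \eqref{eq:IO}), you only say that a derivation should exist, so both cases remain unproved.

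\textbf{Smaller inaccuracies.}
\begin{itemize}
\item For \eqref{HH-ugp}, the translation has a single internal vertex, so \eqref{eq:PD} cannot apply; \eqref{eq:IO} alone suffices.
\item For \eqref{Paddition-prime}, your two gadgets hang off different vertices, namely the two ends of the first phase gate's three-vertex path. They therefore do not share a neighbourhood until you fuse those vertices. The paper avoids this detour by using Proposition~\ref{prop:vertex-splitting} with a nonzero angle split.
\item For \eqref{euler-ugp}, the $\beta$-vertices produced by \eqref{eq:EU-ZX} are adjacent. The translated right-hand side instead has two phase-free vertices between consecutive $\beta$'s. Inserting these internal vertices needs vertex splitting, not \eqref{eq:PF} or \eqref{eq:IO}.
\item For \eqref{initCNOT-ancilla}, the control vertex is an output, so \eqref{eq:IO} must come \emph{before} graph-like copy or pivot-and-delete.
\end{itemize}
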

\begin{proof}
We consider the equations in order.
When applying rules that preserve extended causal flow only under certain conditions, it is straightforward to check that those conditions are satisfied within the subdiagram being rewritten.

For \eqref{initP-ancilla}, we have:
\[
\begin{tikzpicture}
	\begin{pgfonlayer}{nodelayer}
		\node [style=gate] (0) at (0, 0) {$P(\varphi)$};
		\node [style=ancilla] (1) at (-1.25, 0) {};
		\node [style=none] (2) at (1.25, 0) {};
	\end{pgfonlayer}
	\begin{pgfonlayer}{edgelayer}
		\draw (1) to (0);
		\draw (0) to (2.center);
	\end{pgfonlayer}
\end{tikzpicture}

\quad\mapsto\quad \input{zx-circ-deriv/AP0.tikz}
\quad\overset{\ref{lem:graph-like-copy}}{\to}\quad \begin{tikzpicture}
	\begin{pgfonlayer}{nodelayer}
		\node [style=Z dot] (1) at (1.25, 0) {};
		\node [style=Z dot] (2) at (2.25, 0) {};
		\node [style=none] (4) at (2.75, 0) {};
	\end{pgfonlayer}
	\begin{pgfonlayer}{edgelayer}
		\draw (2) to (4.center);
		\draw [style=hadamard edge] (1) to (2);
	\end{pgfonlayer}
\end{tikzpicture}

\quad\mapsfrom\quad \begin{tikzpicture}
	\begin{pgfonlayer}{nodelayer}
		\node [style=ancilla] (1) at (-0.5, 0) {};
		\node [style=none] (2) at (0.5, 0) {};
	\end{pgfonlayer}
	\begin{pgfonlayer}{edgelayer}
		\draw (1) to (2.center);
	\end{pgfonlayer}
\end{tikzpicture}

\]
For \eqref{HH-ugp}, we have:
\[
\begin{tikzpicture}
	\begin{pgfonlayer}{nodelayer}
		\node [style=none] (4) at (-1.25, 0) {};
		\node [style=none] (5) at (1.5, 0) {};
		\node [style=gate] (6) at (-0.5, 0) {$H$};
		\node [style=gate] (7) at (0.75, 0) {$H$};
	\end{pgfonlayer}
	\begin{pgfonlayer}{edgelayer}
		\draw (4.center) to (5.center);
	\end{pgfonlayer}
\end{tikzpicture}

\quad\mapsto\quad \input{zx-circ-deriv/H2.tikz}
\quad\overset{\eqref{eq:IO}}{\rightleftarrows}\quad 
\quad\mapsfrom\quad \begin{tikzpicture}
	\begin{pgfonlayer}{nodelayer}
		\node [style=none] (0) at (-0.55, 0) {};
		\node [style=none] (1) at (0.55, 0) {};
	\end{pgfonlayer}
	\begin{pgfonlayer}{edgelayer}
		\draw (0.center) to (1.center);
	\end{pgfonlayer}
\end{tikzpicture}

\]
The flow-preserving ZX-calculus derivation of \eqref{P0-ugp} actually looks identical to that of \eqref{HH-ugp}:
\[
\begin{tikzpicture}
	\begin{pgfonlayer}{nodelayer}
		\node [style=none] (4) at (-1.25, 0) {};
		\node [style=none] (5) at (1.25, 0) {};
		\node [style=gate] (6) at (0, 0) {$P(0)$};
	\end{pgfonlayer}
	\begin{pgfonlayer}{edgelayer}
		\draw (4.center) to (5.center);
	\end{pgfonlayer}
\end{tikzpicture}

\quad\mapsto\quad \input{zx-circ-deriv/H2.tikz}
\quad\overset{\eqref{eq:IO}}{\rightleftarrows}\quad 
\quad\mapsfrom\quad 
\]
For \eqref{initCNOT-ancilla}, using asterisks to mark the pair of vertices to which the pivot-and-delete rule of Lemma~\ref{lem:pivot-delete} is applied in the next step, we have:
\begin{align*}
	\input{./qcancilla-axioms/initCNOT.tikz}
	\quad\mapsto\quad \input{zx-circ-deriv/ACX0.tikz}
	\quad\overset{\eqref{eq:IO}}&{\to}\quad \input{zx-circ-deriv/ACX1.tikz} \\
	\quad\overset{\ref{lem:pivot-delete}}&{\to}\quad \input{zx-circ-deriv/ACX2.tikz}
	\quad\overset{\eqref{eq:IO}}{\to}\quad \input{zx-circ-deriv/ACX3.tikz}
	\quad&\mapsfrom\quad \input{./qcancilla-axioms/initIdId.tikz}
\end{align*}
For \eqref{bigebre-ugp}, again using asterisks to mark vertices that are about to be deleted after a pivot, we have:
\begin{align*}
	\input{./qc-axioms/CNOTNOTC.tikz}
	\quad\mapsto\quad \input{zx-circ-deriv/B0.tikz}
	\quad\overset{\eqref{eq:IO}}&{\to}\quad \input{zx-circ-deriv/B1.tikz} \\
	\quad\overset{\ref{lem:pivot-delete}}&{\to}\quad \input{zx-circ-deriv/B2.tikz} \\
	\quad\overset{\eqref{eq:IO}}&{\to}\quad \input{zx-circ-deriv/B3.tikz}
	&\mapsfrom\quad \input{./qc-axioms/SWAPCNOT.tikz}
\end{align*}
For \eqref{CNOTPCNOT-ugp}, we have:
\begin{align*}
	\input{./qc-axioms/CNOTPphiCNOT.tikz}
	\quad\mapsto\quad \input{zx-circ-deriv/C0.tikz} \quad
	\overset{\ref{lem:pivot-delete}}&{\rightleftarrows}\quad \input{zx-circ-deriv/C1.tikz} \\
	\overset{\eqref{eq:IO}}&{\rightleftarrows}\quad \input{zx-circ-deriv/C2.tikz} \\
	\overset{\ref{lem:pivot-delete}}&{\rightleftarrows}\quad \input{zx-circ-deriv/C3.tikz} \\
	\overset{\eqref{eq:IO}}&{\rightleftarrows}\quad \input{zx-circ-deriv/C4.tikz}
	&\mapsfrom\quad \input{./qc-axioms/PphiId.tikz}
\end{align*}	
For \eqref{CZ-ugp}, we will prove the result in the opposite direction for simplicity.
Recall that whenever a vertex can be deleted in a flow-preserving way, the reverse vertex insertion is also flow-preserving.
We now also use asterisks to mark single vertices to which a local-complementation-and-delete operation as in Lemma~\ref{lem:lc-delete} is about to be applied.
\begin{align*}
	\input{./qc-axioms/CZ.tikz} \quad
	&\mapsto\quad \input{zx-circ-deriv/CZ0.tikz} \\
	\overset{\ref{prop:vertex-splitting}}&{\rightleftarrows}\quad \input{zx-circ-deriv/CZ1.tikz} \\
	\overset{\ref{lem:lc-delete}}&{\rightleftarrows}\quad \input{zx-circ-deriv/CZ2.tikz} \\
	\overset{\ref{lem:lc-delete}}&{\rightleftarrows}\quad \input{zx-circ-deriv/CZ3.tikz} \\
	\overset{\ref{lem:lc-delete}}&{\rightleftarrows}\quad \input{zx-circ-deriv/CZ4.tikz} \\
	\overset{\ref{lem:lc-delete}}&{\rightleftarrows}\quad \input{zx-circ-deriv/CZ5.tikz} \\
	\overset{\eqref{eq:IO}}&{\rightleftarrows}\quad \input{zx-circ-deriv/CZ6.tikz}
	&\mapsfrom\quad \input{./qc-axioms/H2CNOTH2.tikz}
\end{align*}
Equation \eqref{ctrl2pi-ugp} is proved in Lemma~\ref{lem:zx-ctrl2pi-ugp}.
Equation \eqref{Paddition-prime} is a special case of `vertex splitting':
\begin{align*}
	\begin{tikzpicture}
	\begin{pgfonlayer}{nodelayer}
		\node [style=none] (0) at (-3, 0) {};
		\node [style=none] (1) at (2.5, 0) {};
		\node [style=gate] (4) at (-1.5, 0) {$P(\varphi_1)$};
		\node [style=gate] (5) at (1, 0) {$P(\varphi_2)$};
	\end{pgfonlayer}
	\begin{pgfonlayer}{edgelayer}
		\draw (0.center) to (1.center);
	\end{pgfonlayer}
\end{tikzpicture}
 \quad
	&\mapsto\quad \input{zx-circ-deriv/Pplus0.tikz} \\
	\quad\overset{\ref{prop:vertex-splitting}}&{\rightleftarrows}\quad \input{zx-circ-deriv/Pplus1.tikz}
	\quad&\mapsfrom\quad \begin{tikzpicture}
	\begin{pgfonlayer}{nodelayer}
		\node [style=none] (0) at (-2.25, 0) {};
		\node [style=none] (1) at (2.25, 0) {};
		\node [style=gate] (4) at (0, 0) {$P(\varphi_1{+}\varphi_2)$};
	\end{pgfonlayer}
	\begin{pgfonlayer}{edgelayer}
		\draw (0.center) to (1.center);
	\end{pgfonlayer}
\end{tikzpicture}

\end{align*}
Finally, for \eqref{euler-ugp}:
\begin{align*}
	\begin{tikzpicture}
	\begin{pgfonlayer}{nodelayer}
		\node [style=none] (4) at (-3.5, 0) {};
		\node [style=none] (5) at (3.5, 0) {};
		\node [style=gate] (14) at (-2, 0) {$P(\alpha_1)$};
		\node [style=gate] (15) at (0, 0) {$H$};
		\node [style=gate] (16) at (2, 0) {$P(\alpha_2)$};
	\end{pgfonlayer}
	\begin{pgfonlayer}{edgelayer}
		\draw (4.center) to (5.center);
	\end{pgfonlayer}
\end{tikzpicture}

	&\mapsto\quad \input{zx-circ-deriv/Euler0.tikz} \\
	\overset{\ref{prop:vertex-splitting}}&{\to}\quad \input{zx-circ-deriv/Euler1.tikz} \\
	\overset{\eqref{eq:EU-ZX}}&{\to}\quad \input{zx-circ-deriv/Euler2.tikz} \\
	\overset{\ref{prop:vertex-splitting}}&{\to}\quad \input{zx-circ-deriv/Euler3.tikz} \\
	&\mapsfrom\quad \begin{tikzpicture}
	\begin{pgfonlayer}{nodelayer}
		\node [style=none] (4) at (-6.75, 0) {};
		\node [style=none] (5) at (6.75, 0) {};
		\node [style=gate] (15) at (-4, 0) {$P(\beta_1)$};
		\node [style=gate] (17) at (0, 0) {$P(\beta_2)$};
		\node [style=gate] (18) at (4, 0) {$P(\beta_3)$};
		\node [style=gate] (19) at (-2, 0) {$H$};
		\node [style=gate] (20) at (2, 0) {$H$};
		\node [style=gate] (21) at (-6, 0) {$H$};
		\node [style=gate] (23) at (6, 0) {$H$};
	\end{pgfonlayer}
	\begin{pgfonlayer}{edgelayer}
		\draw (4.center) to (5.center);
	\end{pgfonlayer}
\end{tikzpicture}

\end{align*}
This completes the proof of every equation in Figure~\ref{fig:qcsf-axioms}.
\end{proof}

We can now prove our main result.

\begin{theorem}
The rules of Figure~\ref{fig:flow-preserving} are complete for flow-preserving rewriting.
\end{theorem}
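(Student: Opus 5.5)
To prove the theorem, take two MBQC-form ZX-diagrams $D_1,D_2$ with Pauli flow (equivalently, flow of any of the kinds considered), the same inputs and outputs, and $\interp{D_1}=\interp{D_2}$ up to a non-zero scalar. The goal is to connect them by a chain of applications of the rules of Figure~\ref{fig:flow-preserving}, each step flow-preserving. The strategy is to funnel both diagrams into the circuit world and invoke Theorem~\ref{proof:THMcompleteQC}.

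\textbf{Step 1: reduce to extended causal flow.} Apply the algorithm of Section~\ref{s:circuit-extraction} to each $D_i$. First use Lemma~\ref{lem:to-phase-gadget}, then the extraction loop built from Lemmas~\ref{lem:insert-cnot} and~\ref{lem:pivot-YZ}. This produces $D_i'$ with extended causal flow. Every step is one of the rules of Figure~\ref{fig:flow-preserving} or a derived rule, and each derived rule (pivoting, \eqref{eq:LCD}, \eqref{eq:PD}, $YZ$-insertion under Corollary~\ref{cor:YZ-insertion}) is reversible in a flow-preserving way. Since $YZ$-insertions are reversed by unconditional $Z$-like deletions, we can later run the path from $D_2$ to $D_2'$ backwards. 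It therefore suffices to rewrite $D_1'$ into $D_2'$.

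\textbf{Step 2: turn causal diagrams into circuits.} For a diagram with extended causal flow, construct a $\propQC$-circuit $C_i$ whose ZX-translation reaches $D_i'$ using only flow-preserving rules. Following \cite[Proposition~B.6]{duncan_graph-theoretic_2020}, the flow paths become qubit wires. An input vertex gives an input wire; a path starting at a non-input gives an initialisation $\ginit$. Edges between paths become CZs, written via \eqref{CZ-ugp}-style translations, and $XY$-phases become $P$ gates. Each $YZ$-vertex is a phase gadget, rewritten into a CNOT ladder plus a phase gate. That rewrite proceeds by the unfolding used in Lemma~\ref{lem:zx-ctrl2pi-ugp}: pivot-and-insert, \eqref{eq:PF}, and vertex splitting, Proposition~\ref{prop:vertex-splitting}.

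Where the translation yields extra degree-2 vertices or Hadamard chains, \eqref{eq:IO} and the corollary to Proposition~\ref{prop:vertex-splitting} (unfusion along flow edges) absorb them. Each such step preserves extended causal flow by the same arguments as in those lemmas. Then $\interp{C_1}=\interp{D_1}\propto\interp{D_2}=\interp{C_2}$. Both are isometries, so the scalar is a global phase.

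\textbf{Step 3: transport circuit completeness.} By Theorem~\ref{proof:THMcompleteQC}, $C_1\to^* C_2$ using the rules of Figure~\ref{fig:qcsf-axioms} (in both directions, as shown in its proof) and deformations. Each step rewrites a subcircuit in a context $C_\fut\circ(C\otimes \mathrm{id})\circ C_\past$. By the observation after Definition~\ref{def:causal-subdiagram}, this context translates to a causal subdiagram. Proposition~\ref{prop:circuits-in-ZX} derives the ZX-translation of each rule in the trivial context via flow-preserving rules. The replacement proposition following Definition~\ref{def:replacement} then shows that performing that derivation inside the larger diagram keeps extended causal flow at every intermediate stage. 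This works because each intermediate diagram of the local derivation has extended causal flow with the same interface $I_D,O_D$.

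Deformations (swaps, wire sliding) translate to equal or graph-isomorphic diagrams, after \eqref{eq:IO} cleanup. Reverse directions of the circuit rules are handled by the derivations already given in the proof of Theorem~\ref{proof:THMcompleteQC}. Chaining everything gives $D_1\to D_1'\leftrightarrow C_1\to^*C_2\leftrightarrow D_2'\to D_2$, all by flow-preserving steps.

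\textbf{Main obstacle.} The delicate points are Step 2 and the context lemma. Step 2 must show that an arbitrary extended-causal diagram, with phase gadgets attached to arbitrary sets of wires at arbitrary points of the order, can be flow-preservingly brought to a literal circuit translation. For this I would first use the past/future decomposition to isolate each gadget at a layer where all its neighbours are simultaneous, using vertex splitting along flow edges. I would then run the induction of Lemma~\ref{lem:zx-ctrl2pi-ugp} in reverse. A secondary subtlety is that intermediate diagrams in local derivations (e.g.\ those passing through $XZ$-labels in \eqref{eq:LCD}) only have gflow. Here I would argue that the replacement proposition is applied only between causal endpoints of each derived rule, and that the global gflow is preserved in between since those rules preserve gflow unconditionally.
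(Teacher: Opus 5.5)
Your proposal is correct and follows essentially the same route as the paper. Both arguments extract the two diagrams to extended causal flow by flow-preserving, step-by-step invertible rewrites, identify the results with circuits, and transport Theorem~\ref{proof:THMcompleteQC} via Proposition~\ref{prop:circuits-in-ZX} and the causal-subdiagram replacement proposition; your Step~2 and your remark on gflow-only intermediate diagrams spell out points the paper merely asserts. One small correction to that remark: the insertion steps inside \eqref{eq:LCD}, \eqref{eq:PD} or Proposition~\ref{prop:vertex-splitting} are not unconditionally gflow-preserving, so those intermediate global diagrams should instead be justified by running the unconditional reverse steps (deletions, pivots, three further applications of \eqref{eq:LC}) backwards from the causal endpoint of the derived rule.
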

\begin{proof}
Suppose $D_1$ and $D_2$ are two MBQC-form ZX-diagrams which both have Pauli flow and represent the same linear embedding.
Using the procedure of Section~\ref{s:circuit-extraction}, we can use the rules of Figure~\ref{fig:flow-preserving} to transform $D_1$ in a flow-preserving way into a diagram $D_1'$ which has extended causal flow.
Similarly, we can transform $D_2$ into a diagram $D_2'$ which also has extended causal flow.

Any ZX-diagram with extended causal flow corresponds to a circuit over the generators from the beginning of Section~\ref{s:circuits} together with multi-controlled phase gates to express any phase gadgets.
By Theorem~\ref{proof:THMcompleteQC}, the rules of Figure~\ref{fig:qcsf-axioms} (plus the definition of multi-controlled phase gates in \eqref{mctrlPinducdef}) are complete for such circuits.
Proposition~\ref{prop:circuits-in-ZX} shows that the ZX-translations of each of those rules can be derived from the rules of Figure~\ref{fig:flow-preserving}.
The contexts in which circuit rewrite rules are valid to apply translate in ZX to the condition of Definition~\ref{def:causal-subdiagram}.
Thus any sequence of circuit rewrite steps translates to a sequence of valid flow-preserving ZX-calculus rewrites.
In other words, we can use the rule of Figure~\ref{fig:flow-preserving} to transform $D_1'$ into $D_2'$. 
\end{proof}

\phantomsection
\addcontentsline{toc}{section}{References}
\bibliography{MBQC}

\end{document}